\renewcommand*\l@author[2]{}
\renewcommand*\l@title[2]{}
\newcommand{\cmd}[1]{\operatorname{#1}}
\newcommand{\varid}{vid}
\newcommand{\defterm}[1]{\emph{#1}}
\newcommand{\Tpre}{T_{\operatorname{pre}}}
\newcommand{\Tonl}{T_{\operatorname{onl}}}
\newcommand{\Ttot}{T_{\operatorname{tot}}}
\newcommand{\shortOnly}[1]{\ifthenelse{\not \boolean{LongVersion}}{#1}{}}
\newcommand{\longOnly}[1]{\ifthenelse{\boolean{LongVersion}}{#1}{}}
\newcommand{\Section}[1]{\shortOnly{\vspace{-1ex}}\section{#1}}
\newcommand{\Subsection}[1]{\shortOnly{\vspace{-1ex}}\subsection{#1}}
\newcommand{\Subsubsection}[1]{\shortOnly{\vspace{-2ex}}\subsubsection{#1}}
\newcommand{\Steal}{\shortOnly{\vspace{-2ex}}}
\newcommand{\repa}[1]{[ #1 ]_{\ant{A}}}
\newcommand{\repb}[1]{[ #1 ]_{\ant{B}}}
\newcommand{\repab}[2]{[ #1 | #2 ]}
\newcommand{\reps}[1]{[ #1 ]}
\newcommand{\putbox}{\hspace*{\fill} $\Box$}
\newcommand{\M}{{\mathcal{M}}}
\newcommand{\FMPC}{\ensuremath{\mathcal{F}_\textsc{2PC}}\xspace}
\newcommand{\DEAL}{\ensuremath{\mathcal{F}_\textsc{Deal}}\xspace}
\newcommand{\Ali}{{\ant{A}}\xspace}
\newcommand{\Aliadv}{$\Ali^*$\xspace}
\newcommand{\Bob}{{\ant{B}}\xspace}
\newcommand{\Bobadv}{$\Bob^*$\xspace}
\newcommand{\Sim}{{\ant{Sim}}\xspace}
\newcommand{\B}[1]{\textbf{#1}}
\newcommand{\LABIT}{\ensuremath{\operatorname{LaBit}}\xspace}
\newcommand{\WABIT}{\ensuremath{\operatorname{WaBit}}\xspace}
\newcommand{\ABIT}{\ensuremath{\operatorname{aBit}}\xspace}
\newcommand{\AAND}{\ensuremath{\operatorname{aAND}}\xspace}
\newcommand{\LAAND}{\ensuremath{\operatorname{LaAND}}\xspace}
\newcommand{\OT}{\ensuremath{\operatorname{OT}}\xspace}
\newcommand{\AOT}{\ensuremath{\operatorname{aOT}}\xspace}
\newcommand{\LAOT}{\ensuremath{\operatorname{LaOT}}\xspace}
\newcommand{\EQ}{\ensuremath{\operatorname{EQ}}\xspace}
\newcommand{\ABITs}{\ensuremath{\operatorname{aBit}}s\xspace}
\newcommand{\AANDs}{\ensuremath{\operatorname{aAND}}s\xspace}
\newcommand{\LAANDs}{\ensuremath{\operatorname{LaAND}}s\xspace}
\newcommand{\AOTs}{\ensuremath{\operatorname{aOT}}s\xspace}
\newcommand{\LAOTs}{\ensuremath{\operatorname{LaOT}}s\xspace}
\newcommand{\LL}{{\mathcal{L}}}
\renewcommand{\SS}{{\mathcal{S}}}
\newcommand{\col}{\operatorname{col}}
\newcommand{\lemmlab}[1]{\label{lemm:#1}}
\newcommand{\lemmref}[1]{Lemma~\ref{lemm:#1}}
\newcommand{\thmlab}[1]{\label{thm:#1}}
\newcommand{\thmref}[1]{Thm.~\ref{thm:#1}}
\newcommand{\seclab}[1]{\label{sec:#1}}
\newcommand{\secref}[1]{Sect.~\ref{sec:#1}}
\newcommand{\eqlab}[1]{\label{eq:#1}}
\renewcommand{\eqref}[1]{(\ref{eq:#1})}
\newcommand{\figlab}[1]{\label{fig:#1}}
\newcommand{\figref}[1]{Fig.~\ref{fig:#1}}
\newcommand{\prob}[1]{\operatorname{Pr}\left[{#1}\right]}
\newcommand{\E}[1]{\operatorname{E}\left[{#1}\right]}
\newcommand{\price}{\operatorname{price}}
\newcommand{\success}{\operatorname{success}}
\newcommand{\img}{\operatorname{img}}
\newcommand{\leak}{\operatorname{leak}}
\newcommand{\prg}{\operatorname{prg}}
\newcommand{\poly}{\operatorname{poly}}
\newcommand{\oh}{\smash{\frac12}}
\newcommand{\func}[1]{\textsc{#1}}
\newenvironment{boxfig}[2]{% {#1}{#2} = {Caption}{label}
     \begin{figure}[ht!]
     \newcommand{\FigCaption}{#1}
     \newcommand{\FigLabel}{#2}
       \vspace{-0.25cm}
     \begin{center}
       \begin{small}
         \begin{tabular}{@{}|@{~~}l@{~~}|@{}}
           \hline
           \rule[-1.5ex]{0pt}{1ex}\begin{minipage}[b]{.96\linewidth}
             \vspace{1ex}
             \smallskip
             }{%
           \end{minipage}\\
           \hline
         \end{tabular}
       \end{small}
       \vspace{-0.25cm}
       \caption{\FigCaption}
       \figlab{\FigLabel}
     \end{center}
       \longOnly{\vspace{-0.5cm}}
       \shortOnly{\vspace{-0.5cm}}
   \end{figure}
}
\newcommand{\ant}[1]{\ensuremath{\mathsf{#1}}}
\newcommand{\Def}{\stackrel{\text{\normalfont\tiny{def}}}{=}} %% Definition
\newcommand{\eqq}{\stackrel{\text{\normalfont\tiny{?}}}{=}} %% Definition
\newcommand{\inR}{\in_{\text{\normalfont\tiny{R}}}}
\newcommand{\applab}[1]{\label{app:#1}}
\newcommand{\appref}[1]{App.~\ref{app:#1}}
\newcommand{\zo}{\{0,1\}}
\title{A New Approach to Practical  \\ Active-Secure Two-Party Computation} 
\author{Jesper Buus Nielsen\inst{1}, Peter Sebastian Nordholt\inst{1},
  Claudio Orlandi\inst{2}, Sai Sheshank Burra\inst{3}}
\institute{Aarhus University \and Bar-Ilan University \and Indian
  Institute of Technology Guwahati}
\begin{document}

\longOnly{\maketitle}

\shortOnly{
\begin{center}
\begin{Large}
A New Approach to Practical Active-Secure Two-Party Computation
\end{Large}
\end{center}
}

\begin{abstract}
  We propose a new approach to practical two-party computation secure
  against an active adversary. All prior practical protocols were
  based on Yao's garbled circuits. We use an OT-based approach and get
  efficiency via OT extension in the random oracle model. To get a
  practical protocol we introduce a number of novel techniques for
  relating the outputs and inputs of OTs in a larger construction.
  
  We also report on an implementation of this approach, that shows
  that our protocol is more efficient than any previous one: For big
  enough circuits, we can evaluate more than $20000$ Boolean gates per
  second. As an example, evaluating one oblivious AES encryption
  ($\sim 34000$ gates) takes $64$ seconds, but when repeating the task
  $27$ times it only takes less than $3$ seconds per instance.  
\end{abstract}

\vspace{-2ex}

\renewcommand{\topfraction}{0.9}
\renewcommand{\bottomfraction}{0.9}
\renewcommand{\textfraction}{0.1}

\newcommand{\cnote}[1]{{\sf (Claudio's Note:} {\sl{#1}} {\sf EON)}}

\pagestyle{plain}
\sloppy
\pagenumbering{arabic}

\longOnly{
\clearpage
\tableofcontents
\clearpage
}

\Section{Introduction}

Secure two-party computation (2PC), introduced by Yao~\cite{DBLP:conf/focs/Yao82b}, allows two parties to jointly
compute any function of their inputs in such a way that 1) the output of
the computation is correct and 2) the inputs are kept private. Yao's
protocol is secure only if the participants are \emph{semi-honest}
(they follow the protocol but try to learn more than they should by
looking at their transcript of the protocol). A more realistic
security definition considers \emph{malicious adversaries}, that can
arbitrarily deviate from the protocol.

%In the theory of 2PC 
A large number of approaches to 2PC have been
proposed, falling into three main types, those based on Yao's garbled
circuit techniques, those based on some form of homomorphic encryption
and those based on oblivious transfer. Recently a number of efforts to
implement 2PC in practice have been reported on; In sharp contrast to
the theory, almost all of these are based on one type of 2PC, namely Yao's
garbled circuit technique. One of the main advantages of Yao's garbled circuits is that it is 
primarily based on symmetric primitives: It uses one OT per input 
bit, but then uses only a few calls to, e.g., a hash function per 
gate in the circuit to be evaluated. The other approaches are 
heavy on public-key primitives which are typically orders of 
magnitude slower than symmetric primitives.

However, in 2003 Ishai \emph{et al.} introduced the idea of extending
OTs \emph{efficiently}~\cite{DBLP:conf/crypto/IshaiKNP03}---their 
protocol allows to turn $\kappa$ seed OTs based on public-key crypto
into any polynomial $\ell=\poly(\kappa)$ number of OTs using only $O(\ell)$ invocations of a cryptographic hash function.
For big enough $\ell$ the cost of the $\kappa$ seed OTs is amortized away
and OT extension essentially turns OT into a symmetric primitive in terms 
of its computational complexity. Since the basic approach
of basing 2PC on OT in~\cite{DBLP:conf/stoc/GoldreichMW87} is
efficient in terms of consumption of OTs and communication, this gives
the hope that OT-based 2PC too could be practical. This paper reports
on the first implementation made to investigate the practicality of
OT-based 2PC.

%\footnote{Independently from   us~\cite{cryptoeprint:2011:257} also reported on an implementation   of a 2PC protocol based on OT-extension, however their protocol is   only secure against a semi-honest adversary.}.

%The approach to extending OTs efficiently in
%\cite{DBLP:conf/crypto/IshaiKNP03} is only passive-secure: later attempts %in~\cite{DBLP:conf/tcc/HarnikIKN08,cryptoeprint:2007:215,DBLP:conf/crypto/IshaiPS08} 
%are asymptotically very efficient but they do not seem efficient in practice. Also,
%only the passive-secure 2PC protocol in~\cite{DBLP:conf/stoc/GoldreichMW87} is
%practical---the active-secure protocol consumes $O(\kappa)$ OTs per
%gate, where $\kappa$ is the security parameter. 

Our starting point is the efficient passive-secure OT extension
protocol of~\cite{DBLP:conf/crypto/IshaiKNP03} and passive-secure 2PC
of~\cite{DBLP:conf/stoc/GoldreichMW87}. In order to get active
security and preserve the high practical efficiency of these protocols
we chose to develop substantially different techniques,
differentiating from other works that were only interested in
\emph{asymptotic}
efficiency~\cite{DBLP:conf/tcc/HarnikIKN08,cryptoeprint:2007:215,DBLP:conf/crypto/IshaiPS08}. We report a number of contributions to the theory and practice of 2PC:
\begin{enumerate}
\item

  We introduce a new technical idea to the area of extending OTs
  efficiently, which allows to dramatically improve the practical 
  efficiency of active-secure OT extension.  Our protocol has the same
  asymptotic complexity as the previously best protocol
  in~\cite{DBLP:conf/tcc/HarnikIKN08}, but it is only a small factor 
  slower than the passive-secure protocol
  in~\cite{DBLP:conf/crypto/IshaiKNP03}.

\item

  We give the first implementation of the idea of extending OTs
  efficiently. The protocol is active-secure and generates
  $500\mathord{,}000$ OTs per second, showing that implementations
  needing a large number of OTs can be practical.

\item

  We introduce new technical ideas which allow to relate the outputs
  and inputs of OTs in a larger construction, via the use of information
  theoretic tags. This can be seen as a new flavor of committed OT that 
  only requires symmetric cryptography. In combination with  
  our first contribution, our protocol shows how to efficiently extend 
  committed OT. Our protocols assume
  the existence of OT and are secure in the random oracle model.

\item

  We give the first implementation of practical 2PC not based on Yao's
  garbled circuit technique. Introducing a new practical
  technique is a significant contribution to the field in itself. In
  addition, our protocol shows favorable timings compared to the
  Yao-based implementations.

\end{enumerate}

\vspace{-0.3cm}
\Subsection{Comparison with Related Work}

The question on the \emph{asymptotical} computational overhead of
cryptography was (essentially) settled
in~\cite{DBLP:conf/stoc/IshaiKOS08}. On the other hand, there is
growing interest in understanding the \emph{practical} overhead of
secure computation, and several works have perfected and implemented
protocols based on Yao garbled
circuits~\cite{DBLP:conf/uss/MalkhiNPS04,DBLP:conf/ccs/Ben-DavidNP08,DBLP:conf/scn/LindellPS08,DBLP:conf/icalp/KolesnikovS08,DBLP:conf/asiacrypt/PinkasSSW09,Henecka:2010:TTA:1866307.1866358,cryptoeprint:2010:584,cryptoeprint:2010:284,DBLP:conf/eurocrypt/ShelatS11,DBLP:conf/uss/HuangEKM11},
protocols based on homomorphic
encryption~\cite{DBLP:conf/tcc/IshaiPS09,DBLP:conf/crypto/DamgardO10,DBLP:conf/acns/JakobsenMN10,DBLP:conf/eurocrypt/BendlinDOZ11}
and protocols based on
OT~\cite{DBLP:conf/crypto/IshaiPS08,DBLP:conf/crypto/LindellOP11,cryptoeprint:2011:257}.

\begin{wraptable}[11]{l}{8.5cm}
\vspace{-0.7cm}
\begin{center}
\begin{small}

\begin{tabular}{|c|l|c|c|c|c|}
\hline
     &  & Security  & Model & Rounds & Time  \\ \hline
 (a) & DK~\cite{DBLP:conf/fc/DamgardK10} (3 parties) & Passive & SM & $O(d)$ & $1.5$s \\ \hline
 (b) & DK~\cite{DBLP:conf/fc/DamgardK10} (4 parties) & Active & SM & $O(d)$ & $4.5$s \\ \hline
 (c) & sS~\cite{DBLP:conf/eurocrypt/ShelatS11} & Active & SM & $O(1)$ & $192$s \\ \hline
 (d) & HEKM~\cite{DBLP:conf/uss/HuangEKM11} & Passive & ROM & $O(1)$ & $0.2$s \\ \hline
 (e) & IPS-LOP~\cite{DBLP:conf/crypto/IshaiPS08,DBLP:conf/crypto/LindellOP11} & Active & SM & $O(d)$ & $79$s \\ \hline
 (f) & This (single) & Active & ROM & $O(d)$ & $64$s  \\ \hline
 (g) & This (27, amortized) & Active & ROM & $O(d)$ & $2.5$s \\ \hline
\end{tabular}
\end{small}
\caption{Brief comparison with other implementations.}
\label{comparison}
\end{center}
\vspace{-1cm}
\end{wraptable}

A brief comparison of the time needed for oblivious AES
evaluation for the best known implementations are shown in
Table~\ref{comparison}.\footnote{Oblivious AES has become one of the most common
circuits to use for benchmarking generic MPC protocols, due to its reasonable 
size (about 30000 gates) and its relevance as a building block for constructing specific purpose protocols, like private set intersection~\cite{DBLP:conf/tcc/FreedmanIPR05}.}
The protocols in rows (a-b) are for 3 and 4
parties respectively, and are secure against at most one corrupted
party. One of the goals of the work in row (c) is how to efficiently
support different outputs for different parties: in our OT based
protocol this feature comes for free. The time in row (e) is an
estimate made by \cite{DBLP:conf/crypto/LindellOP11} on the running
time of their optimized version of the OT-based protocol
in~\cite{DBLP:conf/crypto/IshaiPS08}. The column \emph{Round}
indicates the round complexity of the protocols, $d$ being the depth
of the circuit while the column \emph{Model} indicates whether the
protocol was proven secure in the standard model (SM) or the random
oracle model (ROM).

The significance of this work is shown in row (g). The reason for the
dramatic drop between row (f) and (g) is that in (f), when we only
encrypt one block, our implementation preprocesses for many more gates
than is needed, for ease of implementation. In (g) we encrypt $27$
blocks, which is the minimum value which eats to up all the
preprocessed values. We consider these results positive: our
implementation is as fast or faster than any other 2PC protocol, even
when encrypting only one block. And more importantly, when running at
full capacity, the price to pay for active security is about a factor
$10$ against the passive-secure protocol in (d). We stress that this
is only a limited comparison, as the different experiments were run on
different hardware and network setups: when several options were
available, we selected the best time reported by the other
implementations. See~\secref{exp} for more timings and details of our
implementation.

%The best active-secure Yao-based protocol to date is
%arguably~\cite{cryptoeprint:2010:284}. With respect
%to~\cite{cryptoeprint:2010:284}, our protocol is asymptotically more
%efficient (both in the amount of public and symmetric key
%operations). On the other hand we cannot give a proper comparison for
%particular circuit sizes, as no implementation
%of~\cite{cryptoeprint:2010:284} is publicly available.

\Subsection{Overview of Our Approach}

We start from a classic textbook protocol for two-party
computation~\cite[Sec.~7.3]{Goldreich}. In this protocol, Alice holds
secret shares $x_A,y_A$ and Bob holds secret shares $x_B,y_B$ of some
bits $x,y$ s.t.~$x_A \oplus x_B = x$ and $y_A \oplus y_B = y$. Alice
and Bob want to compute secret shares of $z=g(x,y)$ where $g$ is some
Boolean gate, for instance the AND gate: Alice and Bob need to compute
a random sharing $z_A,z_B$ of $z=x y =x_A y_A \oplus x_A y_B \oplus
x_By_A \oplus x_B y_B$. The parties can compute the AND of their local
shares ($x_A y_A$ and $x_By_B$), while they can use oblivious transfer
(OT) to compute the cross products ($x_A y_B$ and $x_By_A$). Now the
parties can iterate for the next layer of the circuit, up to the end
where they will reconstruct the output values by revealing their
shares.

This protocol is secure against a semi-honest adversary: assuming the
OT protocol to be secure, Alice and Bob learn nothing about the
intermediate values of the computation. It is easy to see that if a
large circuit is evaluated, then the protocol is not secure against a
malicious adversary: any of the two parties could replace values on any
of the internal wires, leading to a possibly incorrect output and/or
leakage of information.

% \begin{wrapfigure}{r}{7cm}
% \begin{center}
% \begin{tikzpicture}[scale=0.60]
% % Draw the nodes
% \node (twopc) at (0,0) {\FMPC};
% \node (deal) at (0, -2) {\DEAL};
% \node (aot) at (0, -4) {\AOT};
% \node[anchor=west] (aand) at (1, -4) {\AAND};
% \node (abit) at (0, -6) {\ABIT};
% \node (wabit) at (0, -8) {\WABIT};
% \node (labit) at (0, -10) {\LABIT};
% \node[anchor=west] (ot) at (0.5, -11.5) {\OT};
% \node[anchor=east] (eq) at (-0.5, -11.5) {\EQ};

% % Draw the arrows
% \draw[->,>=stealth'] (deal) -- (twopc);
% \draw[->,>=stealth'] (aot) -- (deal);
% \draw[->,>=stealth'] (aand) -- (deal);
% \draw[->,>=stealth'] (abit) -- (aot);
% \draw[->,>=stealth'] (abit) -- (aand);
% \draw[->,>=stealth'] (wabit) -- (abit);
% \draw[->,>=stealth'] (labit) -- (wabit);
% \draw[->,>=stealth'] (ot) -- (labit);
% \draw[->,>=stealth'] (eq) -- (labit);
% \draw[->,>=stealth'] (abit) to [bend left=45] (deal);

% % Draw the braces
% \draw[decorate, decoration={brace}] (3,0) --
% (3,-2) node[right, midway] {\small \secref{twopcfromaot}};
% \draw[decorate, decoration={brace}] (3, -4) --
% (3, -5.95) node[right, midway] {\small \secref{aot} and \ref{sec:aand}};
% \draw[decorate, decoration={brace}] (3, -6.05) -- (3, -11.5) node[right, midway] {\small
%   \secref{abit}};

% % Very ugly hack to move picture to the right
% \node (balance) at (-5,0) {};

% \end{tikzpicture}
% \caption{Paper outline.}
% \figlab{outlinebitauth} 
% \end{center}
% \end{wrapfigure}

\begin{wrapfigure}[13]{r}{5cm}
\vspace{-1.5cm} % Even worse hack to move it up
\begin{center}
\begin{tikzpicture}[scale=0.60]
% Draw the nodes
\node (twopc) at (0,0) {\FMPC};
\node (deal) at (0, -2) {\DEAL};
\node (aot) at (0, -4) {\AOT};
\node[anchor=west] (aand) at (1, -4) {\AAND};
\node (abit) at (0, -6) {\ABIT};
\node[anchor=west] (eq) at (0.5, -7.5) {\EQ};
\node[anchor=east] (ot) at (-0.5, -7.5) {\OT};

% Draw the arrows
\draw[->,>=stealth'] (deal) -- (twopc);
\draw[-,>=stealth'] (aot) -- (deal);
\draw[-,>=stealth'] (aand) -- (deal);
\draw[->,>=stealth'] (abit) -- (aot);
\draw[->,>=stealth'] (abit) -- (aand);
\draw[->,>=stealth'] (ot) -- (abit);
\draw[->,>=stealth'] (eq) -- (abit);
\draw[-,>=stealth'] (abit) to [bend left=45] (deal);
\draw[->,>=stealth'] (eq) -- (aand);
\draw[->,>=stealth'] (eq) to [bend right=20] (aot);

% Draw the braces
\draw[decorate, decoration={brace}] (3.05,0) --
(3.05,-3.95) node[right, midway] {\small \secref{twopcfromaot}};
\draw[decorate, decoration={brace}] (3.05, -4) --
(3.05, -5.95) node[right, midway] {\small \secref{aot} and \ref{sec:aand}};
\draw[decorate, decoration={brace}] (3.05, -6.05) -- (3.05, -7.5) 
node[right, midway] {\small \secref{abit}};

% Very ugly hack to move picture to the right
\node (balance) at (-1,0) {};

\end{tikzpicture}
\vspace{-0.4cm}
\caption{Paper outline. This order of presentation is chosen to allow
  the best progression in introduction of our new techniques.}
\figlab{paperoutline}
\end{center}
\end{wrapfigure}
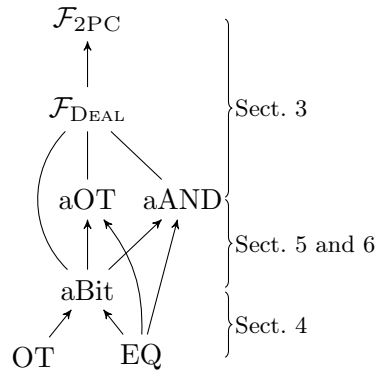

To cope with this, we put MACs on all bits. The starting point of our
protocol is \defterm{oblivious authentication} of bits. One party, the
\defterm{key holder}, holds a uniformly random \defterm{global key}
$\Delta \in \zo^\kappa$. The other party, the \defterm{MAC holder},
holds some secret bits ($x,y$, say). For each such bit the key holder
holds a corresponding uniformly random \defterm{local key} ($K_{x},
K_{y} \in \zo^\kappa$) and the MAC holder holds the corresponding
\defterm{MAC} ($M_{x} = K_{x} \oplus x \Delta$, $M_{y} = K_{y} \oplus
y \Delta$). The key holder does not know the bits and the MAC holder
does not know the keys. Note that $M_{x} \oplus M_{y} = (K_{x} \oplus
K_{y}) \oplus (x \oplus y)\Delta$. So, the MAC holder can locally
compute a MAC on $x \oplus y$ under the key $K_{x} \oplus K_{y}$ which
is non-interactively computable by the key holder. This homomorphic
property comes from fixing $\Delta$ and we exploit it throughout our
constructions.  From a bottom-up look, our protocol is constructed as
follows (see \figref{paperoutline} for the main structure):
\begin{description}
\item[Bit Authentication:] 

  We first implement oblivious authentication of bits (\ABIT). As
  detailed in~\secref{abit}, to construct authenticated bits we start
  by extending a few (say $\kappa = 640$) seed $\binom{2}{1}$-OTs into
  many (say $\ell = 2^{20}$) OTs, using OT extension.  Then, if
  $\Ali$ wants to get a bit $x$ authenticated, she can input it as the
  choice bit in an OT, while $\Bob$ can input $(K_{x},K_{x}\oplus
  \Delta)$, playing the sender in the OT. Now $\Ali$ receives $M_{x}=
  K_{x} \oplus x \Delta$. It should, of course, be ensured that even a
  corrupted $\Bob$ uses the same value $\Delta$ in all OTs. I.e., it
  should hold for all produced OTs that the XORs of the offered
  message pairs are constant---this constant value is then taken to be
  $\Delta$. It turns out, however, that when using the highly
  efficient \emph{passive-secure} OT extender
  in~\cite{DBLP:conf/crypto/IshaiKNP03} and starting from seed OTs
  where the XORs of message pairs are constant, one also produces OTs
  where the XORs of message pairs are constant, and we note that for
  this use the protocol in~\cite{DBLP:conf/crypto/IshaiKNP03} happens
  to be \emph{active-secure}! Using cut-and-choose we ensure that most
  of the XORs of message pairs offered in the seed OTs are constant,
  and with a new and inexpensive trick we offer privacy and
  correctness even if few of these XORs have different values. This
  cut-and-choose technique uses one call to a box $\EQ$ for checking
  equality.

\item[Authenticated local AND:] 

  From \ABITs we then construct \defterm{authenticated local AND}s
  (\AAND), where the MAC holder locally holds random authenticated
  bits $a, b, c$ with $c = ab$.  To create authenticated local ANDs,
  we let one party compute $c = ab$ for random $a$ and $b$ and get
  authentications on $a,b,c$ (when creating \AANDs, we assume the
  \ABITs are already available). The challenge is to ensure that $c =
  ab$. We construct an efficient proof for this fact, again using the
  box $\EQ$ once. This proof might, however, leak the bit $a$ with
  small but noticeable probability. We correct this using a combiner.

\item[Authenticated OT:] 

  From \ABITs we also construct \defterm{authenticated OT}s (\AOT),
  which are normal $\binom{2}{1}$-OTs of bits, but where all input
  bits and output bits are obliviously authenticated. This is done by
  letting the two parties generate \ABITs representing the sender
  messages $x_0,x_1$ and the receiver choice bit $c$.  To produce the
  receiver's output, first a random \ABIT is sampled. Then this bit is
  ``corrected'' in order to be consistent with the run of an OT
  protocol with input messages $x_0,x_1$ and choice bit $c$. This
  correction might, however, leak the bit $c$ with small but
  noticeable probability. We correct this using an OT combiner. One
  call to the box $\EQ$ is used.

\item[2PC:] 

  Given two \AANDs and two \AOTs one can evaluate in a very efficient
  way any Boolean gate: only $4$ bits per gate are communicated, as
  the MACs can be checked in an amortized manner.

\end{description}

That efficient 2PC is possible given enough \ABITs, \AANDs and \AOTs
is no surprise. In some sense, it is the standard way to base
passive-secure 2PC on passive-secure OT enhanced with a particular
flavor of committed OT (as in
\cite{DBLP:conf/crypto/CrepeauGT95,DBLP:conf/tcc/Garay04}).  What is
new is that we managed to find a particular committed OT-like
primitive which allows both a very efficient generation and a very
efficient use: while previous result based on committed OT require
hundreds of \emph{exponentiations} per gate, our cost per gate is in
the order of hundreds of \emph{hash functions}.  To the best of our
knowledge, we present the first practical approach to extending a few
seed OTs into a large number of committed OT-like primitives. Of more
specific technical contributions, the main is that we manage to do all
the proofs efficiently, thanks also to the preprocessing nature of our
protocol: Creating \ABITs, we get active security paying only a
constant overhead over the passive-secure protocol
in~\cite{DBLP:conf/crypto/IshaiKNP03}. In the generation of \AANDs and
\AOTs, we replace cut-and-choose with efficient, slightly leaky proofs
and then use a combiner to get rid of the leakage: When we preprocess
for $\ell$ gates and combine $B$ leaky objects to get each potentially
unleaky object, the probability of leaking is $(2\ell)^{-B} =
2^{-\log_2(\ell)(B-1)}$. As an example, if we preprocess for $2^{20}$
gates with an overhead of $B = 6$, then we get leakage probability
$2^{-100}$.

As a corollary to being able to generate any $\ell = \poly(\kappa)$
active-secure \ABIT{}s from $O(\kappa)$ seed OTs and $O(\ell)$ calls
to a hash-function, we get that we can generate any $\ell =
\poly(\kappa)$ active-secure $\binom{2}{1}$-\OT{}s of $\kappa$-bit
strings from $O(\kappa)$ seed OTs and $O(\ell)$ calls to a
hash-function, matching the asymptotic complexity
of~\cite{DBLP:conf/tcc/HarnikIKN08} while dramatically reducing their
hidden constants.

\Section{Preliminaries and Notation} \seclab{preliminaries}

We use $\kappa$ (and sometimes $\psi$) to denote the security
parameter. We require that a poly-time adversary break the protocol
with probability at most $\poly(\kappa) 2^{-\kappa}$. For a bit-string
$S \in \zo^*$ we define $0 S \Def 0^{|S|}$ and $1 S \Def S$. For a
finite set $S$ we use $s \inR S$ to denote that $s$ is chosen
uniformly at random in $S$. For a finite distribution $D$ we use $x
\leftarrow D$ to denote that $x$ is sampled according to $D$.

\Subsubsection{The UC Framework}\seclab{uc}

We prove our results static, active-secure in the UC
framework~\cite{DBLP:conf/focs/Canetti01}, and we assume the reader to
be familiar with it. We will idiosyncratically use the word \emph{box}
instead of the usual term \emph{ideal functionality}. To simplify the
statements of our results we use the following terminology:
\begin{definition}
  We say that a box $\func{A}$ is \emph{reducible} to a box $\func{B}$
  if there exist an actively secure implementation $\pi$ of $\func{A}$
  which uses only one call to $\func{B}$.  We say that $\func{A}$ is
  \emph{locally} reducible to $\func{B}$ if the parties of $\pi$ do
  not communicate (except through the one call to $\func{B}$).  We say
  that $\func{A}$ is \emph{linear} reducible to $\func{B}$ if the
  computing time of all parties of $\pi$ is linear in their inputs and
  outputs. We use \emph{equivalent} to denote reducibility in both
  directions.
\end{definition}

\longOnly{It is easy to see that if $\func{A}$ is (linear, locally)
reducible to $\func{B}$ and $\func{B}$ is (linear, locally) reducible
to $\func{C}$, then $\func{A}$ is (linear, locally) reducible to
$\func{C}$. }

\Subsubsection{Hash Functions}

We use a hash function $H: \zo^* \rightarrow \zo^\kappa$, which we
model as a random oracle (RO).  We sometimes use $H$ to mask a
message, as in $H(x) \oplus M$. If $\vert M \vert \ne \kappa$, this
denotes $\prg(H(x)) \oplus M$, where $\prg$ is a pseudo-random
generator $\prg: \zo^\kappa \rightarrow \zo^{\vert M \vert}$. We also
use a collision-resistant hash function $G: \zo^{2\kappa} \rightarrow
\zo^\kappa$.
  
As other 2PC protocols whose focus is
efficiency~\cite{DBLP:conf/icalp/KolesnikovS08,DBLP:conf/uss/HuangEKM11},
we are content with a proof in the random oracle model. What is the
exact assumption on the hash function that we need for our protocol to
be secure, as well as whether this can be implemented under standard
cryptographic assumption is an interesting theoretical question,
see~\cite{cryptoeprint:2010:544,cryptoeprint:2011:510}.

\Subsubsection{Oblivious Transfer} We use a box $\OT(\tau,\ell)$
which can be used to perform $\tau$ $\binom{2}{1}$-oblivious
transfers of strings of bit-length $\ell$. In each of the $\tau$ OTs
the sender \ant{S} has two inputs $x_0, x_1 \in \zo^\ell$, called the
\emph{messages}, and the receiver \ant{R} has an input $c \in \zo$,
called the \emph{choice bit}. The output to \ant{R} is $x_c = c(x_0
\oplus x_1) \oplus x_0$. No party learns any other information.

\Subsubsection{Equality Check} We use a box $\EQ(\ell)$ which
allows two parties to check that two strings of length $\ell$ are
equal. If they are different the box leaks both strings to the
adversary, which makes secure implementation easier.  We define and
use this box to simplify the exposition of our protocol. In practice
we implement the box by letting the parties compare exchanged
hash's of their values: this is a secure implementation of the box in
the random oracle model.

For completeness we give a protocol which securely implements $\EQ$ in
the RO model. Let $H :\zo^* \rightarrow \zo^\kappa$ be a hash
function, modeled as a RO. Let $\kappa$ be the security parameter.
\begin{enumerate}
\item \Ali chooses a random string $r \in_R \zo^\kappa$, computes
  $c=H(x||r)$ and sends it to \Bob.
\item \Bob sends $y$ to \Ali.
\item \Ali sends $x,r$ to \Bob. \Ali outputs $x \eqq y$.
\item \Bob outputs $(H(x||r) \eqq c) \wedge (x \eqq y)$.
\end{enumerate}

This is a secure implementation of the $\EQ(\ell)$ functionality in
the RO model. If \Ali is corrupted, the simulator extracts $x,r$ from
the simulated call to the RO, if the hash function was queried with an
input which yielded the $c$ sent by \Ali. Then, it inputs $x$ to $\EQ$
and receives $(x,y)$ from the ideal functionality (if $x \neq y$). If
the hash function was not queried with an input which yielded the $c$
sent by \Ali, then the simulator inputs a uniformly random $x$ to
$\EQ$ and receives $(x,y)$. It then sends $y$ to the corrupted
\Ali. On input $x',r'$ from \Ali, if $(x',r')\neq (x,r)$ the simulator
inputs ``abort'' to the \EQ functionality on behalf of \Ali, or
``deliver'' otherwise. If $(x',r')=(x,r)$, simulation is perfect. If
they are different, the only way that the environment can distinguish
is by finding $(x',r')\neq(x,r)$ s.t.~$H(x||r)=H(x'||r')$ or by
finding $(x',r')$ such that $c = H(x'||r')$ for a $c$ which did not
result from a previous query.  In the random oracle both events happen
with probability less than $\poly(\kappa)2^{-\kappa}$, as the
environment is only allowed a polynomial number of calls to the RO.

If \Bob is corrupted, then the simulator sends a random value $c\in_R
\zo^\ell$ to \Bob. Then, on input $y$ from \Bob it inputs this value
to the \EQ box and receives $(x,y)$. Now, it chooses a random $r\in_R
\zo^\kappa$ and programs the RO to output $c$ on input $x||r$, and
sends $x$ and $r$ to \Bob. Simulation is perfect, and the environment
can only distinguish if it had already queried the RO on input $x||r$,
and this happens with probability $\poly(\kappa)2^{-\kappa}$, as $r
\in \zo^{\kappa}$ is uniformly random, and the environment is only
allowed a polynomial number of calls to the RO.

\Subsubsection{Leakage Functions} \seclab{leak}

We use a notion of a class $\LL$ of \defterm{leakage functions on $\tau$
bits}. The context is that there is some uniformly random secret value
$\Delta \inR \zo^\tau$ and some adversary \Ali wants to guess
$\Delta$. To aid \Ali, she can do an attack which might leak some of
the bits of $\Delta$. The attack, however, might be detected. Each $L
\in \LL$ is a poly-time sampleable distribution on $(S,c) \in 2^{\{1,
  \ldots, \tau\}} \times \zo$. Here $c$ specifies if the attack was
detected, where $c=0$ signals detection, and $S$ specifies the bits
to be leaked if the attack was not detected.
We need a measure of how many bits a class $\LL$ leaks. We do this via
a game for an unbounded adversary \Ali.
\begin{enumerate}
\item

  The game picks a uniformly random $\Delta \inR \zo^\tau$.

\item

  \Ali inputs $L \in \LL$.

\item

  The game samples $(S,c) \leftarrow L$. If $c=0$, \Ali loses. If
  $c=1$, the game gives $\{ (i, \Delta_i) \}_{i \in S}$ to \Ali.

\item

  Let $\overline{S} = \{1,\ldots,\tau\} \setminus S$. \Ali inputs
  the guesses $\{ (i, g_i) \}_{i \in \overline{S}}$. If $g_i=\Delta_i$ for
  all $i \in \overline{S}$, \Ali wins, otherwise she loses.

\end{enumerate}

We say that an adversary \Ali is \emph{optimal} if she has the highest
possible probability of winning the game above. If there were no
leakage, i.e., $S = \emptyset$, then it is clear that the optimal \Ali
wins the game with probability exactly $2^{-\tau}$. If \Ali is always
given exactly $s$ bits and is never detected, then it is clear that
the optimal \Ali can win the game with probability exactly
$2^{s-\tau}$. This motivates defining the number of bits leaked by
$\LL$ to be $\leak_\LL \Def \log_2(\success_\LL) + \tau$, where
$\success_\LL$ is the probability that the optimal \Ali wins the
game. It is easy (details below) to see that if we take expectation
over random $(S,c)$ sampled from $L$, then $\leak_\LL = \max_{L \in \LL}
\log_2\left(\E{c 2^{\vert S \vert}}\right)$.

We say that $\LL$ is $\kappa$-secure if $\tau - \leak_\LL \ge \kappa$,
and it is clear that if $\LL$ is $\kappa$-secure, then no \Ali can win
the game with probability better than $2^{-\kappa}$.

We now rewrite the definition of $\leak_\LL$ to make it more workable.

It is clear that the optimal \Ali can guess all $\Delta_i$ for $i \in
\overline{S}$ with probability exactly $2^{\vert S \vert - \tau}$.
This means that the optimal \Ali wins with probability
$\sum_{s=0}^\tau \prob{(S,c) \leftarrow L: \vert S \vert = s \wedge
c=1} 2^{s-\tau}$. To simplify this expression we define index
variables $I_s, J_s \in \zo$ where $I_s$ is $1$ iff $c=1$ and $\vert S
\vert = s$ and $J_s$ is $1$ iff $\vert S \vert = s$. Note that $I_s =
c J_s$ and that $\sum_s J_s 2^s = 2^{\vert S \vert}$. So, if we take
expectation over $(S,c)$ sampled from $L$, then we get that
\begin{equation*}
  \begin{split}
    \sum_{s=0}^\tau \prob{(S,c) \leftarrow L: \vert S \vert = s \wedge
      c=1} 2^s 
    &= \sum_{s=0}^\tau \E{I_s} 2^s\\ 
    &= \E{\sum_{s=0}^\tau I_s 2^s} 
    = \E{\sum_{s=0}^\tau c J_s 2^s}\\
    &= \E{c \sum_{s=0}^\tau J_s 2^s}
    = \E{c 2^{\vert S \vert}}\ .
  \end{split}
\end{equation*}
Hence $\success_L = 2^{-\tau} \E{c 2^{\vert S \vert}}$ is the
probability of winning when using $L$ and playing optimal. Hence
$\success_\LL = \max_{L \in \LL} (2^{-\tau} \E{c 2^{\vert S \vert}})$
and $\log_2 (\success_\LL )= - \tau + \log_2 \max_{L \in \LL}\left( \E{c
2^{\vert S \vert}}\right)$, which shows that
$$\leak_\LL = \max_{L \in \LL} \log_2\left(\E{c 2^{\vert S \vert}}\right)\ ,$$
as claimed above.

\Section{The Two-Party Computation Protocol}\seclab{twopcfromaot}
\begin{wrapfigure}[6]{r}{4cm}
\vspace{-1.5cm} % Even worse hack to move it up
\begin{center}
\begin{tikzpicture}[scale=0.60]
% Draw the nodes
\node (twopc) at (0,0) {\FMPC};
\node (deal) at (0, -2) {\DEAL};
\node (aot) at (0, -4) {\AOT};
\node[anchor=west] (aand) at (1, -4) {\AAND};
\node[anchor=east] (abit) at (-1, -4) {\ABIT};

% Draw the arrows
\draw[->,>=stealth'] (deal) -- (twopc);
\draw[-,>=stealth'] (aot) -- (deal);
\draw[-,>=stealth'] (aand) -- (deal);
\draw[-,>=stealth'] (abit) -- (deal);
\end{tikzpicture}
\vspace{-0.3cm}
\caption{\secref{twopcfromaot} outline.}
\figlab{dealoutline}
\end{center}
\end{wrapfigure}
We want to implement the box \FMPC for Boolean two-party secure
computation as described in~\figref{FMPC}. We will implement this box
in the \DEAL-hybrid model of~\figref{FDEAL}. This box provides the
parties with \ABITs, \AANDs and \AOTs, and models the preprocessing
phase of our protocol. We introduce notation in~\figref{abitnotation}
for working with authenticated bits. The protocol implementing \FMPC
in the dealer model is described in \figref{PI2PC}. The dealer offers
random authenticated bits (to \Ali or \Bob), random authenticated
local AND triples and random authenticated OTs. Those are all the
ingredients that we need to build the 2PC protocol. \longOnly{Note
  that the dealer offers randomized versions of all commands: this is
  not a problem as the ``standard'' version of the commands (the one
  where the parties can specify their input bits instead of getting
  them at random from the box) are linearly reducible to the
  randomized version, as can be easily deduced from the protocol
  description.} The following result is proven in
\appref{proofoftwopc}:

\begin{theorem}\thmlab{twopc} 
  The protocol
  in~\figref{PI2PC} securely implements the box \FMPC in the
  \DEAL-hybrid model with security parameter $\kappa$.
\end{theorem}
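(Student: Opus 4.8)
The plan is a standard UC simulation argument that leverages the fact that, in the \DEAL-hybrid model, the simulator \Sim emulates \DEAL itself and therefore knows every secret the box produces: the global keys $\Delta$ of both parties, all local keys, all MACs, and all random bits inside the \ABITs, \AANDs and \AOTs. Since corruption is static there are three cases. With both parties honest \Sim simply instructs \FMPC to deliver outputs and produces a protocol transcript consisting of uniformly random masked bits; with both parties corrupted there is nothing to simulate. The content is in the two cases of a single corrupted party; I describe a corrupted \Ali, the case of a corrupted \Bob being handled analogously by the symmetry of the protocol in \figref{PI2PC}.

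\textbf{Extraction and forward simulation.} \Sim runs an internal copy of the protocol against the corrupted \Ali while emulating \DEAL. Each input bit of \Ali enters the computation masked by a fresh random authenticated bit that \Sim handed out through \DEAL, and \Ali announces the masked value in the clear; \Sim therefore recovers the true input bit by removing the known mask and inputs it to \FMPC. During the evaluation of each gate, the bits an honest \Bob would broadcast are wire values one-time-padded by fresh random bits drawn from the \AANDs and \AOTs, hence uniform and independent in the view of \Ali, so \Sim sends uniformly random bits on \Bob's behalf. At output reconstruction \Sim obtains the output $z$ from \FMPC and, controlling all of \Bob's masks and keys, computes the unique bit-share \Bob must reveal so that reconstruction yields $z$, together with its matching MAC. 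Abort decisions of \Ali are forwarded to \FMPC so that the honest party outputs $z$ or $\bot$ consistently; this also accommodates that \FMPC may deliver different outputs to the two parties.

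\textbf{The consistency/abort argument.} This is the heart of the proof. Whenever the corrupted \Ali opens (fully or partially) an authenticated bit --- in an \ABIT reveal, in the openings of masked wire values inside the gate-evaluation subprotocols built from \AAND and \AOT, or at output reconstruction --- \Sim checks the associated MAC against the local key it holds and declares an abort exactly when an honest \Bob would, using that the amortized MAC check performed at the end is equivalent to verifying each individual opening. Conditioned on all these checks passing, every message \Ali sends on an authenticated value must equal the prescribed honest one: to open an authenticated bit to the other value, \Ali would have to produce the MAC $M' = K \oplus v'\Delta$, which equals $M \oplus \Delta$ where $M = K \oplus v\Delta$ is the MAC it legitimately holds, so \Ali would have to know $\Delta$, and $\Delta \inR \zo^\kappa$ is information-theoretically hidden from it --- crucially, the clean \DEAL box, unlike the deliberately leaky lower-level boxes of the later sections, reveals no bit of $\Delta$. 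Hence, conditioned on no MAC forgery, the view of \Ali is exactly that of a semi-honest run, which the simulation above reproduces perfectly, and a union bound over the polynomially many opened MACs bounds the forgery --- and therefore the distinguishing --- probability by $\poly(\kappa) 2^{-\kappa}$. Note finally that whether and how \Ali deviates depends only on the uniformly random transcript data and on its own input, never on \Bob's input, since all intermediate broadcasts are freshly masked; hence there is no selective-failure concern and the abort probability is input-independent and faithfully simulatable.

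\textbf{Remaining routine steps and the main obstacle.} It remains to check correctness of a fully honest execution: the identity $xy = x_A y_A \oplus x_A y_B \oplus x_B y_A \oplus x_B y_B$, the OT-based computation of the cross terms, and the \AAND triples give that reconstruction yields $g$ applied to the extracted inputs, while the homomorphic relation $M_x \oplus M_y = (K_x \oplus K_y) \oplus (x \oplus y)\Delta$ keeps the propagated MACs consistent. The main obstacle is the consistency argument above --- carefully tracking, across all three uses of authenticated bits in the protocol of \figref{PI2PC}, that the fixed-$\Delta$ authentication leaves a cheating party only the choice between behaving semi-honestly and committing a MAC forgery, and verifying that the amortized final check is genuinely equivalent to per-opening verification. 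Everything else is bookkeeping; the full proof is given in \appref{proofoftwopc}.
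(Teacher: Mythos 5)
Your proof is correct and follows essentially the same approach as the paper's proof in Appendix~B: the simulator emulates \DEAL to learn all keys, MACs and masks, extracts the corrupted party's inputs from its masked announcements, produces \Bob's messages by sampling uniform bits and at output time computes the unique consistent share and MAC, and reduces any inconsistent opening to the event of guessing $\Delta_A$, which is information-theoretically hidden since the corrupted party only ever sees one MAC under each local key. The paper formalizes the ``inconsistent bit $\Rightarrow$ guess $\Delta$'' step more carefully through a short game-hopping sequence (handling adaptively requested MACs and linear combinations of them, which arise from the XOR-homomorphism used in the gate protocol) before concluding with the same $\poly(\kappa)2^{-\kappa}$ bound you obtain by a union bound, but the underlying idea is identical.
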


\begin{boxfig}{Notation for authenticated and shared bits.}{abitnotation}
\begin{description}
\item[Global Key:] 

We call $\Delta_A,\Delta_B \in \zo^{\kappa}$ the two \defterm{global
keys}, held by $\Bob$ and $\Ali$ respectively.

\item[Authenticated Bit:] 

We write $\repa{x}$ to represent an \defterm{authenticated secret bit}
held by $\Ali$. Here $\Bob$ knows a key $K_x \in \zo^\kappa$ and
$\Ali$ knows a bit $x$ and a MAC $M_x = K_x \oplus x \Delta_A \in
\zo^{\kappa}$. Let $\repa{x}~\Def~(x,M_x,K_x)$.\footnote{Since
$\Delta_A$ is a global value we will not always write it
explicitly. Note that in $x \Delta_A$, $x$ represents a \emph{value},
$0$ or $1$, and that in $\repa{x}$, $K_x$ and $M_x$ it represents a
\emph{variable name}. I.e., there is only one key (MAC) per
authenticated bit, and for the bit named $x$, the key (MAC) is named
$K_x$ ($M_x$). If $x=0$, then $M_x = K_x$. If $x=1$, then $M_x = K_x
\oplus \Delta_A$.}  

If $\repa{x}=(x,M_x,K_x)$ and
$\repa{y}=(y,M_y,K_y)$ we write $\repa{z} = \repa{x}\oplus \repa{y}$
to indicate $\repa{z}=(z,M_z,K_z)\Def(x\oplus y, M_x\oplus M_y, K_x
\oplus K_y)$. Note that no communication is required to compute 
$\repa{z}$ from $\repa{x}$ and $\repa{y}$.

It is possible to authenticate a constant bit (a value known both to
$\Ali$ and $\Bob$) $b\in \zo$ as follows: $\Ali$ sets $M_b=
0^{\kappa}$, $\Bob$ sets $K_b=b\Delta_A$, now $\repa{b}\Def
(b,M_b,K_b)$. For a constant $b$ we let $\repa{x}\oplus b \Def
\repa{x} \oplus \repa{b}$, and we let $b\repa{x}$ be equal to
$\repa{0}$ if $b=0$ and $\repa{x}$ if $b=1$.

We say that $\Ali$ \defterm{reveals} $\repa{x}$ by sending ($x,M_x$)
to $\Bob$ who aborts if $M_x\neq K_x \oplus x\Delta_A$.  Alternatively
we say that $\Ali$ \defterm{announces} $x$ by sending $x$ to $\Bob$
without a MAC.

Authenticated bits belonging to \Bob are written as $\repb{y}$ and are
defined symmetrically, changing side of all the values and using the
global value $\Delta_B$ instead of $\Delta_A$. 

\item[Authenticated Share:] 

  We write $\reps{x}$ to represent the situation where $\Ali$ and
  $\Bob$ hold $\repa{x_A},\repb{x_B}$ and $x=x_A\oplus x_B$, and we
  write $\reps{x}=(\repa{x_A},\repb{x_B})$ or
  $\reps{x}=\repab{x_A}{x_B}$.

  If $\reps{x}=\repab{x_A}{x_B}$ and $\reps{y}=\repab{y_A}{y_B}$ we
  write $\reps{z}=\reps{x}\oplus \reps{y}$ to indicate
  $\reps{z}=(\repa{z_A},\repb{z_B})=(\repa{x_A}\oplus \repa{y_A},
  \repb{x_B}\oplus \repb{y_B})$. Note that no communication is
  required to compute $\reps{z}$ from $\reps{x}$ and $\reps{y}$.

  It is possible to create an authenticated share of a constant $b\in
  \zo$ as follows: $\Ali$ and $\Bob$ create
  $\reps{b}=\repab{b}{0}$. For a constant value $b\in \zo$, we define
  $b\reps{x}$ to be equal to $\reps{0}$ if $b=0$ and $\reps{x}$ if
  $b=1$.

  When an authenticated share is \defterm{revealed}, the parties
  reveal to each other their authenticated bits and abort if the MACs
  are not correct.

\end{description}
\end{boxfig}

\begin{boxfig}{The box \FMPC for Boolean Two-party Computation.}{FMPC}
\begin{description}
\item[Rand:] 

  On input $(\cmd{rand}, \varid)$ from $\Ali$ and $\Bob$, with
  $\varid$ a fresh identifier, the box picks $r
  \inR \zo$ and stores $(\mathit{\varid},r)$.

\item[Input:] 

  On input $(\cmd{input}, \ant{P}, \mathit{\varid}, x)$ from
  $\ant{P}\in\{\Ali,\Bob\}$ and $(\cmd{input}, \ant{P},
  \mathit{\varid}, ?)$ from the other party, with $\mathit{\varid}$ a
  fresh identifier, the box stores $(\mathit{\varid},x)$.

\item[XOR:] 

  On command $(\cmd{xor}, \mathit{\varid}_1, \mathit{\varid}_2,
  \mathit{\varid}_3)$ from both parties (if
  $\mathit{\varid}_1,\mathit{\varid}_2$ are defined and
  $\mathit{\varid}_3$ is fresh), the box retrieves
  $(\mathit{\varid}_1,x)$, $(\mathit{\varid}_2,y)$ and stores
  $(\mathit{\varid}_3, x\oplus y )$.

\item[AND:] 

  As \textbf{XOR}, but store $(\mathit{\varid}_3, x\cdot y)$.

\item[Output:] 

  On input $(\cmd{output}, \ant{P}, \mathit{\varid})$ from both
  parties, with $\ant{P}\in\{\Ali,\Bob\}$ (and $\mathit{\varid}$
  defined), the box retrieves $(\mathit{\varid},x)$ and
  outputs it to $\ant{P}$.

\end{description}
\Steal{} At each command the box leaks to the environment which
command is being executed (keeping the value $x$ in \textbf{Input} secret),
and delivers messages only when the environment says so.
\end{boxfig}

\begin{boxfig}{The box \DEAL for dealing preprocessed values.}{FDEAL}
\begin{description}
\item[Initialize:] 

  On input $(\cmd{init})$ from $\Ali$ and $(\cmd{init})$ from $\Bob$,
  the box samples $\Delta_A,\Delta_B\in \zo^\kappa$, stores them and
  outputs $\Delta_B$ to $\Ali$ and $\Delta_{A}$ to $\Bob$.~If $\Ali$
  (resp.~$\Bob$) is corrupted, she gets to choose $\Delta_B$
  (resp. $\Delta_A$).

\item[Authenticated Bit (\Ali):] 

  On input $(\cmd{aBIT}, \Ali)$ from $\Ali$ and $\Bob$, the box
  samples a random $\repa{x}=(x,M_x,K_x)$ with $M_x~=~K_x\oplus x
  \Delta_A$ and outputs it ($x,M_x$ to $\Ali$ and $K_x$ to $\Bob$).~
  If $\Bob$ is corrupted he gets to choose $K_x$.~If $\Ali$ is
  corrupted she gets to choose $(x,M_x)$, and the box sets
  $K_x~=~M_x\oplus x \Delta_A$.

\item[Authenticated Bit (\Bob):] 

  On input $(\cmd{aBIT}, \Bob)$ from $\Ali$ and $\Bob$, the box
  samples a random $\repb{x}=(x,M_x,K_x)$ with $M_x~=~K_x\oplus x
  \Delta_B$ and outputs it ($x,M_x$ to $\Bob$ and $K_x$ to $\Ali$). 
  As in \textbf{Authenticated Bit (\Ali)}, corrupted parties can
  choose their own randomness.

\item[Authenticated local AND (\Ali):] 

  On input $(\cmd{aAND}, \Ali)$ from $\Ali$ and $\Bob$, the box
  samples random $\repa{x}$,$\repa{y}$ and $\repa{z}$ with $z=x y$ and
  outputs them.~As in \textbf{Authenticated Bit (\Ali)}, corrupted
  parties can choose their own randomness.

\item[Authenticated local AND (\Bob)] Defined symmetrically.

\item[Authenticated OT (\Ali-\Bob):] 

  On input $(\cmd{aOT},\Ali,\Bob)$ from $\Ali$ and $\Bob$, the box
  samples random $\repa{x_0}$,$\repa{x_1}$,$\repb{c}$ and $\repb{z}$
  with $z~=~x_{c}=c(x_0\oplus x_1) \oplus x_0$ and outputs them.  As in
  \textbf{Authenticated Bit}, corrupted parties can choose their own
  randomness.

\item[Authenticated OT (\Bob-\Ali):] 

  Defined symmetrically.\longOnly{\footnote{The dealer offers \AOTs in
  both directions. Notice that the dealer could offer \AOT only in one
  direction and the parties could then ``turn'' them: as regular OT,
  \AOT is symmetric as well.}}

\item[Global Key Queries:]

  The adversary can at any point input $(\Ali, \Delta)$ and be
  told whether $\Delta = \Delta_B$. And it can at any point input
  $(\Bob, \Delta)$ and be told whether $\Delta = \Delta_A$.

\end{description}
\end{boxfig}

\begin{boxfig}{Protocol for \FMPC in the \DEAL-hybrid model}{PI2PC}
\begin{description}
\item[Initialize:] 

  When activated the first time, $\Ali$ and $\Bob$ activate \DEAL and
  receive $\Delta_B$ and $\Delta_A$ respectively. 

\item[Rand:] 

  $\Ali$ and $\Bob$ ask \DEAL for random authenticated bits
  $\repa{r_A},\repb{r_B}$ and stores $\reps{r}=\repab{r_A}{r_B}$ under
  $\mathit{\varid}$. 

\item[Input:] 

  If $\ant{P} = \Ali$, then $\Ali$ asks \DEAL for an authenticated bit
  $\repa{x_A}$ and announces (i.e., no MAC is sent together with the
  bit) $x_B=x\oplus x_A$, and the parties build $\repb{x_B}$ and
  define $\reps{x}=\repab{x_A}{x_B}$. The protocol is symmetric for
  $\Bob$.

\item[XOR:] 

  $\Ali$ and $\Bob$ retrieve $\reps{x},\reps{y}$ stored under
  $\mathit{\varid}_1,\mathit{\varid}_2$ and store
  $\reps{z}=\reps{x}\oplus\reps{y}$ under $\mathit{\varid}_3$. For
  brevity we drop explicit mentioning of variable identifiers below.

\item[AND:] 

  $\Ali$ and $\Bob$ retrieve $\reps{x},\reps{y}$ and compute
  $\reps{z}=\reps{xy}$ as follows:
  \begin{enumerate}
  \item 

    The parties ask \DEAL for a random AND triplet
    $\repa{u},\repa{v},\repa{w}$ with $w=uv$. \\ $\Ali$ reveals
    $\repa{f}=\repa{u}\oplus \repa{x_A}$ and $\repa{g}=\repa{v}\oplus
    \repa{y_A}$. \\ The parties compute $\repa{x_Ay_A} = f \repa{y_A}
    \oplus g \repa{x_A} \oplus \repa{w} \oplus fg$.

  \item 

    Symmetrically the parties compute $\repb{x_By_B}$.
    %The parties ask \DEAL for a random AND triplet
    %$\repb{u},\repb{v},\repb{w}$ with $w=uv$. \\ $\Bob$ reveals
    %$\repb{f}=\repb{u}\oplus \repb{x_B}, \repb{g}=\repb{v}\oplus
    %\repb{y_B}$. \\
    %The parties compute $\repb{x_By_B} = f \repb{y_B} \oplus g
    %\repb{x_B} \oplus \repb{w} \oplus fg$.

  \item 

    The parties ask \DEAL for a random authenticated OT
    $\repa{u_0},\repa{u_1},\repb{c},\repb{w}$ with $w=u_c$.\\ 
    They also
    ask for an authenticated bit $\repa{r_A}$. \\ Now $\Bob$ reveals
    $\repb{d}=\repb{c} \oplus \repb{y_B}$. \\ $\Ali$ reveals
    $\repa{f}=\repa{u_0} \oplus \repa{u_1} \oplus \repa{x_A}$ and
    $\repa{g}=\repa{r_A} \oplus \repa{u_0} \oplus d \repa{x_A}$. \\
    Compute $\repb{s_B}=\repb{w} \oplus f \repb{c} \oplus
    g$. Note that at this point $\repb{s_B}=\repb{r_A \oplus
    x_Ay_B}$.

  \item 

    Symmetrically the parties compute $\repa{s_A}=\repa{r_B \oplus
    x_By_A}$.
    %The parties ask \DEAL for a random authenticated OT
    %$\repb{u_0},\repb{u_1},\repa{c},\repa{w}$ with $w=u_c$.\\ They also
    %ask for an authenticated bit $\repb{r_B}$. \\ Now $\Ali$ reveals
    %$\repa{d}=\repa{c} \oplus \repa{y_A}$; \\ $\Bob$ reveals
    %$\repb{f}=\repb{u_0} \oplus \repb{u_1} \oplus \repb{x_B}$ and
    %$\repb{g}=\repb{r_B} \oplus \repb{u_0} \oplus d \repb{x_B}$. \\
    %Compute $\repa{s_A}=\repa{w} \oplus f \repa{c} \oplus
    %g$.\footnote{Note that at this point $\repa{s_A}=\repa{r_B \oplus
    %x_By_A}$.}

  \end{enumerate}
  $\Ali$ and $\Bob$ compute $\repa{z_A}=\repa{r_A}\oplus
  \repa{s_A}\oplus \repa{x_Ay_A}$ and $\repb{z_B}=\repb{r_B}\oplus
  \repb{s_B}\oplus \repb{x_By_B}$ and let
  $\reps{z}=\repab{z_A}{z_B}$.

\item[Output:] 

  The parties retrieve $\reps{x}=\repab{x_A}{x_B}$. If $\Ali$ is to
  learn $x$, $\Bob$ reveals $x_B$. If $\Bob$ is to learn $x$, $\Ali$
  reveals $x_A$.

\end{description}
\end{boxfig}

\Subsubsection{Why the global key queries?}
The \DEAL box (\figref{FDEAL}) allows the adversary to guess the value
of the global key, and it informs it if its guess is correct. This is
needed for technical reasons: When \DEAL is proven UC secure, the
environment has access to either \DEAL or the protocol implementing
\DEAL. In both cases the environment learns the global keys $\Delta_A$
and $\Delta_B$. In particular, the environment learns $\Delta_A$ even
if \Bob is honest. This requires us to prove the sub-protocol for
\DEAL secure to an adversary knowing $\Delta_A$ even if \Bob is
honest: to be be able to do this, the simulator needs to recognize
$\Delta_A$ if it sees it---hence the global key queries. Note,
however, that in the context where we use \DEAL (\figref{PI2PC}), the
environment does \emph{not} learn the global key $\Delta_A$ when \Bob
is honest: A corrupted \Ali only sees MACs on one bit using the same
local key, so all MACs are uniformly random in the view of a corrupted
\Ali, and \Bob never makes the local keys public.

\Subsubsection{Amortized MAC checks.}
In the protocol of \figref{PI2PC}, there is no need to send MACs and
check them every time we do a ``reveal''. In fact, it is
straightforward to verify that before an \textbf{Output} command is
executed, the protocol is perfectly secure even if the MACs are not
checked. Notice then that a keyholder checks a MAC $M_x$ on a bit $x$
by computing $M_x' = K_x \oplus x \Delta$ and comparing $M_x'$ to the
$M_x$ which was sent along with $x$. These equality checks can be
deferred and amortized. Initially the MAC holder, e.g. \Ali, sets $N =
0^\kappa$ and the key holder, e.g. \Bob, sets $N' = 0^\kappa$. As long
as no \textbf{Output} command is executed, when \Ali reveals $x$ she
updates $N \leftarrow G(N, H(M_x))$ for the MAC $M_x$ she should have
sent along with $x$, and $\Bob$ updates $N' \leftarrow
G(N',H(M_x'))$. Before executing an \textbf{Output}, $\Ali$ sends $N$
to \Bob who aborts if $N \ne N'$. Security of this check is easily
proved in the random oracle model. The optimization brings the
communication complexity of the protocol down from $O(\kappa |C|)$ to
$O(|C|+ o \kappa)$, where $o$ is the number of rounds in which outputs
are opened. For a circuit of depth $O(\vert C \vert / \kappa)$, the
communication is $O(|C|)$.

\Subsubsection{Implementing \DEAL.}
In the following sections we show how to implement \DEAL.  In
\secref{abit} we implement just the part with the commands
\textbf{Authenticated Bits}. In \secref{aot} we show how to extend
with the \textbf{Authenticated OT} commands, by showing how to
implement many \AOTs from many \ABITs. In \secref{aand} we then show
how to extend with the \textbf{Authenticated local AND} commands, by
showing how to implement many \AANDs from many \ABITs. We describe the
extensions separately, but since they both maintain the value of the
global keys, they will produce \AANDs and \AOTs with the same keys as
the \ABITs used, giving an implementation of \DEAL.

\Section{Bit Authentication}\seclab{abit}
\begin{wrapfigure}[7]{r}{4cm}
\vspace{-2.9cm}
\begin{center}
\begin{tikzpicture}[scale=0.60]
% Draw the nodes
\node (abit) at (0, 0) {\ABIT};
\node (wabit) at (0, -2) {\WABIT};
\node (labit) at (0, -4) {\LABIT};
\node[anchor=west] (ot) at (0.5, -5.5) {\OT};
\node[anchor=east] (eq) at (-0.5, -5.5) {\EQ};

% Draw the arrows
\draw[->,>=stealth'] (wabit) -- (abit);
\draw[->,>=stealth'] (labit) -- (wabit);
\draw[->,>=stealth'] (ot) -- (labit);
\draw[->,>=stealth'] (eq) -- (labit);
\end{tikzpicture}
\vspace{-0.3cm}
\caption{\secref{abit} outline.}
\figlab{abitoutline}
\end{center}
\end{wrapfigure}
In this section we show how to efficiently implement (oblivious) bit
authentication, i.e., we want to be in a situation where \Ali knows
some bits $x_1,\ldots,x_{\ell}$ together with MACs $M_1,\ldots,
M_{\ell}$, while \Bob holds a global key $\Delta_A$ and local keys
$K_1,\ldots,K_{\ell}$ s.t.~$M_i=K_i\oplus x_i \Delta_A$, as described
in \DEAL (\figref{FDEAL}). Given the complete symmetry of
\DEAL, we only describe the case where \Ali is MAC holder.

If the parties were honest, we could do the following: $\Ali$ and $\Bob$ run
an OT where \Bob inputs the two messages $(K_i,K_i\oplus \Delta_A)$
and \Ali chooses $x_i$, to receive $M_i=K_i\oplus
x_i\Delta_A$. However, if \Bob is dishonest he might not use the same
$\Delta_A$ in all OTs. The main ideas that make the protocol secure
against cheating parties are the following:
\begin{enumerate}
\item 

  For reasons that will be apparent later, we will actually start in
  the opposite direction and let \Bob receive some authenticated bits
  $y_i$ using an OT, where \Ali is supposed to always use the same
  global key $\Gamma_B$. Thus an honest \Ali inputs $(L_i,L_i\oplus
  \Gamma_B)$ in the OTs and \Bob receives $N_i=L_i\oplus y_i
  \Gamma_B$. To check that \Ali is playing honest in most OTs, the
  authenticated bits are randomly paired and a check is performed,
  which restricts \Ali to cheat in at most a few OTs.

\item 

  We then notice that what \Ali gains by using different $\Gamma_B$'s
  in a few OTs is no more than learning a few of \Bob's bits $y_i$. We
  call this a leaky \ABIT, or \LABIT.

\item 

  We show how to turn this situation into an equivalent one where \Ali
  (not \Bob) receives authenticated random bits $x_i$'s (none of which
  leaks to \Bob) under a ``slightly insecure'' global key
  $\Gamma_A$. The insecurity comes from the fact that the leakage of
  the $y_i$'s turns into the leakage of a few bits of the global key
  $\Gamma_A$ towards \Ali. We call this an \ABIT with weak global key,
  or \WABIT. 
  %The step is done using OT extension,
%  meaning that we can go from a few authenticated bits
%  for \Bob to a very large number of authenticated bits for \Ali. Thus
%  this step is particularly important for the efficiency of our
%  construction.

\item 

  Using privacy amplification, we amplify the previous setting to a new
  one where \Ali receives authenticated bits under a (shorter) fully
  secure global key $\Delta_A$, where no bits of $\Delta_A$ are known
  to \Ali, finally implementing the \ABIT command of the dealer box.

\end{enumerate}
\longOnly{We will proceed in reverse order and start with step 4 in
the previous description: we will start with showing how we can turn
authenticated bits under an ``insecure'' global key $\Gamma_A$ into
authenticated bits under a ``secure'' (but shorter) global key
$\Delta_A$.}
\shortOnly{We will proceed in reverse order and start with step 4 in
the previous description.}

\Subsection{Bit Authentication with Weak Global Key
  (\texorpdfstring{\WABIT}{WaBit})}\seclab{wabit}

We will first define the box providing bit authentication, but where
some of the bits of the global key might leak. We call this box \WABIT
(bit authentication with weak global key) and we formally describe it
in~\figref{WABIT}.  The box $\WABIT^{\LL}(\ell,\tau)$ outputs $\ell$
bits with keys of length $\tau$. The box is also parametrized by a
class $\LL$ of leakage functions on $\tau$ bits. The box $\ABIT(\ell,
\psi)$ is the box $\WABIT^\LL(\ell,\psi)$ where $\LL$ is the class of
leakage functions that never leak.

\begin{boxfig}{The box $\WABIT^\LL(\ell,\tau)$ for Bit Authentication
    with Weak Global Key}{WABIT}
\begin{description}
\item[Honest Parties:]  \
\begin{enumerate}
\item 

  The box samples $\Gamma_A \inR \zo^\tau$ and outputs it to $\Bob$.

\item

  The box samples and outputs $\repa{x_1},\ldots,\repa{x_{\ell}}$.
  Each $\repa{x_i}=(x_i,M'_{i},K'_{i}) \in \zo^{1 + 2\tau}$
  s.t.~$M'_i=K'_i\oplus x_i \Gamma_A$.

\end{enumerate}
\item[Corrupted Parties:] \
\begin{enumerate}
\item

  If \Ali is corrupted, then \Ali may choose a leakage function $L \in
  \LL$. Then the box samples $(S,c) \leftarrow L$. If $c=0$ the box
  outputs $\texttt{fail}$ to \Bob and terminates. If $c=1$, the box
  outputs $\{ (i, (\Gamma_A)_{i}) \}_{i \in S}$ to \Ali.

\item 

  If \Ali is corrupted, then \Ali chooses the $x_i$ and the $M_i'$ and
  then $K_i' = M_i' \oplus x_i \Gamma_A$.

\item 

  If \Bob is corrupted, then \Bob chooses $\Gamma_A$ and the $K_i'$.

\end{enumerate}
\item[Global Key Queries:] The adversary can input $\Gamma$ and will
  be told if $\Gamma = \Gamma_A$.
\end{description}
\end{boxfig}
\vspace{-0.7cm}
\begin{boxfig}{Subprotocol for reducing $\ABIT(\ell,\psi)$ to
    $\WABIT^\LL(\ell,\tau)$.}{SUBPROTOA}
\begin{enumerate}
\item

  The parties invoke $\WABIT^\LL(\ell,\tau)$ with $\tau = \frac{22}3
  \psi$. The output to \Ali is $((M'_1, x_1), \ldots,
  (M'_\ell,x_\ell))$. The output to \Bob is $(\Gamma_A, K'_1, \ldots,
  K'_\ell)$.

\item

  \Bob samples $\B{A} \inR \zo^{\psi \times \tau}$, a random binary
  matrix with $\psi$ rows and $\tau$ columns, and sends $\B{A}$ to
  \Ali.

\item

  \Ali computes $M_i = \B{A} M'_i \in \zo^{\psi}$ and outputs $((M_1,
  x_1), \ldots, (M_\ell,x_\ell))$.

\item

  \Bob computes $\Delta_A = \B{A} \Gamma_A$ and $K_i = \B{A} K'_i$ and
  outputs $(\Delta_A, K_1, \ldots, K_\ell)$.

\end{enumerate}
\end{boxfig}
\vspace{-0.7cm}

In~\figref{SUBPROTOA} we describe a protocol which takes a box \WABIT, where one
quarter of the bits of the global key might leak, and amplifies it to
a box \ABIT where the global key is perfectly secret.
The protocol is described for general
$\LL$ and it is parametrized by a desired security level $\psi$. The
proof of the following theorem can be found in
\appref{proofofTHMWABIT}.

\begin{theorem}\thmlab{THMWABIT}
  Let $\tau = \frac{22}{3} \psi$ and $\LL$ be a $\left(\frac34
    \tau\right)$-secure leakage function on $\tau$ bits. The protocol
  in \figref{SUBPROTOA} securely implements $\ABIT(\ell,\psi)$ in the
  $\WABIT^\LL(\ell,\tau)$-hybrid model with security parameter
  $\psi$. The communication is $O(\psi^2)$ and the work is $O(\psi^2
  \ell)$.
\end{theorem}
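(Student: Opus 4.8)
The plan is to build a UC simulator for each corruption pattern; correctness and complexity are immediate, the honest-both and corrupted-\Bob cases give perfect simulations, and the entire content of the theorem sits in the corrupted-\Ali case, which reduces to a leftover-hash-style privacy-amplification bound. For correctness, since $M'_i = K'_i \oplus x_i\Gamma_A$ and the linear maps commute with $\oplus$, we get $M_i = \B{A}M'_i = \B{A}K'_i \oplus x_i(\B{A}\Gamma_A) = K_i \oplus x_i\Delta_A$, so the output tuples $(x_i,M_i,K_i)$ have exactly the form $\ABIT(\ell,\psi)$ promises, with $\Delta_A = \B{A}\Gamma_A$ as the (now $\psi$-bit) global key; sending $\B{A}$ costs $O(\psi\tau)=O(\psi^2)$ bits and the $\ell$ matrix-vector products cost $O(\psi^2\ell)$ work. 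If both parties are honest the simulator merely samples a uniform $\B{A}$. If \Bob is corrupted, the simulator runs the simulated $\WABIT^\LL$ box, reads \Bob's choices $(\Gamma_A,\{K'_i\})$ and the (arbitrary) matrix $\B{A}^*$ that \Bob sends, sets $\Delta_A = \B{A}^*\Gamma_A$ and $K_i = \B{A}^*K'_i$, and feeds these to $\ABIT$ as the corrupted key holder's choice; since honest \Ali's real output is a uniform $x_i$ together with $M_i = \B{A}^*M'_i = K_i \oplus x_i\Delta_A$, and $\ABIT$ stores exactly the $\Delta_A$ supplied, this is perfect, with global-key queries answered against the simulated $\Gamma_A$ on $\tau$-bit inputs and forwarded to $\ABIT$ on $\psi$-bit inputs.

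For corrupted \Ali, the simulator internally plays the $\WABIT^\LL$ box: it samples $\Gamma_A \inR \zo^\tau$; on \Ali's leakage query $L\in\LL$ it samples $(S,c)\leftarrow L$, makes honest \Bob abort if $c=0$, and otherwise discloses $\{(i,(\Gamma_A)_i)\}_{i\in S}$ to \Ali; it records \Ali's chosen $(x_i,M'_i)$. Then, acting as honest \Bob, it sends a uniform $\B{A}$ (it cannot do better, as it never learns the $\Delta_A$ that $\ABIT$ hands to honest \Bob), computes $M_i = \B{A}M'_i$, and inputs $(x_i,M_i)$ to $\ABIT$. The point is that \Ali's view consists only of the leaked bits of $\Gamma_A$ and of $\B{A}$, hence is distributed identically in the two worlds, while honest \Bob's output is $(\Delta_A,\{M_i\oplus x_i\Delta_A\})$ in both worlds, with $\Delta_A = \B{A}\Gamma_A$ in the real execution and $\Delta_A$ a fresh uniform $\psi$-bit string in the ideal one. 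So the environment's distinguishing advantage equals the statistical distance between $(\text{leaked bits of }\Gamma_A,\ \B{A},\ \B{A}\Gamma_A)$ and $(\text{leaked bits of }\Gamma_A,\ \B{A},\ U_\psi)$, where $U_\psi$ is uniform on $\zo^\psi$.

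This last distance is controlled by the leftover hash lemma: $\B{A}\inR\zo^{\psi\times\tau}$ is a universal hash family $\zo^\tau\to\zo^\psi$, independent of $\Gamma_A$ and of the leakage, and conditioned on $c=1$ and $|S|=s$ the bits of $\Gamma_A$ outside $S$ are still uniform, so $\Hmin(\Gamma_A\mid\text{leakage}) = \tau-s$ and the conditional distance is at most $\tfrac12 2^{(\psi-\tau+s)/2}$. Averaging over $(S,c)\leftarrow L$ and maximizing over $L$ bounds the total distance by $\tfrac12 2^{(\psi-\tau)/2}\max_L \E{c\,2^{|S|/2}}$; writing $c\,2^{|S|/2} = \sqrt{c}\cdot\sqrt{c\,2^{|S|}}$ (valid since $c\in\zo$) and applying Cauchy--Schwarz gives $\E{c\,2^{|S|/2}}\le\sqrt{\E{c}}\,\sqrt{\E{c\,2^{|S|}}}\le 2^{\leak_\LL/2}$, using $\E{c}\le 1$ and the identity $\leak_\LL = \max_L\log_2\E{c\,2^{|S|}}$ established in \secref{leak}. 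Since $\LL$ is $\bigl(\tfrac34\tau\bigr)$-secure, $\leak_\LL\le\tfrac14\tau$, so the total distance is at most $\tfrac12 2^{(\psi-(3/4)\tau)/2}$, and with $\tau = \tfrac{22}{3}\psi$ this is $\tfrac12 2^{-9\psi/4}$, comfortably within the required $\poly(\psi)2^{-\psi}$. I expect the main obstacle to be the bookkeeping of the corrupted-\Ali hybrid: carefully arguing that \Ali's adaptively chosen $(x_i,M'_i)$ --- which may depend on the leaked bits but, by the message order in \figref{SUBPROTOA}, not on $\B{A}$ --- leave the unleaked part of $\Gamma_A$ uniform, and that every entry of \Bob's output really is the stated deterministic function of $\Delta_A$ and the already-fixed values; the $c=0$ case is then identical in both worlds and the $\Gamma_A=0$ corner case is absorbed into the same negligible bound, after which the quantitative claim is just the leftover hash lemma and the one-line Cauchy--Schwarz estimate above.
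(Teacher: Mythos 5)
Your proposal is correct, and it takes a genuinely different route to the core probabilistic bound from the one in the paper.

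The paper's proof argues that $\Delta_A = \B{A}\Gamma_A$ is \emph{exactly} uniform to a corrupted \Ali conditioned on a ``good'' event of probability $\geq 1 - 2^{2-\psi}$. Concretely, it writes $\Delta_A = \B{A}^S\Gamma_A \oplus \B{A}^{\overline{S}}\Gamma_A$, where \Ali knows the first summand, and then shows (i) the event $\{|S|\geq\tau-n\ \wedge\ c=1\}$ with $n=\tfrac92\psi$ has probability $\leq 2^{-\psi}$, via a Jensen/Markov-style manipulation of $\E{c\,2^{|S|}}$ against the $\leak_\LL$ bound, and (ii) $\tfrac92\psi$ uniformly random columns of $\zo^\psi$ span $\zo^\psi$ except with probability $2^{1-\psi}$, via a random-walk/Chernoff estimate; when they do span, the unleaked part of $\Gamma_A$ pushes forward to an exactly uniform $\Delta_A$. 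You instead invoke the leftover hash lemma directly on $\B{A}$ as a universal hash family $\zo^\tau\to\zo^\psi$, get the conditional distance $\tfrac12 2^{(\psi-\tau+s)/2}$ as a function of $s=|S|$, and then use a one-line Cauchy--Schwarz step, $\E{c\,2^{|S|/2}} \leq \sqrt{\E{c}}\sqrt{\E{c\,2^{|S|}}} \leq 2^{\leak_\LL/2}$, to average over $(S,c)\leftarrow L$ and plug in $\leak_\LL\leq\tfrac14\tau$; this yields $\tfrac12 2^{-9\psi/4}$, a tighter constant than the paper's $2^{2-\psi}$. What each buys: your route is shorter, needs no spanning/random-walk lemma, and is the standard privacy-amplification template, but it delivers only statistical closeness; the paper's route is more hands-on but exhibits the ``with overwhelming probability the simulation is \emph{perfect}'' decomposition explicitly, which is convenient when turning the distance bound into a full UC simulator. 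Your arithmetic checks out ($\tau=\tfrac{22}{3}\psi$ gives $\psi-\tfrac34\tau=-\tfrac92\psi$), your observation that \Ali's adaptive choice of $(x_i,M'_i)$ precedes seeing $\B{A}$ and hence leaves $(\Gamma_A)_{\overline S}$ uniform is the right hybrid bookkeeping, and the reduction of \Bob's output to a deterministic function of $\Delta_A$ and \Ali's fixed inputs is exactly what collapses the distinguishing advantage to the stated three-tuple statistical distance. The one place both your write-up and the paper's are terse is the treatment of the global-key query interface in the corrupted-\Ali case (mapping a $\tau$-bit query $\Gamma$ through $\B{A}$ to a $\psi$-bit query on $\ABIT$); the paper calls this a perfect simulation and moves on, and you fold it into the bookkeeping remark---either way it does not affect the main estimate, but it deserves a sentence in a polished version.
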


\Subsection{Bit Authentication with Leaking Bits
  (\texorpdfstring{\LABIT}{LaBit})}\seclab{labit}

We now show another insecure box for \ABIT. The new box is insecure in
the sense that a few of the bits to be authenticated might leak to the
other party. We call this box an \ABIT with leaking bits, or $\LABIT$
and formally describe it in \figref{LABIT}. The box
$\LABIT^\LL(\tau,\ell)$ outputs $\tau$ authenticated bits with keys of
length $\ell$, and is parametrized by a class of leakage functions
$\LL$ on $\tau$-bits. We show that $\WABIT^\LL$ can be reduced to
$\LABIT^\LL$. In the reduction, a \LABIT that outputs authenticated
bits $\repb{y_i}$ to \Bob can be turned into a \WABIT that outputs
authenticated bits $\repa{x_j}$ to \Ali, therefore we present the
\LABIT box that outputs bits to \Bob. The reduction is strongly
inspired by the OT extension techniques
in~\cite{DBLP:conf/crypto/IshaiKNP03}.

\begin{boxfig}{The box $\LABIT^\LL(\tau,\ell)$ 
for Bit Authentication with Leaking Bits}{LABIT}
\begin{description}
\item[Honest Parties:] \
  \begin{enumerate}
  \item 

    The box samples $\Gamma_B \inR \zo^\ell$ and outputs it to $\Ali$.

  \item

    The box samples and outputs $\repb{y_1},\ldots,\repb{y_{\tau}}$.
    Each $\repb{y_i}=(y_i,N_{i},L_{i}) \in \zo^{1+2\ell}$
    s.t.~$N_i=L_i\oplus y_i \Gamma_B$.

  \end{enumerate}

\item[Corrupted Parties:] \
  \begin{enumerate}
  \item

    If \Ali is corrupted, then \Ali may input a leakage function $L
    \in \LL$. Then the box samples $(S,c) \leftarrow L$. If $c=0$ the
    box outputs $\texttt{fail}$ to \Bob and terminates. If $c=1$, the
    box outputs $\{ (i, y_i) \}_{i \in S}$ to \Ali.

  \item

    Corrupted parties get to specify their outputs as in \figref{WABIT}.

  \end{enumerate}

\item[Choice Bit Queries:] The adversary can input $\Delta$ and will
  be told if $\Delta = (y_1, \ldots, y_\tau)$.
\end{description}
\end{boxfig}
%\vspace{-0.3cm}

\begin{boxfig}{Subprotocol for reducing $\WABIT^\LL(\ell,\tau)$ to 
    $\LABIT^\LL(\tau,\ell)$}{WABITLABIT}
  \begin{enumerate}
  \item

    \Ali and \Bob invoke $\LABIT^{\LL}(\tau, \ell)$. \Bob learns
    $((N_1, y_1), \ldots, (N_\tau,y_\tau))$ and \Ali learns
    $(\Gamma_B, L_1, \ldots, L_\tau)$.

  \item  
    
    $\Ali$ lets $x_j$ be the $j$-th bit of $\Gamma_B$ and $M_j$ the
    string consisting of the $j$-th bits from all the strings $L_i$,
    i.e.~$M_j~=~L_{1,j}||L_{2,j}||\ldots||L_{\ell,j}$.
   
  \item

    \Bob lets $\Gamma_A$ be the string consisting of all the bits
    $y_i$, i.e.~$\Gamma_A=y_1||y_2||\ldots||y_\ell$, and lets $K_j$
    be the string consisting of the $j$-th bits from all the strings
    $N_i$, i.e.~$K_j= N_{1,j}||N_{2,j}||\ldots||N_{\ell,j}$.
    
  \item 

    \Ali and \Bob now hold $\repa{x_j} = (x_j, M_j, K_j)$ for $j = 1,
    \ldots, \ell$.

\end{enumerate}
\end{boxfig}
\vspace{-0.3cm}

\begin{theorem}
  For all $\ell$, $\tau$ and $\LL$ the boxes $\WABIT^\LL(\ell,\tau)$
  and $\LABIT^\LL(\tau,\ell)$ are linear locally equivalent, i.e., can
  be implemented given the other in linear time without interaction.
\end{theorem}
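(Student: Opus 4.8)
The plan is to exhibit both reductions explicitly and check that each is a perfect, non-interactive, linear-time UC implementation. One direction is already displayed in \figref{WABITLABIT}: it calls $\LABIT^\LL(\tau,\ell)$ once and then has the parties \emph{transpose} the MAC matrix (\Ali forms the columns of the $L_i$'s into the $M_j$'s and reads off $\Gamma_B$ as the list of $x_j$'s; \Bob forms the columns of the $N_i$'s into the $K_j$'s and reads off the $y_i$'s as the list forming $\Gamma_A$). For the opposite direction I would use the mirror-image protocol: call $\WABIT^\LL(\ell,\tau)$ once, obtaining $\repa{x_1},\ldots,\repa{x_\ell}$ with $M'_i = K'_i \oplus x_i\Gamma_A$ and $\Gamma_A\in\zo^\tau$ held by \Bob; then \Ali sets $\Gamma_B = x_1 x_2\cdots x_\ell$ and $L_j = (M'_{1,j},\ldots,M'_{\ell,j})$, while \Bob sets $y_j = (\Gamma_A)_j$ and $N_j = (K'_{1,j},\ldots,K'_{\ell,j})$. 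In both protocols every step other than the single box call is a local relabeling/transposition of bit-strings, so there is no communication, and each party's work is linear in the size of its input and output (the bulk of which is the $\tau\times\ell$ MAC matrix); this is exactly what \emph{linear locally equivalent} asks for.

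The only computation worth performing is the verification that transposition preserves the MAC/key relation. For the \figref{WABITLABIT} direction, write the \LABIT relation $N_i = L_i\oplus y_i\Gamma_B$ entrywise as $N_{i,j} = L_{i,j}\oplus y_i(\Gamma_B)_j$. With $x_j=(\Gamma_B)_j$, $M_j=(L_{1,j},\ldots,L_{\tau,j})$, $\Gamma_A=(y_1,\ldots,y_\tau)$ and $K_j=(N_{1,j},\ldots,N_{\tau,j})$, the $i$-th coordinate of $K_j\oplus x_j\Gamma_A$ is $N_{i,j}\oplus(\Gamma_B)_j y_i = L_{i,j}$, i.e.\ $M_j = K_j\oplus x_j\Gamma_A$, which is precisely the \WABIT relation. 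The mirror direction is identical after swapping the roles of $(x,M',K',\Gamma_A)$ and $(y,N,L,\Gamma_B)$. Since the honest box samples its global key and all bits/keys uniformly, and transposition is a bijection on these data, honest executions of either protocol produce outputs with the correct structure and with exactly the target box's output distribution.

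The remaining work — and the one place where care rather than cleverness is needed — is matching the corrupted-party interfaces and the auxiliary oracles. The governing observation is that transposition swaps the \emph{bit} of the source box with a \emph{global-key coordinate} of the target box: \Bob's authenticated bits $y_1,\ldots,y_\tau$ become exactly the coordinates of the new global key $\Gamma_A$ (and symmetrically in the other direction). Hence the \textbf{Choice Bit Queries} of \figref{LABIT} (``is $\Delta=(y_1,\ldots,y_\tau)$?'') and the \textbf{Global Key Queries} of \figref{WABIT} (``is $\Gamma=\Gamma_A$?'') are the same oracle and can be forwarded verbatim by the simulator; likewise the leakage interface of a corrupted \Ali matches, since both boxes are parametrized by the \emph{same} class $\LL$ of leakage functions on $\tau$ bits and ``leak $y_i$ for $i\in S$'' in \LABIT is literally the event ``leak $(\Gamma_A)_i$ for $i\in S$'' in \WABIT. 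For the rest of the corrupted-party freedom, both boxes already allow a corrupted party to pick its outputs subject only to the MAC-consistency equation, so the simulator merely reads the corrupted party's chosen source-box outputs off the transcript, transposes them, and hands them to the target box; the induced distribution is identical. Doing this for both the \Ali-corrupted and \Bob-corrupted cases in each direction establishes the claimed linear local equivalence.
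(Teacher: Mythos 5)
Your proposal is correct and follows essentially the same approach as the paper: one direction is the transposition protocol of \figref{WABITLABIT}, the other is its mirror image (which the paper only gestures at by saying the renamings are reversible in linear time), the MAC/key identity is verified coordinate-wise exactly as you do, and the identification of choice-bit leakage/queries in \LABIT with global-key leakage/queries in \WABIT under the transposition is the paper's observation too. The only note is that your version fixes a small index typo in \figref{WABITLABIT}, where the concatenations should run to $\tau$ rather than $\ell$, as you correctly write.
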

\begin{proof}
  The first direction (reducing \WABIT to \LABIT) is shown
  in~\figref{WABITLABIT}. The other direction (\LABIT is linear
  locally reducible to \WABIT) will follow by the fact that the local
  transformations are reversible in linear time.  One can check that
  for all $j=1, \ldots,\tau$, $\repa{x_j}$ is a correct authenticated
  bit. Namely, from the box \LABIT we get that for all $i=1, \ldots,
  \ell$, $N_i = L_i \oplus y_i \Gamma_B$. In particular the $j$-th bit
  satisfies $N_{i,j} = L_{i,j} \oplus y_i (\Gamma_B)_j$, which can be
  rewritten (using the same renaming as in the protocol) as $K_{j,i} =
  M_{j,i} \oplus (\Gamma_A)_i x_j$, and therefore $M_j=K_j\oplus x_j
  \Gamma_A$, as we want.  It is easy so see (as the protocol only
  consists of renamings) that leakage on the choice bits is equivalent
  to leakage on the global key under this transformation, and guesses
  on $\Gamma_A$ are equivalent to guesses on $(y_1, \ldots, y_\tau)$,
  so giving a simulation argument is straight-forward when $\LL$ is
  the same for both boxes.\putbox
\end{proof}

Note that since we turn $\LABIT^\LL(\ell, \tau)$ into $\WABIT^\LL(\tau,
\ell)$, if we choose $\ell = \poly(\psi)$ we can turn a relatively
small number ($\tau = \frac{22}{3}\psi$) of authenticated bits towards one
player into a very larger number ($\ell$) of authenticated bits
towards the other player.

\Subsection{A Protocol For Bit Authentication With Leaking Bits}

In this section we show how to construct authenticated bits starting
from OTs. The protocol ensures that most of the authenticated bits
will be kept secret, as specified by the \LABIT box in \figref{LABIT}.

The main idea of the protocol, described in \figref{LABITPROT}, is the
following: many authenticated bits $\repb{y_i}$ for \Bob are created
using OTs, where \Ali is supposed to input messages
$(L_i,L_i\oplus\Gamma_B)$. To check that \Ali is using the same
$\Gamma_B$ in every OT, the authenticated bits are randomly
paired. Given a pair of authenticated bits $\repb{y_i},\repb{y_j}$,
\Ali and \Bob compute $\repb{z_i} = \repb{y_i} \oplus \repb{y_j}
\oplus d_i$ where $d_i = y_i\oplus y_j$ is announced by \Bob.  If \Ali
behaved honestly, she knows the MAC that $\Bob$ holds on $z_i$,
otherwise she has $1$ bit of entropy on this MAC, as shown below.  The
parties can check if \Ali knows the MAC using the \EQ box described
in App.~\ref{sec:uc}. As \Bob reveals $y_i \oplus y_j$, they waste
$\repb{y_j}$ and only use $\repb{y_i}$ as output from the
protocol---as $y_j$ is uniformly random $y_i \oplus y_j$ leaks no
information on $y_i$. Note that we cannot simply let \Ali reveal the
MAC on $z_i$, as a malicious \Bob could announce $1 \oplus z_i$: this
would allow \Bob to learn a MAC on $z_i$ and $1\oplus z_i$ at the same
time, thus leaking $\Gamma_B$. Using \EQ forces a thus cheating \Bob
to guess the MAC on a bit which he did not see, which he can do only
with negligible probability $2^{-\ell}$.

\newcommand{\Tau}{{\mathcal{T}}}

\begin{boxfig}{The protocol for reducing $\LABIT(\tau,\ell)$ to 
$\OT(2\tau,\ell)$ and $\EQ(\tau\ell)$.}{LABITPROT} 
\begin{enumerate}
\item
  
  \Ali samples $\Gamma_B \inR \zo^\ell$ and for $i = 1, \ldots, \Tau$
  samples $L_i \inR \zo^\ell$, where $\Tau = 2\tau$.

\item

  \Bob samples $(y_1, \ldots, y_\Tau) \inR \zo^{\Tau}$.

\item

  They run $\Tau$ OTs, where for $i = 1, \ldots, \Tau$ party \Ali
  offers $(Y_{i,0},Y_{i,1}) = (L_i, L_i \oplus \Gamma_B)$ and \Bob
  selects $y_i$ and receives $N_i=Y_{i,y_i} = L_i \oplus
  y_{i}\Gamma_B$. Let $\repb{y_1},\ldots,\repb{y_\Tau}$ be the
  candidate authenticated bits produced so far.

\item

  \Bob picks a uniformly random pairing $\pi$ (a permutation $\pi:
  \{1, \ldots, \Tau \} \rightarrow \{1, \ldots, \Tau \}$ where
  $\forall i, \pi(\pi(i))=i$), and sends $\pi$ to \Ali.  Given a
  pairing $\pi$, let $\SS(\pi) = \{ i \vert i \le \pi(i) \}$, i.e., for
  each pair, add the smallest index to $\SS(\pi)$.

\item

  For all $\tau$ indices $i \in \SS(\pi)$:
  \begin{enumerate}
  \item 

    \Bob announces $d_i = y_{i} \oplus y_{\pi(i)}$.
  
  \item 

    \Ali and \Bob compute $\repb{z_i}=\repb{y_{i}} \oplus
    \repb{y_{\pi(i)}} \oplus d_i$.

  \item 

    Let $Z_i$ and $W_i$ be the MAC and the local key for $z_i$ held by
    \Ali respectively \Bob. They compare these using \EQ and abort if
    they are different.

  \end{enumerate}
  The $\tau$ comparisons are done using \emph{one} call on the $\tau\ell$-bit
  strings $( Z_i )_{i \in \SS(\pi)}$ and $( W_i )_{i \in \SS(\pi)}$.

\item

  For all $i \in \SS(\pi)$ \Ali and \Bob output $\repb{y_{i}}$.

\end{enumerate}

\end{boxfig}

Note that if \Ali uses different $\Gamma_B$ in two paired instances,
$\Gamma_i$ and $\Gamma_j$ say, then the MAC held by \Bob on $y_{i}
\oplus y_{j}$ (and therefore also $z_i$) is $(L_i \oplus y_i \Gamma_i)
\oplus (L_j \oplus y_j \Gamma_j) = (L_i \oplus L_j) \oplus (y_i \oplus
y_j) \Gamma_j \oplus y_i (\Gamma_i \oplus \Gamma_j)$.  Since
$(\Gamma_i \oplus \Gamma_j) \ne 0^\ell$ and $y_i \oplus y_j$ is fixed
by announcing $d_i$, guessing this MAC is equivalent to guessing
$y_i$.  As \Ali only knows $L_i, L_j, \Gamma_i, \Gamma_j$ and $y_i
\oplus y_j$, she cannot guess $y_i$ with probability better than
$1/2$.  Therefore, if \Ali cheats in many OTs, she will get caught
with high probability. If she only cheats on a few instances she might
pass the test. Doing so confirms her guess on $y_i$ in the pairs where
she cheated. Now assume that she cheated in instance $i$ and offered
$(L_i,L_i\oplus\Gamma_B')$ instead of $(L_i,L_i\oplus\Gamma_B)$. After
getting her guess on $y_i$ confirmed she can explain the run as an
honest run: If $y_i = 0$, the run is equivalent to having offered
$(L_i,L_i\oplus\Gamma_B)$, as \Bob gets no information on the second
message when $y_i = 0$. If $y_i = 1$, then the run is equivalent to
having offered $(L_i',L_i'\oplus\Gamma_B)$ with $L_i' = L_i \oplus
(\Gamma_B \oplus \Gamma_B')$, as $L_i' \oplus \Gamma_B = L_i \oplus
\Gamma_B$ and \Bob gets no information on the first message when $y_i
= 1$.  So, any cheating strategy of \Ali can be simulated by letting
her honestly use the same $\Gamma_B$ in all pairs and then let her try
to guess some bits $y_i$. If she guesses wrong, the deviation is
reported to \Bob. If she guesses right, she is told so and the
deviation is not reported to \Bob. This, in turn, can be captured
using some appropriate class of leakage functions $\LL$. Nailing down
the exact $\LL$ needed to simulate a given behavior of \Ali, including
defining what is the ``right'' $\Gamma_B$, and showing that the needed
$\LL$ is always $\kappa$-secure is a relatively straight-forward but
very tedious business. The proof of the following theorem can be found
in \appref{proofofbitauth}.

\begin{theorem} \thmlab{bitauth} Let $\kappa=\frac34 \tau$, and let
  $\LL$ be a $\kappa$ secure leakage function on $\tau$ bits.  The
  protocol in \figref{LABITPROT} securely implements
  $\LABIT^{\LL}(\tau,\ell)$ in the $(\OT(2\tau,\ell),\EQ(\tau
  \ell))$-hybrid model. The communication is $O(\tau^2)$. The work is
  $O(\tau \ell)$.
\end{theorem}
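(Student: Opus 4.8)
The plan is to exhibit, in the $(\OT(2\tau,\ell),\EQ(\tau\ell))$-hybrid model, a UC simulator for each of the three corruption patterns (both parties honest, $\Bob$ corrupted, $\Ali$ corrupted), to argue indistinguishability of the resulting transcript from a run with $\LABIT^\LL(\tau,\ell)$, and finally to verify that the class $\LL$ of leakage functions that the simulator needs is $\kappa$-secure for $\kappa=\tfrac34\tau$. The complexity bounds are immediate: the $\OT$ and $\EQ$ calls are hybrid calls, so the only real communication is the pairing $\pi$ and the $\tau$ bits $d_i$ (bounding $\pi$ crudely gives $O(\tau^2)$), and the only real local work is forming the $N_i$ and the $z_i$, giving $O(\tau\ell)$.

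The two easy cases first. With both parties honest the simulator just samples everything and forwards the box's outputs; privacy holds because for $i\in\SS(\pi)$ the partner $y_{\pi(i)}$ is uniform and unused, so $d_i=y_i\oplus y_{\pi(i)}$ hides $y_i$, and because $z_i=0$ always, so the $\EQ$ call reveals nothing. With $\Bob$ corrupted the simulator plays honest $\Ali$: it samples $\Gamma_B$ and the $L_i$, extracts $\Bob$'s choice bits $y_i$ from the simulated $\OT$, sends a random $\pi$, receives the $d_i$'s, and simulates $\EQ$. Honest $\Ali$'s $\EQ$-input equals $L_i\oplus L_{\pi(i)}\oplus d_i\Gamma_B$, which the simulator knows, so if $\Bob$ announced an incorrect $d_i$ then making $\EQ$ accept forces $\Bob$ to supply a value differing from his honest one by $\Gamma_B\neq0^\ell$; as his view is independent of the uniform $\Gamma_B$ he succeeds with probability $2^{-\ell}$, and otherwise $\EQ$ leaks $\Ali$'s value (computable from $\Gamma_B,L_i$) and both abort. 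When $\Bob$ behaves, the simulator feeds the box $\Bob$'s chosen $(y_i,N_i)$ for $i\in\SS(\pi)$; the box re-draws $\Gamma_B$ and sets $L_i=N_i\oplus y_i\Gamma_B$, which matches the real world because in $\Bob$'s view the $N_i$ are already uniform and independent of $\Gamma_B$.

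The corrupted-$\Ali$ case is the core. The simulator extracts $\Ali$'s $2\tau$ pairs $(Y_{i,0},Y_{i,1})$, puts $\Gamma^{(i)}=Y_{i,0}\oplus Y_{i,1}$, declares the ``real'' global key $\Gamma_B$ to be a most frequent value among the $\Gamma^{(i)}$ (so the deviant set $C$ has $|C|\le\tau$), sends a uniform $\pi$ and uniform $d_i$'s, and then reads $\Ali$'s $\EQ$-input. Using the pair-by-pair analysis already spelled out in the text, each pair is a \emph{good} pair (both $\Gamma^{(i)}=\Gamma_B$; $\Ali$'s entry is forced), a \emph{mixed} pair ($\Gamma^{(i)}\neq\Gamma^{(\pi(i))}$; $\Ali$'s entry commits her to a guess $g_i$ of the surviving $y_i$, correct iff $y_i=g_i$), or a \emph{matched-deviation} pair ($\Gamma^{(i)}=\Gamma^{(\pi(i))}\neq\Gamma_B$; $\Ali$ passes for free but the surviving output would sit under the wrong key). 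The simulator packages this as a leakage function $L$ that reports detection ($c=0$) unless every mixed-pair guess is right, and on non-detection leaks ($S$) the surviving indices of all mixed and matched-deviation pairs. It then calls $\LABIT^\LL$: if $c=0$ it forces the $\EQ$ mismatch and the abort, filling in $\Bob$'s leaked values with uniform bits (legitimate, since an aborting $\Bob$ produces no output); if $c=1$ it uses the revealed $\{(i,y_i)\}_{i\in S}$ to report $\Gamma_B$ together with keys $L_i$, namely $Y_{i,0}$ on non-deviant surviving indices and $Y_{i,y_i}\oplus y_i\Gamma_B$ on the deviant ones, which is exactly why those indices had to be in $S$. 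Indistinguishability then follows because the $d_i$'s and the $\EQ$ verdict are identically distributed on both sides and the box's honest-$\Bob$ output is consistent with the reported keys.

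What is left, and this is the one genuinely delicate point (and the one the text flags as ``tedious''), is to show $\leak_\LL=\max_{L}\log_2\E{c\,2^{|S|}}\le\tfrac14\tau$ for the class $\LL$ of all induced $L$'s; this is precisely what forces $\Tau=2\tau$ and $\kappa=\tfrac34\tau$. For a fixed deviation pattern, conditioning on the uniform pairing $\pi$, detection fails with probability $2^{-p}$ where $p$ is the number of mixed pairs, and then $|S|=p+b$ where $b$ is the number of matched-deviation pairs, so $\E{c\,2^{|S|}\mid\pi}=2^{b}$ and $\leak_\LL=\max\log_2\E{2^{b(\pi,C)}}$ over deviation patterns, the expectation being over $\pi$. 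Since $b$ is bounded by the number of edges of the uniform perfect matching $\pi$ lying inside a single $\Gamma$-class other than $\Gamma_B$, and $|C|\le\tau$, a negative-association / concentration estimate for this matching statistic gives $\E{2^{b}}\le2^{\tau/4}$; with $\kappa=\tfrac34\tau$ this reads $\tau-\leak_\LL\ge\kappa$, so by \secref{leak} the box $\LABIT^\LL$ hides $(y_1,\dots,y_\tau)$ up to advantage $2^{-\kappa}$, matching what a cheating $\Ali$ can extract in the protocol. I expect this combinatorial estimate to be the main obstacle; everything else is bookkeeping of the kind the excerpt already sketches.
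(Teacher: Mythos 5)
Your simulator outline and the reduction of $\leak_\LL$ to $\max_\col\log_2\E_\pi[2^{b}]$ match the structure of the paper's argument: the paper packages the simulation via a chain of intermediate boxes $\func{IB1}\to\func{IB2}\to\func{IB3}\to\LABIT^\LL$, and it introduces a per-ball $\{0,1\}$ price function $\price_{\col,\pi}(j)$ whose sum over $j\in\SS(\pi)$ is exactly $\tau-b$, so $\success_\col=\E_\pi[2^{-\price_\col}]=2^{-\tau}\E_\pi[2^b]$; your identity $\E[c\,2^{|S|}\mid\pi]=2^{b}$ is the same computation in different notation, and your three-way classification of pairs (good / mixed / matched-deviation) corresponds to the paper's three cases of the price. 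So the bookkeeping is faithful.

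The genuine gap is exactly where you flag it, and it cannot be deferred to an unproved ``negative-association / concentration estimate.'' This estimate \emph{is} the theorem. The paper's own proof of it applies Jensen's inequality to $\phi(x)=2^{-x}$ claiming $\phi$ is \emph{concave}, to conclude $\E_\pi[\phi(\price_\col)]\le\phi(\E_\pi[\price_\col])$; but $\phi$ is convex, so Jensen gives the opposite inequality, and a lower bound on $\E_\pi[\price_\col]$ (which is what the paper then optimizes over colorings) does \emph{not} yield an upper bound on $\E_\pi[2^{-\price_\col}]=2^{-\tau}\E_\pi[2^b]$. Indeed, at the paper's identified extremal coloring ($a_0=a_1=\tau$, other classes empty) the statistic $b$ has mean roughly $\tau/4$ but strictly positive variance, so $\E_\pi[2^b]>2^{\E_\pi[b]}$; a direct count at $\tau=6$ gives $\E_\pi[2^b]=29520/10395\approx2.84>2^{3/2}\approx2.83$, so the inequality $\E_\pi[2^b]\le2^{\tau/4}$ already fails there. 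To make the theorem go through you would have to carry out the exponential-moment (or explicit tail) computation for the number of matched-deviation pairs under a uniform perfect matching, and the constant relating $\kappa$ and $\tau$ would need to be re-derived; it is not safe to import $\kappa=\tfrac34\tau$ from the statement. This missing estimate is the one step that required a new idea, and it is absent from your proposal.
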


\begin{corollary} \label{cor:total}
  Let $\psi$ denote the security parameter and let $\ell =
  \poly(\psi)$. The box $\ABIT(\ell,\psi)$ can be reduced to
  $(\OT(\frac{44}{3} \psi, \psi),\EQ(\psi))$. The communication is
  $O(\psi\ell + \psi^2)$ and the work is $O(\psi^2 \ell)$.
\end{corollary}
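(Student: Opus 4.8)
The plan is to obtain the corollary essentially for free by \emph{chaining} the three reductions of this section and then \emph{shrinking} the message length of the seed OTs and the input length of the equality check down to $\psi$. Fix $\tau = \frac{22}{3}\psi$. By \thmref{bitauth} the protocol of \figref{LABITPROT} securely implements $\LABIT^{\LL}(\tau,\ell)$ in the $(\OT(2\tau,\ell),\EQ(\tau\ell))$-hybrid model, for the leakage class $\LL$ induced by \Ali's cheating strategies, and --- this is the point --- the same theorem certifies that this $\LL$ is $\left(\frac34\tau\right)$-secure on $\tau$ bits. That is exactly the hypothesis on $\LL$ required by \thmref{THMWABIT}, and the linear-local-equivalence theorem relates $\WABIT^{\LL}(\ell,\tau)$ to $\LABIT^{\LL}(\tau,\ell)$ for one and the same $\LL$, so the three reductions compose with no loss: $\ABIT(\ell,\psi)$ is reducible to $\WABIT^{\LL}(\ell,\tau)$ by \thmref{THMWABIT}, which is (linear, locally) reducible to $\LABIT^{\LL}(\tau,\ell)$ by the equivalence theorem, which is reducible to $(\OT(2\tau,\ell),\EQ(\tau\ell))$ by \thmref{bitauth}. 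By transitivity of reducibility (\secref{uc}), $\ABIT(\ell,\psi)$ is reducible to $(\OT(\frac{44}{3}\psi,\ell),\EQ(\frac{22}{3}\psi\ell))$, with one call to the OT box and one to the $\EQ$ box, and the composed simulator is $\psi$-secure since \thmref{THMWABIT} provides security parameter $\psi$ while \thmref{bitauth} provides the larger $\frac34\tau=\frac{11}{2}\psi$. Note that the number of seed OTs, $\frac{44}{3}\psi$, is independent of $\ell$.

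\textbf{Shrinking $\OT$ and $\EQ$.} It remains to replace $\OT(m,\ell)$ by $\OT(m,\psi)$ with $m=\frac{44}{3}\psi$, and $\EQ(n)$ by $\EQ(\psi)$ with $n=\frac{22}{3}\psi\ell$. For the OTs I use the folklore random-oracle trick: to transfer $x_0,x_1\in\zo^{\ell}$ with choice bit $c$, the sender picks seeds $s_0,s_1\inR\zo^{\psi}$, the parties call $\OT(m,\psi)$ so that the receiver obtains $s_c$, and the sender additionally sends $x_0\oplus\prg(s_0)$ and $x_1\oplus\prg(s_1)$, from which the receiver reconstructs $x_c$. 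A corrupted sender merely ends up fixing the $x_b$, and a corrupted receiver learns $x_{1-c}$ only by querying the oracle on $s_{1-c}$, which it does with probability at most $\poly(\psi)2^{-\psi}$ (here $\ell=\poly(\psi)$ bounds the query count). For the equality check each party locally applies a collision-resistant $G':\zo^{n}\to\zo^{\psi}$ (or the random oracle) to its $n$-bit string, and the parties call $\EQ(\psi)$ on the two digests; an undetected disagreement on the full strings is a collision, hence occurs with probability $\poly(\psi)2^{-\psi}$, while on a detected disagreement the simulator --- which receives both $n$-bit strings from the ideal $\EQ(n)$ --- hands the adversary precisely what it expects to see, namely the two digests. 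Each of these transformations is an implementation using a single call to the smaller box, so transitivity gives that $\ABIT(\ell,\psi)$ is reducible to $(\OT(\frac{44}{3}\psi,\psi),\EQ(\psi))$.

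\textbf{Complexities, and the main difficulty.} Adding the pieces: the communication is $O(\psi^2)$ for the random matrix of \figref{SUBPROTOA}, $O(\tau^2)=O(\psi^2)$ inside \figref{LABITPROT}, $O(\tau\ell)=O(\psi\ell)$ for the $2\tau$ masked OT messages, and $O(\psi)$ for the digest comparison, for a total of $O(\psi\ell+\psi^2)$; the work is $O(\psi^2\ell)$ from \thmref{THMWABIT} plus $O(\psi\ell)$ for the remaining linear steps and the hashing, for a total of $O(\psi^2\ell)$. I do not expect any single step to be hard. The two places that need a little care are (i) verifying that the single leakage class $\LL$ emerging from \thmref{bitauth} is simultaneously the class that \thmref{THMWABIT} and the equivalence theorem ask for, so that the three UC simulators plug together without any gap in the security reduction; and (ii) writing out the (routine) simulators for the two length-shrinking reductions in the random-oracle model. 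Both are straightforward once the random oracle is available and $\ell=\poly(\psi)$.
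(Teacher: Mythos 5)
Your proof is correct and follows essentially the same route as the paper: chain Theorems~\ref{thm:THMWABIT}, the $\WABIT$/$\LABIT$ equivalence, and Theorem~\ref{thm:bitauth} to reduce $\ABIT(\ell,\psi)$ to $(\OT(\frac{44}{3}\psi,\ell),\EQ(\frac{22}{3}\psi\ell))$, then shrink the OT message length with PRG-masked seeds and shrink the $\EQ$ input by comparing RO digests. Your added care about the leakage class $\LL$ matching across the three reductions and your complexity bookkeeping are both consistent with (and slightly more explicit than) the paper's argument.
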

\begin{proof}
  Combining the above theorems we have that $\ABIT(\ell,\psi)$ can be
  reduced to $(\OT(\frac{44}{3} \psi,\ell),\EQ(\frac{22}3 \psi \ell))$
  with communication $O(\psi^2)$ and work $O(\psi^2 \ell)$. For any
  polynomial $\ell$, we can implement $\OT(\frac{44}{3}\psi, \ell)$
  given $\OT(\frac{44}{3}\psi,\psi)$ and a pseudo-random generator
  $\prg: \zo^\psi \rightarrow \zo^\ell$. Namely, seeds are sent using
  the OTs and the $\prg$ is used to one-time pad encrypt the
  messages. The communication is $2\ell$. If we use the RO to
  implement the pseudo-random generator and count the hashing of
  $\kappa$ bits as $O(\kappa)$ work, then the work is
  $O(\ell\psi)$. We can implement $\EQ(\frac{22}3\psi\ell)$ by
  comparing short hashes produced using the RO. 
  The work is $O(\psi\ell)$. \putbox
\end{proof}

Since the oracles $(\OT(\frac{44}{3} \psi, \psi),\EQ(\psi))$ are
independent of $\ell$, the cost of essentially any reasonable
implementation of them can be amortized away by picking $\ell$ large
enough. See \appref{complexity} for a more detailed complexity analysis.

\Subsubsection{Efficient OT Extension:} 

We notice that the \WABIT box resembles an intermediate step of the OT
extension protocol of \cite{DBLP:conf/crypto/IshaiKNP03}. Completing
their protocol (i.e., ``hashing away'' the fact that all messages
pairs have the same XOR), gives an efficient protocol for OT
extension, with the same asymptotic complexity
as~\cite{DBLP:conf/tcc/HarnikIKN08}, but with dramatically smaller
constants. See~\appref{otext} for details.

\Section{Authenticated Oblivious Transfer}\seclab{aot}
In this section we show how to implement \AOTs. We implemented
\ABIT{}s in \secref{abit}, so what remains is to show how to implement
\AOT{}s from \ABIT{}s i.e., to implement the \DEAL box when it outputs
$\repa{x_0},\repa{x_1},\repb{c},\repb{z}$ with $z=c(x_0\oplus
x_1)\oplus x_0 = x_{c}$. Because of symmetry we only show the
construction of \AOT{}s from \ABIT{}s with \Ali as sender and \Bob as
receiver.

\begin{wrapfigure}[7]{r}{4cm}
\vspace{-1.5cm}
\begin{center}
\begin{tikzpicture}[scale=0.60]
% Draw the nodes
\node (aot) at (0, -0) {\AOT};
\node (laot) at (0, -2) {\LAOT};
\node[anchor=east] (abit) at (-0.5, -4) {\ABIT};
\node[anchor=west] (eq) at (0.5, -4) {\EQ};

% Draw the arrows
\draw[->,>=stealth'] (abit) -- (laot);
\draw[->,>=stealth'] (eq) -- (laot);
\draw[->,>=stealth'] (laot) -- (aot);

\end{tikzpicture}
\vspace{-0.4cm}
\caption{\secref{aot} outline.}
\figlab{aotoutline}
\end{center}
\end{wrapfigure}

\begin{boxfig}{The Leaky Authenticated OT box $\LAOT(\ell)$ }{BOXLAOT}
  \begin{description}
  \item[Honest Parties:]

    For $i = 1, \ldots, \ell$, the box outputs random
    $\repa{x_0^i},\repa{x_1^i},\repb{c^i},\repb{z^i}$ with
    $z^i=c^i(x_0^i\oplus x_1^i)\oplus x_0^i$.

  \item[Corrupted Parties:] \
    \begin{enumerate}
    \item If \Bob is corrupted he gets to choose all his random values.
    \item If \Ali is corrupted she gets to choose all her random
      values. Also, she may, at any point before \Bob received his
      outputs, input $(i,g_i)$ to the box in order to try to guess
      $c_i$. If $c_i \ne g_i$ the box will output \texttt{fail} and
      terminate. Otherwise the box proceeds as if nothing has happened
      and \Ali will know the guess was correct. She may input as many
      guesses as she desires.
    \end{enumerate}

  \item[Global Key Queries:]

    The adversary can at any point input $(\Ali, \Delta)$ and will be
    returned whether $\Delta = \Delta_B$. And it can at any point input
    $(\Bob, \Delta)$ and will be returned whether $\Delta = \Delta_A$.
    
  \end{description}
\end{boxfig}

We go via a leaky version of authenticated OT, or \LAOT, described in
\figref{BOXLAOT}. The \LAOT box is leaky in the sense that choice bits
may leak when \Ali is corrupted: a corrupted \Ali is allowed to make
guesses on choice bits, but if the guess is wrong the box aborts
revealing that \Ali is cheating. This means that if the box does not
abort, with very high probability \Ali only tried to guess a few
choice bits.

The protocol to construct a leaky \AOT (described in
\figref{PROTLAOT}) proceeds as follows: First \Ali and \Bob get
$\repa{x_0},\repa{x_1}$ (\Ali's messages), $\repb{c}$ (\Bob's choice
bit) and $\repb{r}$. Then \Ali transfers the message $z=x_{c}$ to \Bob
in the following way: \Bob knows the MAC for his choice bit $M_c$,
while \Ali knows the two keys $K_c$ and $\Delta_B$. This allows \Ali
to compute the two possible MACs $(K_c, K_c\oplus \Delta_B)$
respectively for the case of $c=0$ and $c=1$. Hashing these values
leaves \Ali with two uncorrelated strings $H(K_c)$ and $H(K_c\oplus
\Delta_B)$, one of which \Bob can compute as $H(M_c)$. These values
can be used as a one-time pad for \Ali's bits $x_0,x_1$ (and some
other values as described later), and \Bob can retrieve $x_c$ and
announce the difference $d=x_c\oplus r$ and therefore compute the
output $\repb{z}=\repb{r}\oplus d$.

\begin{boxfig}{The protocol for authenticated OT with leaky choice
      bit}{PROTLAOT}
    %The protocol runs $\ell$ times in parallel, in the hybrid model
    %with access to $\ABIT^{\Ali}(2\ell)$ with $\Ali$ as MAC holder and
    %$\ABIT^{\Bob}(2\ell)$ with $\Bob$ as MAC holder and
    %$\EQ(\ell2\kappa)$.

    The protocol runs $\ell$ times in parallel, here described for a
    single leaky authenticated \OT.
\begin{enumerate}

\item 

  \Ali and \Bob get $\repa{x_0},\repa{x_1},\repb{c},\repb{r}$ from 
  the dealer. 

\item 

  Let $\repa{x_0}=(x_0,M_{x_0},K_{x_0}),\repa{x_1} =
  (x_1,M_{x_1},K_{x_1}),\repb{c}=(c,M_c,K_c),\repb{r}=(r,M_r,K_r)$.

\item 
  
  \Ali chooses random strings $T_0,T_1 \in \zo^{\kappa}$.

\item 
  
  \Ali sends $(X_0,X_1)$ to \Bob where $X_0 = H(K_c) \oplus
  (x_0||M_{x_0}||T_{x_0})$ and $X_1 = H(K_c\oplus \Delta_B) \oplus
  (x_1||M_{x_1}||T_{x_1})$.

\item 
  
  \Bob computes $(x_c||M_{x_c}||T_{x_c})= X_c \oplus H(M_c)$. \Bob
  aborts if $M_{x_c} \neq K_{x_c}\oplus x_c \Delta_A$. Otherwise, let
  $z=x_c$.

\item 

  \Bob announces $d=z\oplus r$ to \Ali and the parties compute
  $\repb{z}=\repb{r}\oplus d$. Let $\repb{z}=(z,M_z,K_z)$.

\item 

  \Ali sends $(I_0,I_1)$ to \Bob where $I_0 = H(K_z) \oplus T_1$ and
  $I_1 = H(K_z\oplus \Delta_B) \oplus T_0$.

\item 

  \Bob computes $T_{1\oplus z} = I_z \oplus H(M_z)$. \emph{Notice
    that now \Bob has both $(T_0,T_1)$}.

\item 

  \Ali and \Bob both input $(T_0,T_1)$ to $\EQ$.
  The comparisons are done using one call to $\EQ(\ell2\kappa)$.

\item 

  If the values are the same, they output
  $\repa{x_0},\repa{x_1},\repb{c},\repb{z}$.

\end{enumerate}
\end{boxfig}

In order to check if \Ali is transmitting the correct bits $x_0,x_1$,
she will transfer the respective MACs together with the bits: as \Bob
is supposed to learn $x_c$, revealing the MAC on this bit does not
introduce any insecurity. However, \Ali can now mount a selective
failure attack: \Ali can check if \Bob's choice bit $c$ is equal to,
e.g., $0$ by sending $x_0$ with the right MAC and $x_1$ together with
a random string. Now if $c=0$ \Bob only sees the valid MAC and
continues the protocol, while if $c=1$ \Bob aborts because of the
wrong MAC. A similar attack can be mounted to check if $c=1$. We will
fix this later by randomly partitioning and combining a few \LAOTs
together.

On the other hand, if \Bob is corrupted, he could be announcing the
wrong value $d$. In particular, \Ali needs to check that the
authenticated bit $\repb{z}$ is equal to $x_c$ without learning
$c$. In order to do this, we have \Ali choosing two random strings
$T_0,T_1$, and append them, respectively, to $x_0,x_1$ and the MACs on
those bits, so that \Bob learns $T_c$ together with $x_c$. After
\Bob announces $d$, we can again use the MAC and the keys for $z$ to
perform a new transfer: \Ali uses $H(K_z)$ as a one-time pad for $T_1$
and $H(K_z\oplus \Delta_B)$ as a one-time pad for $T_0$. Using $M_z$,
the MAC on $z$, \Bob can retrieve $T_{1\oplus z}$. This means that an
honest \Bob, that sets $z=x_c$, will know both $T_0$ and $T_1$, while
a dishonest \Bob will not be able to know both values except with
negligible probability. Using the \EQ~box \Ali can check that
\Bob knows both values $T_0,T_1$. Note that we cannot simply have \Bob
openly announce these values, as this would open the possibility for
new attacks on \Ali's side. The proof of the following theorem can be
found in \appref{proofoflaot}.
\begin{theorem} \thmlab{laot} 
  The protocol in \figref{PROTLAOT} securely implements $\LAOT(\ell)$
  in the  $(\ABIT(4\ell,\kappa),\EQ(2\ell\kappa))$-hybrid model.
\end{theorem}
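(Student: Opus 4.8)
The plan is a standard UC simulation argument: one simulator \Sim{} per corruption pattern, running in the ideal world with access to the box \LAOT{} of \figref{BOXLAOT} (and the adversary's interface to it), while internally emulating the hybrid boxes $\ABIT(4\ell,\kappa)$ and $\EQ(2\ell\kappa)$ and the honest party, so that the joint distribution of the adversary's view and the honest party's output matches the real execution of \figref{PROTLAOT}. The both-honest and both-corrupt cases are trivial, so all the work is in the two one-sided cases. A point to keep in mind throughout is that \LAOT{} hands the global key only to the party that holds it, so \Sim{} must freshly sample the honest party's key inside its $\ABIT$ emulation; this is harmless because that key occurs in protocol messages only as an argument of $H$, so replacing it is invisible unless the adversary queries $H$ at the right point.

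\textbf{Corrupted \Bob.} From the $\ABIT$ emulation \Sim{} extracts, per instance, \Bob's choices $c,M_c,r,M_r$ and his keys $K_{x_0},K_{x_1},\Delta_A$, samples \Ali's key $\Delta_B$, and reads \Bob's prescribed output $z=x_c$ from \LAOT{}. Playing honest \Ali, it sets the transferred bit to $z$ and forms the ciphertext \Bob{} will open, $X_c = H(M_c)\oplus\bigl(z\,\|\,(K_{x_c}\oplus z\Delta_A)\,\|\,T\bigr)$ with a fresh $T$, and sets the other ciphertext $X_{1\oplus c}$ to a uniformly random string; this is sound because that other ciphertext is one-time padded by $H$ at the point $M_c\oplus\Delta_B$, which \Bob{} can hit only by guessing $\Delta_B$, so in the RO model the two are statistically close, and likewise for $(I_0,I_1)$. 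When \Bob{} announces $d$: if $d=z\oplus r$, \Sim{} lets \LAOT{} deliver, giving \Bob's key as $K_z=K_r\oplus d\Delta_B$ so that his MAC on $z$ equals $M_r$, exactly as in the protocol; if $d\neq z\oplus r$, then \Bob{} has recovered $T_z$ from $X_c$ and recovers from the $I$'s only $T_{1\oplus z'}=T_z$ (with $z'=1\oplus z$), so he is missing a $\kappa$-bit string and can pass the single \EQ{} call only by guessing, whence \Sim{} aborts, matching the real execution up to $\poly(\kappa)2^{-\kappa}$.

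\textbf{Corrupted \Ali.} From the $\ABIT$ emulation \Sim{} extracts \Ali's chosen $\Delta_B$, $(x_0,M_{x_0})$, $(x_1,M_{x_1})$, $K_c$, $K_r$ and the induced $\Delta_A$, $K_{x_b}=M_{x_b}\oplus x_b\Delta_A$; it registers $x_0,x_1$ as \Ali's inputs to \LAOT{} but does \emph{not} commit to honest \Bob's choice bit $c$. On receiving $(X_0,X_1)$ it decodes, for each hypothesis $b$ about $c$, the string $X_b$ with the mask $H(K_c\oplus b\Delta_B)$ honest \Bob{} would use when $c=b$, and records whether the step-5 MAC test would pass; on receiving $(I_0,I_1)$ together with \Ali's inputs to the emulated \EQ{}, it records, per $b$, whether the step-9 equality on $(T_0,T_1)$ would hold. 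This determines honest \Bob's abort event as a function of $c$, a pattern that is either constant, ``abort iff $c=b$'', or ``abort always''; \Sim{} reproduces it with guess queries $(i,g_i)$ to \LAOT{} (for ``abort always'' it submits both guesses, one of which fails and terminates the box, mirroring the global abort of the $\ell$ parallel copies). \Sim{} sends \Bob's announcement $d$ as a uniformly random bit --- faithful since in the protocol $d=x_c\oplus r$ with $r$ uniform and never disclosed to \Ali{} --- and when no abort is indicated tells \LAOT{} to deliver, giving \Ali's key for the output bit as $K_r\oplus d\Delta_B$ (so \Bob's MAC on it comes out $M_r$, as in the protocol). The only deviation not reproduced exactly is \Ali{} making \Bob{} accept a bit $x'\neq x_b$, which requires forging a MAC on $1\oplus x_b$ under the hidden $\Delta_A$ and so fails except with probability $2^{-\kappa}$ per slot.

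I expect the crux to be the corrupted-\Ali{} analysis. One must verify that \emph{every} combination of deviations \Ali{} can mount --- independently corrupting the bit-, MAC- and $T$-parts of each of $X_0,X_1,I_0,I_1$ --- makes honest \Bob's abort event depend on at most one bit of information about $c$ per instance, so that it is reproducible through \LAOT{}'s guessing interface, and that any surviving run is explainable by \Ali{} having behaved honestly on a well-defined message pair (this is the analogue, here, of the ``right $\Gamma_B$'' bookkeeping in the \LABIT{} proof). Establishing this case distinction, and bounding the residual bad events --- forging a MAC under $\Delta_A$, or \Ali{} or \Bob{} querying $H$ at a point masked by $\Delta_A$ or $\Delta_B$ --- by $\poly(\kappa)2^{-\kappa}$ each, is where the effort lies; the remainder is routine hybrid bookkeeping, together with emulating \EQ{} (which leaks both inputs on mismatch, and in every case \Sim{} knows both).
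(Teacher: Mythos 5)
Your proposal follows the same route as the paper's proof: emulate $\ABIT$ and $\EQ$ internally, extract the corrupted party's dealer inputs, and (for corrupted $\Bob$) replace $X_{1\oplus c}$ and $I_{1\oplus z}$ by uniform strings, arguing indistinguishability up to the RO being queried at a point offset by the unknown $\Delta_B$; (for corrupted $\Ali$) decode under both hypotheses for $c$, translate the induced abort pattern into the box's choice-bit guesses, and bound the residual gap by a MAC forgery under the hidden $\Delta_A$. This matches the paper's two lemmas for the two one-sided corruptions, including the bad events and the handling of the uniformly random announced bit $d$. Two small points worth noting: the paper's simulator for corrupted $\Bob$ never needs to fabricate a fresh $\Delta_B$ for the $\ABIT$ emulation (that box hands $\Delta_B$ only to $\Ali$, so it never surfaces on $\Bob$'s interface), and it answers global-key queries by forwarding them to $\LAOT$; and the paper additionally addresses post-output RO simulation (once the environment learns the real $\Delta_B$), which it handles by programming the oracle as in the $\AAND$ proof — a detail your sketch leaves implicit. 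The ``crux'' you flag in the corrupted-$\Ali$ case is, as you suspect, the bulk of the paper's proof, and your observation that the abort event is a function of the single bit $c$ and hence one of only four patterns is exactly the reason the guessing interface suffices.
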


To deal with the leakage of the \LAOT box, we let \Bob randomly
partition the \LAOTs in small buckets: all the \LAOTs in a bucket will
be combined using an OT combiner (as shown in~\figref{PROTAOT}), in
such a way that if at least one choice bit in every bucket is unknown
to \Ali, then the resulting \AOT will not be leaky.  The overall
protocol is secure because of the OT combiner and the probability that
any bucket is filled only with OTs where the choice bit leaked is
negligible, as shown in~\appref{proofthmaot}.

\begin{boxfig}{From Leaky Authenticated OTs to Authenticated
    OTs}{PROTAOT}
\begin{enumerate}
\item

  \Ali and \Bob generate $\ell'=B\ell$ authenticated OTs using
  $\LAOT(\ell')$. If the box does not abort, name the outputs $\{
  \repa{x^i_{0}},\repa{x^i_{1}},\repb{c^i},\repb{z^i}\}^{\ell'}_{i=1}$.

\item 

  \Bob sends a $B$-wise independent permutation $\pi$ on
  $\{1,\ldots,\ell'\}$ to \Ali.  For $j = 0, \ldots, \ell - 1$, the
  $B$ quadruples $\{ \repa{x^{\pi(i)}_{0}},\repa{x^{\pi(i)}_{1}},
  \repb{c^{\pi(i)}},\repb{z^{\pi(i)}}\}^{j B + B}_{i=j B + 1}$ are
  defined to be in the $j$'th bucket.

\item

  We describe how to combine two OTs from a bucket, call them
  $\repa{x^1_0},\repa{x^1_1},\repb{c^1},\repb{z^1}$ and
  $\repa{x^2_0},\repa{x^2_1},\repb{c^2},\repb{z^2}$. Call the result
  $\repa{x_0},\repa{x_1},\repb{c},\repb{z}$.
  To combine more than two, just iterate by taking the result and
  combine it with the next leaky OT.
  \begin{enumerate}
  \item 

    \Ali reveals $d = x^1_0 \oplus x^1_1 \oplus x^2_0 \oplus x^2_1$.

  \item 

    Compute: $\repb{c}=\repb{c^1}\oplus \repb{c^2}$,
    $\repb{z}=\repb{z^1}\oplus\repb{z^2}\oplus d\repb{c^1}$,
    $\repa{x_0}=\repa{x^1_0}\oplus \repa{x^2_0}$,
    $\repa{x_1}=\repa{x^1_0}\oplus \repa{x^2_1}$.

  \end{enumerate}

\end{enumerate}
\end{boxfig}

\begin{theorem} 
  \thmlab{thmAOT} Let $\AOT(\ell)$ denote the box which outputs $\ell$
  \AOT{}s as in \DEAL. If $(\log_2(\ell) + 1)(B-1) \ge \psi$, then the
  protocol in~\figref{PROTAOT} securely implements $\AOT(\ell)$ in the
  $\LAOT(B \ell)$-hybrid model with security parameter $\psi$.

\end{theorem}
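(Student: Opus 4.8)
The plan is to build a UC simulator for the protocol in \figref{PROTAOT} that interacts with the box $\AOT(\ell)$ and with the $\LAOT(B\ell)$-hybrid adversary, and to argue indistinguishability in two stages: first that the combiner in step~3 correctly and securely realizes a single (non-leaky) \AOT from $B$ leaky ones \emph{provided at least one choice bit in the bucket did not leak to \Ali}, and second that the random $B$-wise independent permutation $\pi$ makes the bad event — some bucket consisting entirely of \LAOTs whose choice bit \Ali successfully guessed — occur with probability at most $2^{-\psi}$ under the stated condition $(\log_2(\ell)+1)(B-1)\ge\psi$. Since the combiner only performs local \ABIT arithmetic plus one revealed bit $d$, and since the \LAOT box already hands corrupted parties the power to choose their own randomness and (for \Ali) to query choice bits, the simulator essentially just has to forward guesses, translate the bucket structure, and check consistency of $d$.

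First I would verify correctness of the combiner for two OTs: with $c=c^1\oplus c^2$, $x_0=x_0^1\oplus x_0^2$, $x_1=x_0^1\oplus x_1^2$, and $d=x_0^1\oplus x_1^1\oplus x_0^2\oplus x_1^2$, one checks directly that $z^1\oplus z^2\oplus d\,c^1 = x_{c^1}^1 \oplus x_{c^2}^2 \oplus d\,c^1 = x_c$, using $x_{c^1}^1 = x_0^1 \oplus c^1(x_0^1\oplus x_1^1)$ and the definition of $x_0,x_1$; this is the routine algebra I will not grind through, and it extends to $B$ inputs by iteration. Privacy of the receiver's choice bit: a corrupted \Ali in the real protocol sees, per bucket, the $B$ leaky choice bits $c^{\pi(i)}$ only through her \LAOT guesses, plus the single value $d$ which she herself reveals and which is independent of the $c^i$; hence if even one $c^i$ in the bucket is unguessed, the combined $c=\bigoplus c^i$ is uniform in \Ali's view, and the simulator can hand the ideal $\AOT$ box a dummy and later equivocate. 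Privacy/authentication against a corrupted \Bob follows because the combiner outputs are deterministic functions of the honestly-generated \LAOT outputs and the MAC machinery of the underlying \ABITs, which \Bob cannot forge; the simulator extracts \Bob's effective choice bit as $\bigoplus c^i$ from the \LAOT hybrid.

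The combinatorial core — and the step I expect to be the main obstacle — is bounding the probability that the random permutation $\pi$ places all $B$ of some ``bad'' \LAOT (those whose choice bit \Ali guessed and had confirmed) into a single bucket. Let $t$ be the number of bad instances among the $\ell' = B\ell$ outputs. Because \Ali's guesses are confirmed only when correct and each wrong guess aborts, and because the choice bits are uniform, the probability that \Ali accumulates $t$ confirmed guesses is at most $2^{-t}$; conditioned on surviving, I need $\Pr[\text{some bucket}\subseteq\text{bad set}]$ to be small. Using a union bound over the $\ell$ buckets and a $B$-wise independent $\pi$, the chance a fixed bucket lands entirely inside a fixed bad set of size $t$ is at most $(t/\ell')^{B} = (t/(B\ell))^B \le (t/\ell)^B / B^B$, but one must be slightly careful: $t$ itself is adversarially chosen, so I take a union over $t$ and combine the $2^{-t}$ factor with the placement bound. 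The product $2^{-t}\cdot \ell\cdot (t/\ell)^B$ is maximized at small $t$ and is dominated by $\ell \cdot 2^{-(B-1)} \cdot (1/\ell)^{\,0}$-type terms; a clean way is to observe $2^{-t} (t/\ell)^{B} \le (2\ell)^{-(B-1)} \cdot 2^{-1}$ for all $t\ge 1$ when $B\ge 1$, whence the total failure probability is at most $\ell\cdot(2\ell)^{-(B-1)} \cdot \frac12 \le 2^{-(\log_2\ell+1)(B-1)} \le 2^{-\psi}$, matching the leakage bound $(2\ell)^{-B}$ style estimate quoted in the introduction. I would present this estimate carefully, then assemble the simulator: run the \LAOT simulator, forward $\pi$, abort exactly when the real protocol aborts (i.e.\ on a failed \EQ inside \LAOT or a MAC mismatch, events whose distribution the \LAOT simulator already reproduces), and in every non-aborting bucket with an unguessed choice bit use equivocation on the \ABIT values to make the combined output match whatever the ideal $\AOT(\ell)$ box returns; global key queries are forwarded verbatim. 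Indistinguishability then reduces to the combiner's perfect correctness, the uniformity argument for $c$, and the $2^{-\psi}$ bucket bound.
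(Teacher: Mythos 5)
Your high-level structure---verify the combiner, argue privacy of the combined choice bit given one unguessed instance per bucket, bound the probability that some bucket consists entirely of leaked instances, and then assemble the UC simulator---matches the paper's proof, which is in fact terse here and simply refers to the analogous calculation in the proof of \thmref{thmAND} (``repeating the same calculations'').

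The flaw is in the combinatorial estimate, which is the one step with real content. Your claim that $2^{-t}(t/\ell)^B$ is ``maximized at small $t$'' is false: $2^{-t}t^B$ peaks near $t = B/\ln 2 \approx 1.44B$, and the paper confirms this behaviour by showing that the precise quantity $\alpha(\gamma,\ell,B) = 2^{-\gamma}\binom{\gamma}{B}\ell\binom{B\ell}{B}^{-1}$ is maximized at $\gamma = 2B-1$. As a consequence your proposed ``clean'' inequality $2^{-t}(t/\ell)^B \le (2\ell)^{-(B-1)}\cdot 2^{-1}$ does not hold for all $t\ge 1$; evaluating near $t\approx B/\ln 2$ it reduces to the extra hypothesis $(2B/(e\ln 2))^B \le \ell$, which is not in the theorem. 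You also drop the $1/B^B$ coming from $(t/(B\ell))^B$, and that factor is exactly what you need to reach the target $(2\ell)^{1-B}$: retaining it, the maximum over $t$ of $\ell\cdot 2^{-t}(t/(B\ell))^B$ is about $(e\ln 2)^{-B}\ell^{1-B}$, which only falls below $(2\ell)^{1-B}$ when $B$ is at most roughly $11$. Finally, your concluding chain $\ell\cdot(2\ell)^{-(B-1)}\cdot\tfrac12 \le 2^{-(\log_2\ell+1)(B-1)}$ is equivalent to $\ell/2\le 1$ and is simply false for $\ell>2$. The paper's calculation avoids all of this by keeping exact binomials, computing $\alpha(\gamma+1,\ell,B)/\alpha(\gamma,\ell,B)=(\gamma+1)/(2(\gamma+1-B))$ to locate the maximizer $\gamma=2B-1$, and then showing $\alpha(2B-1,\ell,B) \le 2^{-2B+1}\ell\,(2/\ell)^B = (2\ell)^{1-B}$ for $2\le B<\ell$. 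You should redo the lemma along those lines rather than estimating the maximum by eye.
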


\Section{Authenticated local AND}\seclab{aand}

\begin{wrapfigure}[6]{r}{4cm}
\vspace{-2.5cm}
\begin{center}
\begin{tikzpicture}[scale=0.60]
% Draw the nodes
\node (aand) at (0, 0) {\AAND};
\node (laand) at (0, -2) {\LAAND};
\node[anchor=east] (abit) at (-0.5, -4) {\ABIT};
\node[anchor=west] (eq) at (0.5, -4) {\EQ};

% Draw the arrows
\draw[->,>=stealth'] (abit) -- (laand);
\draw[->,>=stealth'] (eq) -- (laand);
\draw[->,>=stealth'] (laand) -- (aand);

\end{tikzpicture}
\vspace{-0.3cm}
\caption{\secref{aand} outline.}
\figlab{aandoutline}
\end{center}
\end{wrapfigure}

In this section we show how to generate \AAND, i.e., how to implement
the dealer box when it outputs $\repa{x},\repa{y},\repa{z}$ with
$z=xy$. As usual, as \AAND for \Bob is symmetric, we only present how
to construct \AAND for \Ali.

We first construct a leaky version of \AAND, or \LAAND, described in
\figref{BOXLAAND}. Similar to the \LAOT box the \LAAND box may leak
the value $x$ to \Bob, at the price for \Bob of being detected. The
intuition behind the protocol for \LAAND, described in \figref{LAAND},
is to let \Ali compute the AND locally and then authenticate the
result. \Ali and \Bob then perform some computation on the keys and
MACs, in a way so that \Ali will be able to guess \Bob's result only
if she behaved honestly during the protocol: \Ali behaved honestly
(sent $d = z \oplus r$) iff she knows $W_0 = (K_x||K_z)$ or $W_1 =
(K_x \oplus \Delta_A||K_y \oplus K_z)$. In fact, she knows $W_x$. As
an example, if $x = 0$ and \Ali is honest, then $z = 0$, so she knows
$M_x = K_x$ and $M_z = K_z$. Had she cheated, she would know $M_z =
K_z \oplus \Delta_A$ instead of $K_z$. \Bob checks that \Ali knows
$W_0$ or $W_1$ by sending her $H(W_0) \oplus H(W_1)$ and ask her to
return $H(W_0)$. This, however, allows \Bob to send $H(W_0) \oplus
H(W_1) \oplus E$ for an error term $E \ne 0^\kappa$. The returned
value would be $H(W_0) \oplus x E$. To prevent this attack, they use
the \EQ box to compare the values instead. If \Bob uses $E \ne
0^\kappa$, he must now guess $x$ to pass the protocol. However, \Bob
still may use this technique to guess a few $x$ bits. We fix this
leakage later in a way similar to the way we fixed leakage of the
\LAOT box in \secref{aot}. The proof of the following theorem can be
found in~\appref{proofoflaand}.

\begin{theorem}\thmlab{laand}
  The protocol in \figref{LAAND} securely implements $\LAAND(\ell)$ in
  the $(\ABIT(3\ell,\kappa),\EQ(\ell\kappa))$-hybrid model.
\end{theorem}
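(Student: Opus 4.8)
The plan is to give a UC simulation proof in the random-oracle model, handling the two static corruption cases separately and checking correctness for honest parties. \textbf{Correctness.} An honest \Ali draws random $x,y,r$ with MACs from $\ABIT(3\ell,\kappa)$, an honest \Bob draws $\Delta_A$ and local keys $K_x,K_y,K_r$; \Ali sets $z=xy$, announces $d=z\oplus r$, and both set $\repa{z}=\repa{r}\oplus d$, so $K_z=K_r\oplus d\Delta_A$ and $M_z=M_r$. Since $r$ is uniform and unknown to \Bob, $d$ leaks nothing about $(x,y)$, and \Bob's challenge $C=H(W_0)\oplus H(W_1)$ with $W_0=K_x\|K_z$, $W_1=(K_x\oplus\Delta_A)\|(K_y\oplus K_z)$ is a function of the keys only. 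One checks that an honest \Ali always knows exactly $W_x$ (if $x=0$ then $z=0$, so she has $M_x=K_x$ and $M_z=K_z$; if $x=1$ then $z=y$, so she has $M_x=K_x\oplus\Delta_A$ and $M_y\oplus M_z=K_y\oplus K_z$), hence she recovers $H(W_0)$ either directly or as $H(W_1)\oplus C$, the \EQ check passes, and $(\repa{x},\repa{y},\repa{z})$ with $z=xy$ is distributed exactly as $\LAAND(\ell)$ prescribes.

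\textbf{Corrupted \Bob.} The simulator \Sim plays the \ABIT and \EQ boxes and the honest \Ali. It reads off \Bob's chosen $\Delta_A,K_x,K_y,K_r$, picks $d\inR\zo$ uniformly (the honest $d=xy\oplus r$ has exactly this distribution in \Bob's view), computes $K_z=K_r\oplus d\Delta_A$, forms $W_0,W_1$ and the ``honest challenge'' $C^*=H(W_0)\oplus H(W_1)$ by querying the RO itself, and on receiving \Bob's actual $C$ sets $E=C\oplus C^*$. When \Bob hands a value $V_B$ to \EQ, \Sim must decide whether the check passes, i.e.\ whether $V_B$ equals the value $H(W_0)\oplus xE$ that the real honest \Ali would submit, without knowing $x$. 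If $E=0^\kappa$ this is decidable directly ($V_B\eqq H(W_0)$). If $E\ne 0^\kappa$ there is at most one $g\in\zo$ with $H(W_0)\oplus gE=V_B$; \Sim submits that $g$ (an arbitrary $g$ if neither works) as a guess for $x$ to the \LAAND box. A correct guess makes the box proceed, and \Sim simulates \EQ as ``equal'' and delivers outputs; a wrong guess makes the box output \texttt{fail}, which (in the binary case) reveals the true $x$ to \Sim, whereupon it simulates the \EQ-mismatch (leaking \Ali's value $H(W_0)\oplus xE$, which \Bob anyway reconstructs from $E$) and aborts. This mirrors the real protocol exactly: the only thing a cheating \Bob buys with $E\ne 0^\kappa$ is one bit of $x$ at the risk of detection, which is precisely what the \LAAND box grants; and a \Bob that never queried the RO at $W_0$ or $W_1$ hits $V_B\in\{H(W_0),H(W_0)\oplus E\}$ only with probability $\poly(\kappa)2^{-\kappa}$, keeping the RO programming consistent.

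\textbf{Corrupted \Ali.} Here \Sim plays \ABIT, \EQ and the honest \Bob; it reads \Ali's $(x,M_x),(y,M_y),(r,M_r)$ and its own sampled $\Delta_A,K_x,K_y,K_r$, and on seeing \Ali's announced $d$ computes $e=d\oplus r\oplus xy$. It simulates \Bob's challenge $C=H(W_0)\oplus H(W_1)$ and \Bob's \EQ-input $H(W_0)$ exactly, since \Sim knows all keys. If $e=0$ on every instance, \Ali was honest, and \Sim forwards $(x,M_x),(y,M_y),(z,M_z)=(xy,M_r)$ together with $\Delta_A,K_x,K_y,K_z=K_r\oplus d\Delta_A$ to the \LAAND box so that the honest \Ali obtains output with $z=xy$; a routine check shows this reproduces the real joint distribution (the ``missing'' randomness being $r=d\oplus z$, uniform in \Bob's view). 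On any instance with $e=1$, the key point is that $W_0$ (and $W_1$) involves $K_z=M_z\oplus z\Delta_A$ with the ``wrong'' $z=1\oplus xy$, so producing $H(W_0)$ requires \Ali either to guess $\Delta_A$ or to guess an unqueried RO value, each of probability $2^{-\kappa}$; hence the real \EQ check fails except with probability $\poly(\kappa)2^{-\kappa}$, and \Sim simply forces an abort. (The \LAAND box grants no leakage to a corrupted \Ali, matching the fact that a deviating \Ali learns nothing before the almost-certain abort.)

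\textbf{Main obstacle.} The delicate part is the corrupted-\Bob case: cleanly mapping \Bob's adversarial \EQ-input to a single admissible guess for the \LAAND box, and arguing --- with careful random-oracle bookkeeping across all $\ell$ parallel instances --- that this costs \Bob at most one bit of $x$ per instance and is detected otherwise, so the two executions are statistically indistinguishable within the claimed $\poly(\kappa)2^{-\kappa}$. The rest (the exact distribution checks for honest parties and the $e=0$ forwarding, and bounding ``\Ali forges $H(W_0)$'' and ``\Bob luckily hits $H(W_0)$'') is the tedious-but-straightforward bookkeeping, and this is where the $\kappa$-bit security of the \ABIT keys and RO outputs enters.
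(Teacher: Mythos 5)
Your treatment of the corrupted-\Bob case tracks the paper's Lemma~\ref{lemm:AANDBobAdv} quite closely: read \Bob's $\Delta_A,K_x,K_y,K_r$ from his \ABIT inputs, pick $d$ uniformly, derive $K_z$, characterize the honest \Ali's \EQ input as $H(W_0)\oplus x E$, and turn any $E\ne 0^\kappa$ into a single guess $g$ at $x$ that is fed to the \LAAND box. That is the paper's argument (the paper phrases it as $\bar V \eqq H(K_x,K_z)$ versus $\bar V \eqq \bar U \oplus H(K_x\oplus\Delta_A,K_y\oplus K_z)$, which is the same dichotomy), so this half is fine.

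The corrupted-\Ali case, however, has a genuine gap. You write that \Sim ``reads \ldots its own sampled $\Delta_A,K_x,K_y,K_r$'' and ``simulates \Bob's challenge $C=H(W_0)\oplus H(W_1)$ \ldots exactly, since \Sim knows all keys.'' But when \Ali is corrupted and \Bob is honest, $\Delta_A$ and the local keys are sampled by the ideal $\LAAND(\ell)$ box and delivered to the honest \Bob; the simulator is \emph{not} given them (the box only offers a \emph{global key query} interface that answers yes/no to a guess, precisely so that the simulator does not trivially learn $\Delta_A$). If \Sim invents its own $\Delta_A'$ and keys to compute the challenge $U$, the environment --- which eventually sees \Bob's output containing the \emph{real} $\Delta_A$ and keys --- can re-derive $H(K_x\|K_z)\oplus H(K_x\oplus\Delta_A\|K_y\oplus K_z)$ and see that it does not match the $U$ shown to \Ali. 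For the same reason you cannot ``forward $\Delta_A,K_x,K_y,K_z$ to the \LAAND box'': when \Ali is the corrupted party, the box accepts only \Ali's side of the randomness (bits and MACs) and samples \Bob's side itself.

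The fix, which is what the paper actually does, is to have \Sim send a \emph{uniformly random} $U\in\zo^{2\kappa}$. During the protocol this is indistinguishable from the honest $U$ because the environment would have to query the random oracle on both $(K_x,K_z)$ and $(K_x\oplus\Delta_A,K_y\oplus K_z)$, i.e.\ essentially guess $\Delta_A$, which the global-key-query interface certifies happens with probability at most $\poly(\kappa)2^{-\kappa}$. After outputs are delivered and the environment learns $\Delta_A$, the simulator must \emph{program} the random oracle so that $H(K_x,K_z)\oplus H(K_x\oplus\Delta_A,K_y\oplus K_z)=U$ retroactively; this is possible because \Sim detects (via a global key query triggered on every RO query pair) the first moment the environment knows $\Delta_A$, at which point at least one of the two RO points is still unset. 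Your write-up omits this post-output RO programming entirely; without it the simulation is distinguishable. (Your intuition for the $e=1$ branch --- that a cheating \Ali must guess $\Delta_A$ or an unqueried RO value --- is correct and is exactly the reduction the paper uses; the missing piece is purely in how $U$ is produced and reconciled with \Bob's eventual output.)

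A small additional slip: in the $e=0$ branch you say ``the honest \Ali obtains output''; it is the honest \Bob who obtains output from the \LAAND box when \Ali is corrupted.
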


\begin{boxfig}{The box $\LAAND(\ell)$ for 
    $\ell$ Leaky Authenticated local AND.}{BOXLAAND}
\begin{description}
\item[Honest Parties:] For $i = 1, \ldots, \ell$, the box outputs
random $\repa{x_i},\repa{y_i},\repa{z_i}$ with $z_i=x_iy_i$.
\item[Corrupted Parties:] \
\begin{enumerate}
\item If \Ali is corrupted she gets to choose all her random values.
\item If \Bob is corrupted he gets to choose all his random values,
  including the global key $\Delta_A$.
  Also, he may, at any point prior to output being delivered to \Ali,
  input $(i, g_i)$ to the box in order to try to guess $x_i$. If $g_i
  \not= x_i$ the box will output \texttt{fail} to \Ali and
  terminate. Otherwise the box proceeds as if nothing has happened and
  \Bob will know the guess was correct. He may make as many guesses as
  he desires.
\end{enumerate}
\item[Global Key Queries:] The adversary can input $\Delta$ and will
  be told if $\Delta = \Delta_A$.
\end{description}
\end{boxfig}

%Details can be found in \appref{aand}.

%\begin{boxfig}{Protocol for authenticated AND with leaking bit}{LAAND}
%Here described for a single triple:
%\begin{enumerate}
%\item
%
%  \Ali and \Bob ask the dealer for $\repa{x},\repa{y},\repa{r}$. (The
%  global key is $\Delta_A$).
%
%\item
%
%  \Ali computes $z=xy$ and announces $d=z\oplus r$.
%
%\item 
%
%  The parties compute $\repa{z}=\repa{r}\oplus d$.
%
%\item
%  
%  \Bob sends $U = H(K_x,K_z) \oplus H(K_x \oplus \Delta_A,K_y \oplus K_z)$ to
%  \Ali.
%
%\item
%
%  If $x=0$, then \Ali lets $V = H(M_x,M_z)$.  If $x=1$, then \Ali
%  lets $V = U \oplus H(M_x,M_y\oplus M_z)$.
%
%\item
%
%  \Ali and \Bob call the \EQ box, with inputs
%  $V$ and $H(K_x,K_z)$ respectively.\footnote{All the $\ell$ calls to
%  \EQ are handles using a single call to
%  $\EQ(\ell \kappa)$.}
%
%\item
%
%  If the strings were not different, the parties output
%  $\repa{x},\repa{y},\repa{z}$.
%
%\end{enumerate}
%
%\end{boxfig}

\begin{boxfig}{Protocol for authenticated local AND with leaking bit}{LAAND}
The protocol runs $\ell$ times in parallel. Here described for a
single leaky authenticated local AND:
\begin{enumerate}
\item

  \Ali and \Bob ask the dealer for $\repa{x},\repa{y},\repa{r}$. (The
  global key is $\Delta_A$).

\item

  \Ali computes $z=xy$ and announces $d=z\oplus r$.

\item 

  The parties compute $\repa{z}=\repa{r}\oplus d$.

\item
  
  \Bob sends $U = H(K_x||K_z) \oplus H(K_x \oplus \Delta_A||K_y \oplus K_z)$ to
  \Ali.

\item

  If $x=0$, then \Ali lets $V = H(M_x||M_z)$.  If $x=1$, then \Ali
  lets $V = U \oplus H(M_x||M_y\oplus M_z)$.

\item

  \Ali and \Bob call the \EQ box, with inputs
  $V$ and $H(K_x||K_z)$ respectively. All the $\ell$ calls to
  \EQ are handled using a single call to
  $\EQ(\ell \kappa)$.

\item

  If the strings were not different, the parties output
  $\repa{x},\repa{y},\repa{z}$.

\end{enumerate}
\end{boxfig}

%Unfortunately, now \Bob can mount a selective failure attack. If \Bob
%sends $U \oplus E$, where $U$ is the correct value according to the
%protocol and $E \ne 0^{\kappa}$ is an error term, then \Ali inputs
%$H(K_x,K_z)$ to the comparison when $x = 0$ and $H(K_x,K_z) \oplus E$
%when $x = 1$. This allows \Bob to make a guess at \Ali's bit $x$ and
%make sure that the protocol aborts only if the guess was wrong.
%Therefore, if the protocol does not abort, \Bob learns \Ali's
%bit. Thus \Bob can undetected learn $e$ of the bits of \Ali with
%probability $2^{-e}$.  We fix this problem below by generating many
%authenticated ANDs with possibly leaking $x$ (or leaky \AAND from now
%on) and then amplify their security. 

We now handle a few guessed $x$ bits by random bucketing and a
straight-forward combiner. In doing this efficiently, it is central
that the protocol was constructed such that only $x$ could leak. Had
\Bob been able to get information on both $x$ and $y$ we would have
had to do the amplification twice.

\begin{boxfig}{From Leaky Authenticated local ANDs to Authenticated
    local ANDs}{PROTAND}
The protocol is parametrized by positive integers $B$ and $\ell$.
\begin{enumerate}
\item

  \Ali and \Bob call $\LAAND(\ell')$ with $\ell' = B \ell$. If the
  call to \LAAND aborts, this protocol aborts. Otherwise, let
  $\{\repa{x_i},\repa{y_i},\repa{z_i}\}^{\ell'}_{i=1}$ be the outputs.

\item 

  \Ali picks a $B$-wise independent permutation $\pi$ on
  $\{1,\ldots,\ell'\}$ and sends it to \Bob. For $j = 0, \ldots,
  \ell-1$, the $B$ triples
  $\{\repa{x_{\pi(i)}},\repa{y_{\pi(i)}},\repa{z_{\pi(i)}}\}^{j B +
  B}_{i=j B + 1}$ are defined to be in the $j$'th bucket.

\item

  The parties combine the $B$ \LAANDs in the same bucket. We describe
  how to combine two \LAANDs, call them
  $\repa{x^1},\repa{y^1},\repa{z^1}$ and
  $\repa{x^2},\repa{y^2},\repa{z^2}$ into one, call the result
  $\repa{x},\repa{y},\repa{z}$:
  \begin{enumerate}
  \item 

    \Ali reveals $d = y^1 \oplus y^2$.

  \item 

    Compute $\repa{x} = \repa{x^1}\oplus \repa{x^2}$,
    $\repa{y}=\repa{y^1}$ and $\repa{z}= \repa{z^1}\oplus \repa{z^2}
    \oplus d \repa{x^2}$.

  \end{enumerate}
  To combine all $B$ \LAANDs in a bucket, just iterate by taking the
  result and combine it with the next element in the bucket.
    
\end{enumerate}
\end{boxfig}

Similar to the the way we removed leakage in \secref{aot} we start by
producing $B \ell$ \LAANDs. Then we randomly distribute the $B \ell$
\LAANDs into $\ell$ buckets of size $B$. Finally we combine the
\LAANDs in each bucket into one \AAND which is secure if at least one
\LAAND in the bucket was not leaky. The protocol is described
in~\figref{PROTAND}. 
The proof of~\thmref{thmAND} can be found in~\appref{proofofthmAND}.

\begin{theorem} \thmlab{thmAND} Let $\AAND(\ell)$ denote the box which
  outputs $\ell$ \AAND{}s as in \DEAL. If $(\log_2(\ell) + 1)(B-1) \ge
  \psi$, then the protocol in~\figref{PROTAND} securely implements
  $\AAND(\ell)$ in the $\LAAND(B \ell)$-hybrid model with security
  parameter $\psi$.
\end{theorem}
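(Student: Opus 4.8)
The plan is to give a UC simulation proof with the usual three ingredients: correctness of the honest execution, a simulator for a corrupted \Ali, and a simulator for a corrupted \Bob; the last is where essentially all the work lies, and its analysis runs in close parallel to that of \thmref{thmAOT}. For correctness it suffices to check the combiner: given two triples $\repa{x^1},\repa{y^1},\repa{z^1}$ and $\repa{x^2},\repa{y^2},\repa{z^2}$ with $z^i=x^iy^i$, after \Ali reveals $d=y^1\oplus y^2$ the combiner outputs $\repa{x}=\repa{x^1}\oplus\repa{x^2}$, $\repa{y}=\repa{y^1}$, $\repa{z}=\repa{z^1}\oplus\repa{z^2}\oplus d\repa{x^2}$, and then $z=x^1y^1\oplus x^2y^2\oplus(y^1\oplus y^2)x^2=(x^1\oplus x^2)y^1=xy$; since $\oplus$ and multiplication by a public bit preserve the MAC/key relation under the fixed $\Delta_A$, iterating over a bucket of size $B$ produces a correct \AAND with the same global key. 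It is also worth noting upfront that in \figref{PROTAND} \Bob is entirely passive after the \LAAND call — he only receives $\pi$ and the revealed $d_i$'s — which simplifies the corrupted-\Bob simulator.

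For a corrupted \Ali, the simulator plays the \LAAND box (recording all of \Ali's chosen bits and MACs and a uniform $\Delta_A$), plays an honest \Bob — aborting if any revealed $d_i$ fails its MAC check — and submits the derived per-bucket values with their combined MACs to the ideal $\AAND(\ell)$ box as \Ali's randomness, forwarding global-key queries. Since the \LAAND box leaks nothing to \Ali and \Ali is otherwise passive apart from the $d_i$ reveals, the real and ideal executions coincide unless \Ali opens some $d_i$ to the wrong value with an accepting MAC; this would require her to exhibit $M_i\oplus\Delta_A$ from the MAC $M_i$ she holds, i.e.\ to guess $\Delta_A$, which (over all reveals and global-key queries) succeeds with probability at most $\poly(\psi)2^{-\kappa}$. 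Hence the executions are statistically close.

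For a corrupted \Bob the simulator again plays the \LAAND box: it receives \Bob's $\Delta_A$ and keys $K_{x_i},K_{y_i},K_{z_i}$, internally draws uniform $x_i,y_i$ and sets $z_i=x_iy_i$, and answers each guess $(i,g_i)$ honestly — if $g_i\neq x_i$ it triggers the \LAAND abort (so the protocol aborts), otherwise it records $i$ in a ``leaked set'' $T$. It then samples the $B$-wise independent permutation $\pi$ and, in the combiner, reveals each $d_i$ (an XOR of fresh, hence uniform, $y$-bits, with MAC computable from \Bob's keys and $\Delta_A$). To connect to the ideal $\AAND(\ell)$ box, for each bucket $j$ it submits as \Bob's chosen keys the combined $K^{(j)}_x=\bigoplus_{i\in\mathrm{bucket}\,j}K_{x_i}$ and the analogous $K^{(j)}_y,K^{(j)}_z$, receives uniform $x^{(j)},y^{(j)}$ with $z^{(j)}=x^{(j)}y^{(j)}$, and patches its internal values to match: it sets $y_{i_1}:=y^{(j)}$ for the bucket's first instance $i_1$ (and recomputes the remaining $y_i$ from the already-fixed $d_i$'s), picks an index $i^\ast$ in the bucket with $i^\ast\notin T$ and sets $x_{i^\ast}:=x^{(j)}\oplus\bigoplus_{i\neq i^\ast}x_i$, and recomputes all $z_i=x_iy_i$. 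None of these patched values was ever revealed to \Bob, so this is a faithful re-randomisation, and by the combiner algebra above the combined triple is then exactly $(x^{(j)},y^{(j)},z^{(j)})$ with the MACs \Bob expects. Hence the only way the simulation fails is the \emph{bad event}: for some bucket $j$ all $B$ of its instances lie in $T$, so no valid $i^\ast$ exists and \Bob already ``knows'' $x^{(j)}=\bigoplus x_i$, whereas the ideal box chose $x^{(j)}$ independently.

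The main obstacle is then to show $\prob{\text{bad event}}\le\poly(\psi)2^{-\psi}$. Two facts suffice. First, all of \Bob's guesses are made inside the \LAAND call, before \Ali picks $\pi$, so $T$ is independent of $\pi$; and since honest \Ali's bits $x_i$ are uniform and independent and any wrong guess aborts, \Bob gets at least $t$ correct guesses with probability at most $2^{-t}$, i.e.\ $\prob{|T|\ge t}\le 2^{-t}$, whence by summation by parts (using $\binom{t}{B}-\binom{t-1}{B}=\binom{t-1}{B-1}$) one gets $\E{\binom{|T|}{B}}=\sum_{t\ge1}\binom{t-1}{B-1}\prob{|T|\ge t}\le\sum_{t\ge1}\binom{t-1}{B-1}2^{-t}=1$. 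Second, $B$-wise independence of $\pi$ makes the $B$ instances in any fixed bucket a uniformly random $B$-subset of all $B\ell$ instances, so $\prob{\text{bucket }j\subseteq T\mid T}=\binom{|T|}{B}/\binom{B\ell}{B}$; a union bound over the $\ell$ buckets yields $\prob{\text{bad event}}\le\ell\,\E{\binom{|T|}{B}}/\binom{B\ell}{B}\le\ell/\binom{B\ell}{B}$. Finally, $\binom{B\ell}{B}=\Omega\bigl((e\ell)^{B}/\sqrt{B}\bigr)$ together with $e^{-B}\le 2^{-(B-1)}$ gives $\ell/\binom{B\ell}{B}=O\bigl(\sqrt{B}\,2^{-(B-1)(\log_2\ell+1)}\bigr)\le\poly(\psi)2^{-\psi}$ under the hypothesis $(\log_2\ell+1)(B-1)\ge\psi$ — the ``$+1$'' in $(\log_2\ell+1)$ being exactly the slack that absorbs the $B!/(B\ell)^{B}$ (Stirling) factor, and $B\le\psi+1$ keeping all prefactors polynomial in $\psi$. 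Combining this bound with the two simulators completes the proof, exactly mirroring \thmref{thmAOT}.
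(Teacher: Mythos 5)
Your proposal is correct and follows the same overall structure as the paper's proof (\appref{proofofthmAND}): combiner correctness, a straightforward simulator for corrupted \Ali, and the main work for corrupted \Bob consisting of (i) a simulator that plays \LAAND and patches the unleaked instance in each bucket to agree with the ideal \AAND output, and (ii) a bound on the probability that some bucket is entirely leaky. The one place you take a genuinely different route is in (ii): you bound $\E{\binom{|T|}{B}}\le 1$ via Abel summation against the tail bound $\prob{|T|\ge t}\le 2^{-t}$, then union-bound over buckets to get $\ell/\binom{B\ell}{B}$; the paper instead writes down the per-$\gamma$ cost $\alpha(\gamma,\ell,B)=2^{-\gamma}\binom{\gamma}{B}\ell\binom{B\ell}{B}^{-1}$ and explicitly maximizes it over $\gamma$ (optimum at $\gamma=2B-1$), then massages it to $(2\ell)^{1-B}$. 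Your argument is a bit slicker and handles an adaptive number of guesses without any discussion, whereas the paper's optimization implicitly reduces to the best fixed number of guesses. The trade-off is in the final numeric bound: the paper's maximization plus the estimate $\frac{(2B-1)!(B\ell-B)!}{(B-1)!(B\ell)!}<\frac{(2B)^B}{(B\ell)^B}$ gives a clean $(2\ell)^{1-B}\le 2^{-\psi}$ with no polynomial slack, while your $\ell/\binom{B\ell}{B}$ is in general slightly larger than $(2\ell)^{1-B}$ (already at $B=2,\ell=4$: $4/28>1/8$), so the Stirling step really is needed and leaves you with a residual $\poly(\psi)$ factor. That is harmless for the UC statement (negligible is negligible), but it is worth being aware that the simple bound $\binom{B\ell}{B}\ge\ell^B$ alone would only need $(B-1)\log_2\ell\ge\psi$, not the stated $(B-1)(\log_2\ell+1)\ge\psi$, and that the extra $+1$ in the hypothesis is exactly what absorbs the $B!/B^B$ and $\sqrt{B}$ constants, as you correctly observed.
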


%% In the proof it is shown that the probability that a bucket ends up
%% with only leaky triples is upper bounded by $(2 \ell)^{1-B}$ and that
%% the protocol is perfectly secure when all buckets contain a non-leaky
%% triple. As an example, if we generate $\ell = 2^{19}$ \AANDs with a
%% bucket size of $B = 6$, we get an insecurity of $2^{-100}$, which is
%% tolerable. Doing so would require $6 \cdot 2^{19}$ \LAANDs which in
%% turn would require $3 \cdot 6 \cdot 2^{19}$ \ABITs. Generating $3
%% \cdot 6 \cdot 2^{19}$ \ABITs can be done extremely fast as the
%% dominating cost is around $3 \cdot 6 \cdot 2^{19}$ calls to a hash
%% function and a few seed OTs. For timings we refer to \appref{exp}.

This completes the description of our protocol. For the interested
reader, a diagrammatic recap of the construction is given in
\appref{fod}.

\Section{Experimental Results}\seclab{exp}

We did a proof-of-concept implementation in Java. The hash function in
our protocol was implemented using Java's standard implementation of
SHA256. The implementation consists of a circuit-independent protocol
for preprocessing all the random values output by \DEAL, a framework
for constructing circuits for a given computation, and a run-time
system which takes preprocessed values, circuits and inputs and carry
out the secure computation. 

We will not dwell on the details of the implementation, except for one
detail regarding the generation of the circuits. In our
implementation, we do not compile the function to be evaluated into a
circuit in a separate step. The reason is that this would involve
storing a huge, often highly redundant, circuit on the disk, and
reading it back. This heavy disk access turned out to constitute a
significant part of the running time in an earlier of our prototype
implementations which we discarded. Instead, in the current prototype,
circuits are generated on the fly, in chunks which are large enough
that their evaluation generate large enough network packages that we
can amortize away communication latency, but small enough that the
circuit chunks can be kept in memory during their evaluation. A
circuit compiled is hence replaced by a succinct program which
generates the circuit in a streaming manner. This circuit stream is
then sent through the runtime machine, which receives a separate
stream of preprocessed \DEAL-values from the disk and then evaluates
the circuit chunk by chunk in concert with the runtime machine at the
other party in the protocol. The stream of preprocessed \DEAL-values
from the disk is still expensive, but we currently see no way to avoid
this disk access, as the random nature of the preprocessed values
seems to rule out a succinct representation.

For timing we did oblivious ECB-AES encryption. (Both parties input a
secret $128$-bit key $K_A$ respectively $K_B$, defining an AES key $K
= K_A \oplus K_B$. \Ali inputs a secret $\ell$-block message $(m_1,
\ldots, m_\ell) \in \zo^{128\ell}$. \Bob learns $(E_K(m_1), \ldots,
E_K(m_\ell))$.) We used the AES circuit
from~\cite{DBLP:conf/asiacrypt/PinkasSSW09} and we thank Benny Pinkas,
Thomas Schneider, Nigel P.~Smart and Stephen C.~Williams for providing
us with this circuit.

The reason for using AES is that it provides a reasonable sized
circuit which is also reasonably complex in terms of the structure of
the circuit and the depth, as opposed to just running a lot of AND
gates in parallel. Also, AES has been used for benchmark in previous
implementations, like~\cite{DBLP:conf/asiacrypt/PinkasSSW09}, which
allows us to do a crude comparison to previous implementations. The
comparison can only become crude, as the experiments were run in
different experimental setups.

\begin{boxfig}{Timings. Left table is average over $5$ runs. Right
    table is from single runs. Units are as follows: $\ell$ is number
    of $128$-bit blocks encrypted, $G$ is Boolean gates,
    $\sigma$ is bits of security, $\Tpre, \Tonl, \Ttot$ are
    seconds.}{B}
\begin{center}
\begin{tabular}{|r|r|r|r|r|r|r|}
\hline
\hline
~ $\ell$       & 
~ $G$          & 
~ $\sigma$     &  
~ $\Tpre$      & 
~ $\Tonl$      & 
~ $\Ttot/\ell$ &
~ $G/\Ttot$    \\
\hline
$1$         & 
$34\mathord{,}520$    & 
$55$         &
$38$        &
$4$         &
$44$        &
$822$       \\
$27$         & 
$922\mathord{,}056$    & 
$55$         &
$38$        &
$5$         &
$1.6$        &
$21\mathord{,}545$       \\
$54$         & 
$1\mathord{,}842\mathord{,}728$  & 
$58$         &
$79$        &
$6$         &
$1.6$        &
$21\mathord{,}623$       \\
$81$         & 
$2\mathord{,}765\mathord{,}400$  & 
$60$         &
$126$        &
$10$         &
$1.7$        &
$20\mathord{,}405$       \\
$108$         & 
$3\mathord{,}721\mathord{,}208$  & 
$61$         &
$170$        &
$12$         &
$1.7$        &
$20\mathord{,}541$       \\
$135$         & 
$4\mathord{,}642\mathord{,}880$  & 
$62$         &
$210$        &
$15$         &
$1.7$        &
$20\mathord{,}637$       \\
\hline
\hline
\end{tabular} ~\ ~
\begin{tabular}{|r|r|r|r|r|r|r|}
\hline
\hline
~ $\ell$       & 
~ $G$          & 
~ $\sigma$     &  
~ $\Tpre$      & 
~ $\Tonl$      & 
~ $\Ttot/\ell$ &
~ $G/\Ttot$    \\
\hline
$256$         & 
$8\mathord{,}739\mathord{,}200$  & 
$65$         &
$406$        &
$16$         &
$1.7$        &
$20\mathord{,}709$       \\
$512$         & 
$17\mathord{,}478\mathord{,}016$  & 
$68$         &
$907$        &
$26$         &
$1.8$        &
$18\mathord{,}733$       \\
$1\mathord{,}024$         & 
$34\mathord{,}955\mathord{,}648$  & 
$71$         &
$2\mathord{,}303$        &
$52$         &
$2.3$        &
$14\mathord{,}843$       \\
$2\mathord{,}048$         & 
$69\mathord{,}910\mathord{,}912$  & 
$74$         &
$5\mathord{,}324$        &
$143$         &
$2.7$        &
$12\mathord{,}788$       \\
$4\mathord{,}096$         & 
$139\mathord{,}821\mathord{,}440$  & 
$77$         &
$11\mathord{,}238$        &
$194$         &
$2.8$        &
$12\mathord{,}231$       \\
$8\mathord{,}192$         & 
$279\mathord{,}642\mathord{,}496$  & 
$80$         &
$22\mathord{,}720$        &
$258$         &
$2.8$         &
$12\mathord{,}170$       \\
$16\mathord{,}384$         & 
$559\mathord{,}284\mathord{,}608$  & 
$83$         &
$46\mathord{,}584$        &
$517$         &
$2.9$         &
$11\mathord{,}874$       \\
\hline
\hline
\end{tabular}

~

\end{center}
\end{boxfig}

In the timings we ran \Ali and \Bob on two different machines on
Anonymous University's intranet (using two Intel Xeon E3430 2.40GHz
cores on each machine). We recorded the number of Boolean gates
evaluated ($G$), the time spent in preprocessing ($\Tpre$) and the
time spent by the run-time system ($\Tonl$). In the table in
\figref{B} we also give the amortized time per AES encryption
($\Ttot/\ell$ with $\Ttot \Def \Tpre + \Tonl$) and the number of gates
handled per second ($G/\Ttot$). The time $\Tpre$ covers the time spent
on computing and communicating during the generation of the values
preprocessed by \DEAL, and the time spent storing these value to a
local disk.  The time $\Tonl$ covers the time spent on generating the
circuit and the computation and communication involved in evaluating
the circuit given the values preprocessed by \DEAL.

We work with two security parameters. The computational security
parameter $\kappa$ specifies that a poly-time adversary should have
probability at most $\poly(\kappa) 2^{-\kappa}$ in breaking the
protocol. The statistical security parameter $\sigma$ specifies that
we allow the protocol to break with probability $2^{-\sigma}$
independent of the computational power of the adversary. As an example
of the use of $\kappa$, our keys and therefore MACs have length
$\kappa$. This is needed as the adversary learns $H(K_i)$ and $H(K_i
\oplus \Delta)$ in our protocols and can break the protocol given
$\Delta$. As an example of the use of $\sigma$, when we generate
$\ell$ gates with bucket size $B$, then $\sigma \le
(\log_2(\ell)+1)(B-1)$ due to the probability $(2 \ell)^{1-B}$ that a
bucket might end up containing only leaky components. This probability
is independent of the computational power of the adversary, as the
components are being bucketed by the honest party after it is
determined which of them are leaky.

In the timings, the computational security parameter has been set to
$120$.  Since our implementation has a fixed bucket size of $4$, the
statistical security level depends on $\ell$. In the table, we specify
the statistical security level attained ($\sigma$ means insecurity
$2^{-\sigma}$). At computational security level $120$, the
implementation needs to do $640$ seed OTs. The timings do not include
the time needed to do these, as that would depend on the
implementation of the seed OTs, which is not the focus here. We note,
however, that using, e.g., the implementation in
\cite{DBLP:conf/asiacrypt/PinkasSSW09}, the seed OTs could be done in
around $20$ seconds, so they would not significantly affect the
amortized times reported.

The dramatic drop in amortized time from $\ell = 1$ to $\ell = 27$ is
due to the fact that the preprocessor, due to implementation choices,
has a smallest unit of gates it can preprocess for. The largest number
of AES circuits needing only one, two, three, four and five units is
$27$, $54$, $81$, $108$ and $135$, respectively. Hence we preprocess
equally many gates when $\ell = 1$ and $\ell = 27$.

As for total time, we found the best amortized behavior at $\ell =
54$, where oblivious AES encryption of one block takes amortized $1.6$
seconds, and we handle $21\mathord{,}623$ gates per second. As for
online time, we found the best amortized behavior at $\ell = 2048$,
where handling one AES block online takes amortized $32$ milliseconds,
and online we handle $1\mathord{,}083\mathord{,}885$ gates per
second. We find these timings encouraging and we plan an
implementation in a more machine-near language, exploiting some of the
findings from implementing the prototype.

\clearpage
\appendix 

\longOnly{\bibliographystyle{alpha}}
\shortOnly{\bibliographystyle{plain}}
\bibliography{AOT}

\newcommand{\etalchar}[1]{$^{#1}$}
\begin{thebibliography}{CvdGT95}

\bibitem[AHI10]{cryptoeprint:2010:544}
Benny Applebaum, Danny Harnik, and Yuval Ishai.
\newblock Semantic security under related-key attacks and applications.
\newblock Cryptology ePrint Archive, Report 2010/544, 2010.
\newblock \url{http://eprint.iacr.org/}.

\bibitem[BDNP08]{DBLP:conf/ccs/Ben-DavidNP08}
Assaf Ben-David, Noam Nisan, and Benny Pinkas.
\newblock {FairplayMP:} a system for secure multi-party computation.
\newblock In Peng Ning, Paul~F. Syverson, and Somesh Jha, editors, {\em ACM
  Conference on Computer and Communications Security}, pages 257--266. ACM,
  2008.

\bibitem[BDOZ11]{DBLP:conf/eurocrypt/BendlinDOZ11}
Rikke Bendlin, Ivan Damg{\aa}rd, Claudio Orlandi, and Sarah Zakarias.
\newblock Semi-homomorphic encryption and multiparty computation.
\newblock In Paterson \cite{DBLP:conf/eurocrypt/2011}, pages 169--188.

\bibitem[Can01]{DBLP:conf/focs/Canetti01}
Ran Canetti.
\newblock Universally composable security: A new paradigm for cryptographic
  protocols.
\newblock In {\em FOCS}, pages 136--145, 2001.

\bibitem[CHK{\etalchar{+}}11]{cryptoeprint:2011:257}
Seung~Geol Choi, Kyung-Wook Hwang, Jonathan Katz, Tal Malkin, and Dan
  Rubenstein.
\newblock Secure multi-party computation of boolean circuits with applications
  to privacy in on-line marketplaces.
\newblock Cryptology ePrint Archive, Report 2011/257, 2011.
\newblock \url{http://eprint.iacr.org/}.

\bibitem[CKKZ11]{cryptoeprint:2011:510}
Seung~Geol Choi, Jonathan Katz, Ranjit Kumaresan, and Hong-Sheng Zhou.
\newblock On the security of the free-xor technique.
\newblock Cryptology ePrint Archive, Report 2011/510, 2011.
\newblock \url{http://eprint.iacr.org/}.

\bibitem[CvdGT95]{DBLP:conf/crypto/CrepeauGT95}
Claude Cr{\'e}peau, Jeroen van~de Graaf, and Alain Tapp.
\newblock Committed oblivious transfer and private multi-party computation.
\newblock In Don Coppersmith, editor, {\em CRYPTO}, volume 963 of {\em Lecture
  Notes in Computer Science}, pages 110--123. Springer, 1995.

\bibitem[DK10]{DBLP:conf/fc/DamgardK10}
Ivan Damg{\aa}rd and Marcel Keller.
\newblock Secure multiparty aes.
\newblock In Radu Sion, editor, {\em Financial Cryptography}, volume 6052 of
  {\em Lecture Notes in Computer Science}, pages 367--374. Springer, 2010.

\bibitem[DO10]{DBLP:conf/crypto/DamgardO10}
Ivan Damg{\aa}rd and Claudio Orlandi.
\newblock Multiparty computation for dishonest majority: From passive to active
  security at low cost.
\newblock In Tal Rabin, editor, {\em CRYPTO}, volume 6223 of {\em Lecture Notes
  in Computer Science}, pages 558--576. Springer, 2010.

\bibitem[FIPR05]{DBLP:conf/tcc/FreedmanIPR05}
Michael~J. Freedman, Yuval Ishai, Benny Pinkas, and Omer Reingold.
\newblock Keyword search and oblivious pseudorandom functions.
\newblock In Joe Kilian, editor, {\em TCC}, volume 3378 of {\em Lecture Notes
  in Computer Science}, pages 303--324. Springer, 2005.

\bibitem[Gar04]{DBLP:conf/tcc/Garay04}
Juan~A. Garay.
\newblock Efficient and universally composable committed oblivious transfer and
  applications.
\newblock In Moni Naor, editor, {\em TCC}, volume 2951 of {\em Lecture Notes in
  Computer Science}, pages 297--316. Springer, 2004.

\bibitem[GMW87]{DBLP:conf/stoc/GoldreichMW87}
Oded Goldreich, Silvio Micali, and Avi Wigderson.
\newblock How to play any mental game or a completeness theorem for protocols
  with honest majority.
\newblock In {\em STOC}, pages 218--229. ACM, 1987.

\bibitem[Gol04]{Goldreich}
Oded Goldreich.
\newblock {\em Foundations of Cryptography, Vol. 2}.
\newblock Cambridge University Press, 2004.
\newblock \url{http://www.wisdom.weizmann.ac.il/~oded/foc-vol2.html}.

\bibitem[HEK{\etalchar{+}}11]{DBLP:conf/uss/HuangEKM11}
Yan Huang, David Evans, Jonathan Katz, , and Lior Malka.
\newblock Faster secure two-party computation using garbled circuits.
\newblock In {\em USENIX Security Symposium}, 2011.

\bibitem[HIKN08]{DBLP:conf/tcc/HarnikIKN08}
Danny Harnik, Yuval Ishai, Eyal Kushilevitz, and Jesper~Buus Nielsen.
\newblock {OT}-combiners via secure computation.
\newblock In Ran Canetti, editor, {\em TCC}, volume 4948 of {\em Lecture Notes
  in Computer Science}, pages 393--411. Springer, 2008.

\bibitem[HKN{\etalchar{+}}05]{DBLP:conf/eurocrypt/HarnikKNRR05}
Danny Harnik, Joe Kilian, Moni Naor, Omer Reingold, and Alon Rosen.
\newblock On robust combiners for oblivious transfer and other primitives.
\newblock In Ronald Cramer, editor, {\em EUROCRYPT}, volume 3494 of {\em
  Lecture Notes in Computer Science}, pages 96--113. Springer, 2005.

\bibitem[HKS{\etalchar{+}}10]{Henecka:2010:TTA:1866307.1866358}
Wilko Henecka, Stefan K\"{o}gl, Ahmad-Reza Sadeghi, Thomas Schneider, and Immo
  Wehrenberg.
\newblock Tasty: tool for automating secure two-party computations.
\newblock In {\em Proceedings of the 17th ACM conference on Computer and
  communications security}, CCS '10, pages 451--462, New York, NY, USA, 2010.
  ACM.

\bibitem[IKNP03]{DBLP:conf/crypto/IshaiKNP03}
Yuval Ishai, Joe Kilian, Kobbi Nissim, and Erez Petrank.
\newblock Extending oblivious transfers efficiently.
\newblock In Dan Boneh, editor, {\em CRYPTO}, volume 2729 of {\em Lecture Notes
  in Computer Science}, pages 145--161. Springer, 2003.

\bibitem[IKOS08]{DBLP:conf/stoc/IshaiKOS08}
Yuval Ishai, Eyal Kushilevitz, Rafail Ostrovsky, and Amit Sahai.
\newblock Cryptography with constant computational overhead.
\newblock In Cynthia Dwork, editor, {\em STOC}, pages 433--442. ACM, 2008.

\bibitem[IPS08]{DBLP:conf/crypto/IshaiPS08}
Yuval Ishai, Manoj Prabhakaran, and Amit Sahai.
\newblock Founding cryptography on oblivious transfer - efficiently.
\newblock In David Wagner, editor, {\em CRYPTO}, volume 5157 of {\em Lecture
  Notes in Computer Science}, pages 572--591. Springer, 2008.

\bibitem[IPS09]{DBLP:conf/tcc/IshaiPS09}
Yuval Ishai, Manoj Prabhakaran, and Amit Sahai.
\newblock Secure arithmetic computation with no honest majority.
\newblock In Omer Reingold, editor, {\em TCC}, volume 5444 of {\em Lecture
  Notes in Computer Science}, pages 294--314. Springer, 2009.

\bibitem[JMN10]{DBLP:conf/acns/JakobsenMN10}
Thomas~P. Jakobsen, Marc~X. Makkes, and Janus~Dam Nielsen.
\newblock Efficient implementation of the orlandi protocol.
\newblock In Jianying Zhou and Moti Yung, editors, {\em ACNS}, volume 6123 of
  {\em Lecture Notes in Computer Science}, pages 255--272, 2010.

\bibitem[KS08]{DBLP:conf/icalp/KolesnikovS08}
Vladimir Kolesnikov and Thomas Schneider.
\newblock Improved garbled circuit: Free xor gates and applications.
\newblock In Luca Aceto, Ivan Damg{\aa}rd, Leslie~Ann Goldberg, Magn{\'u}s~M.
  Halld{\'o}rsson, Anna Ing{\'o}lfsd{\'o}ttir, and Igor Walukiewicz, editors,
  {\em ICALP (2)}, volume 5126 of {\em Lecture Notes in Computer Science},
  pages 486--498. Springer, 2008.

\bibitem[LOP11]{DBLP:conf/crypto/LindellOP11}
Yehuda Lindell, Eli Oxman, and Benny Pinkas.
\newblock The ips compiler: Optimizations, variants and concrete efficiency.
\newblock In Phillip Rogaway, editor, {\em CRYPTO}, volume 6841 of {\em Lecture
  Notes in Computer Science}, pages 259--276. Springer, 2011.

\bibitem[LP11]{cryptoeprint:2010:284}
Yehuda Lindell and Benny Pinkas.
\newblock Secure two-party computation via cut-and-choose oblivious transfer.
\newblock TCC, 2011.

\bibitem[LPS08]{DBLP:conf/scn/LindellPS08}
Yehuda Lindell, Benny Pinkas, and Nigel~P. Smart.
\newblock Implementing two-party computation efficiently with security against
  malicious adversaries.
\newblock In Rafail Ostrovsky, Roberto~De Prisco, and Ivan Visconti, editors,
  {\em SCN}, volume 5229 of {\em Lecture Notes in Computer Science}, pages
  2--20. Springer, 2008.

\bibitem[MK10]{cryptoeprint:2010:584}
Lior Malka and Jonathan Katz.
\newblock {VMCrypt} - modular software architecture for scalable secure
  computation.
\newblock Cryptology ePrint Archive, Report 2010/584, 2010.
\newblock \url{http://eprint.iacr.org/}.

\bibitem[MNPS04]{DBLP:conf/uss/MalkhiNPS04}
Dahlia Malkhi, Noam Nisan, Benny Pinkas, and Yaron Sella.
\newblock Fairplay - secure two-party computation system.
\newblock In {\em USENIX Security Symposium}, pages 287--302. USENIX, 2004.

\bibitem[Nie07]{cryptoeprint:2007:215}
Jesper~Buus Nielsen.
\newblock Extending oblivious transfers efficiently - how to get robustness
  almost for free.
\newblock Cryptology ePrint Archive, Report 2007/215, 2007.
\newblock \url{http://eprint.iacr.org/}.

\bibitem[Pat11]{DBLP:conf/eurocrypt/2011}
Kenneth~G. Paterson, editor.
\newblock {\em Advances in Cryptology - EUROCRYPT 2011 - 30th Annual
  International Conference on the Theory and Applications of Cryptographic
  Techniques, Tallinn, Estonia, May 15-19, 2011. Proceedings}, volume 6632 of
  {\em Lecture Notes in Computer Science}. Springer, 2011.

\bibitem[PSSW09]{DBLP:conf/asiacrypt/PinkasSSW09}
Benny Pinkas, Thomas Schneider, Nigel~P. Smart, and Stephen~C. Williams.
\newblock Secure two-party computation is practical.
\newblock In Mitsuru Matsui, editor, {\em ASIACRYPT}, volume 5912 of {\em
  Lecture Notes in Computer Science}, pages 250--267. Springer, 2009.

\bibitem[SS11]{DBLP:conf/eurocrypt/ShelatS11}
Abhi Shelat and Chih-Hao Shen.
\newblock Two-output secure computation with malicious adversaries.
\newblock In Paterson \cite{DBLP:conf/eurocrypt/2011}, pages 386--405.

\bibitem[Yao82]{DBLP:conf/focs/Yao82b}
Andrew Chi-Chih Yao.
\newblock Protocols for secure computations (extended abstract).
\newblock In {\em FOCS}, pages 160--164. IEEE, 1982.

\end{thebibliography}

\clearpage

\Section{Complexity Analysis}
\applab{complexity}

We report here on the complexity analysis of our protocol. As showed
in Corollary~\ref{cor:total}, the protocol requires an initial call to
an ideal functionality for $(\OT(\frac{44}{3} \psi,
\psi),\EQ(\psi))$. After this, the cost per gate is only a number of
invocations to a cryptographic hash function $H$. In this section we
give the exact number of hash functions that we use in the
construction of the different primitives. As the final protocol is
completely symmetric, we count the total number of calls to $H$ made
by both parties.

\begin{description}
\item[Equality \EQ:] The \EQ box can be securely implemented with $2$
  calls to a hash function $H$.
\item[Authenticated OT \AOT:] Every \AOT costs $4B$ calls to \ABIT,
  $2B$ calls to \EQ, and $6B$ calls to $H$, where $B$ is the ``bucket
  size''.
\item[Authenticated AND \AAND:] Every \AAND costs $3B$ calls to \ABIT,
  $B$ calls to \EQ, and $3B$ calls to $H$, where $B$ is the ``bucket
  size''.
\item[2PC Protocol, Input Gate:] Input gates cost $1$ \ABIT.
\item[2PC Protocol, AND Gate:] AND gates cost $2$ \AOT, $2$ \AAND, $2$ \ABIT.
\item[2PC Protocol, XOR Gate:] XOR gates require no calls to $H$.
\end{description}

The cost per \ABIT, in the protocol described in the paper, requires
$59$ calls to $H$. However, using some further optimizations (that are
not described in the paper, as they undermine the modularity of our
constructions) we can take this number down to $8$.

By plugging in these values we get that the cost per input gate is
$59$ calls to $H$ ($8$ with optimizations), and the cost per AND gate
is $856B+118$ calls to $H$ ($142B+16$ with optimizations). The
implementation described in \secref{exp} uses the optimized version of
the protocol and buckets of fixed size $4$, and therefore the total
cost per AND gate is $584$ calls to $H$.

As described in \secref{twopcfromaot} we can greatly reduce communication
complexity of our protocol by deferring the MAC checks. However, this trick
comes at cost of two calls to $H$ (one for each player) every time we do a
``reveal''. This adds $2B$ hashes for each $\AOT$ and
$\AAND$ and in total adds $8B + 20$ hashes to the cost each AND
gate. This added cost is not affected by the optimization mentioned above.

\newcommand{\Adv}{\ant{A}}

\Section{Proof of \thmref{twopc}}\applab{proofoftwopc}
  The simulator can be built in a standard way, incorporating the
  \DEAL box and learning all the shares, keys and MACs
  that the adversary was supposed to use in the protocol.

  In a little more detail, knowing all outputs from \DEAL to the
  corrupted parties allows the simulator to extract inputs used by
  corrupted parties and input these to the box \FMPC
  on behalf of the corrupted parties. As an example, if \Ali is
  corrupted, then learn the $x_A$ sent to \Ali by \DEAL in
  \textbf{Input} and observe the value $x_B$ sent by \Ali to
  \Bob. Then input $x = x_A \oplus x_B$ to \FMPC. This is the same
  value as shared by $\reps{x}=\repab{x_A}{x_B}$ in the protocol.

  Honest parties are run on uniformly random inputs, and when a honest
  party (\Ali say) is supposed to help open $\reps{x}$, then the
  simulator learns from \FMPC the value $x'$ that $\reps{x}$ should be
  opened to. Then the simulator computes the share $x_B$ that \Bob
  holds, which is possible from the outputs of \DEAL to \Bob. Then the
  simulator learns the key $K_{x_A}$ that \Bob uses to authenticate
  $x_A$, which can also be computed from the outputs of \DEAL to
  \Bob. Then the simulator lets $x_A = x' \oplus x_B$ and and lets
  $M_{x_A} = K_{x_A} \oplus x_A K_{x_A}$ and sends $(x_A,M_{x_A})$ to
  \Bob.

  The simulator aborts if the adversary ever successfully sends some
  inconsistent bit, i.e., a bit different from the bit it should send
  according to the protocol and its outputs from \DEAL. 

  It is easy to see that the protocol is passively secure and that if
  the adversary never sends an inconsistent bit, then it is perfectly
  following the protocol up to input substitution. So, to prove
  security it is enough to prove that the adversary manages to send an
  inconsistent bit with negligible probability. However, sending an
  inconsistent bit turns out to be equivalent to guessing the global
  key $\Delta$.

  We now formalize the last claim. Consider the following
  game $\Game_{I,I}$ played by an attacker \Ali:
  \begin{description}
  \item[Global key:]

    A global key $\Delta \leftarrow \zo^\kappa$ is sampled with some
    distribution and \Ali might get side information on $\Delta$.

  \item[MAC query I:]

    If \Ali outputs a query $(\texttt{mac}, b, l)$, where $b \in \zo$
    and $l$ is a label which $\Ali$ did not use before, sample a
    fresh local key $K \inR \zo^\kappa$, give $M = K\oplus b \Delta$ to
    \Ali and store $(l,K,b)$.

  \item[Break query I:]

    If \Ali outputs a query $(\texttt{break}, a_1, l_1, \ldots, a_p,
    l_p, M')$, where $p$ is some positive integer and values
    $(l_1,K_1,b_1), \ldots, (l_p,K_p,b_p)$ are stored, then let $K =
    \oplus_{i=1}^p a_i K_i$ and $b = \oplus_{i=1}^p a_i b_i$. If $M' =
    K\oplus (1\oplus b) \Delta$, then $\Ali$ wins the game. This query
    can be used only once.
    
  \end{description}

  We want to prove that if any \Ali can win the game with probability
  $q$, then there exist an adversary \Bob which does not use more
  resources than \Ali and which guesses $\Delta$ with probability $q$
  without doing any MAC queries. Informally this argues that
  breaking the scheme is linear equivalent to guessing $\Delta$
  without seeing any MAC values.

  For this purpose, consider the following modified game $\Game_{II,II}$
  played by an attacker \Ali:
  \begin{description}
  \item[Global key:]

    \emph{No change}.

  \item[MAC query II:]

    If \Ali outputs a query $(\texttt{mac}, b, l, M)$, where $b \in
    \zo$ and $l$ is a label which $\Ali$ did not use before and $M
    \in \zo^\kappa$, let $K = M \oplus b \Delta$ and store $(l,K,b)$.

  \item[Break query II:]

    If \Ali outputs a query $(\texttt{break}, \Delta')$ where
    $\Delta' = \Delta$, then $\Ali$ wins the game. This query can be
    used only once.

  \end{description}

  We let $\Game_{II,I}$ be the hybrid game with \textbf{MAC query II} and
  \textbf{Break query I}.

  We say that an adversary $\Ali$ is no stronger than adversary
  $\Bob$ if $\Ali$ does not perform more queries than
  $\Bob$ does and the running time of $\Ali$ is asymptotically
  linear in the running time of $\Bob$.
  
  \begin{lemma}
    For any adversary $\Ali_{I,I}$ for $\Game_{I,I}$ there exists an
    adversary $\Ali_{II,I}$ for $\Game_{II,I}$ which is no stronger
    than $\Ali_{I,I}$ and which wins the game with the same
    probability as $\Ali_{I,I}$.
  \end{lemma}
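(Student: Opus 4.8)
The plan is to build $\Ali_{II,I}$ as a thin wrapper around $\Ali_{I,I}$ that simulates the \textbf{MAC query I} interface on top of the \textbf{MAC query II} interface, while forwarding everything else verbatim. The observation that makes this work is that in $\Game_{I,I}$ the value $M = K \oplus b\Delta$ returned on a MAC query is, for a fresh uniform local key $K \inR \zo^\kappa$, itself uniform in $\zo^\kappa$ and independent of $b$, of $\Delta$, and of all previously returned values and side information. Hence the only thing $\Ali_{I,I}$ ever gains from a MAC query is a fresh uniform $\kappa$-bit string, which $\Ali_{II,I}$ can produce on its own without ever touching $\Delta$.

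Concretely, $\Ali_{II,I}$ runs $\Ali_{I,I}$ internally, relaying to it unchanged the side information about $\Delta$. Whenever $\Ali_{I,I}$ issues $(\texttt{mac}, b, l)$ with $l$ a fresh label, the wrapper samples $M \inR \zo^\kappa$, issues $(\texttt{mac}, b, l, M)$ to its own game $\Game_{II,I}$, and hands $M$ back to $\Ali_{I,I}$. When $\Ali_{I,I}$ issues its single \textbf{Break query I} $(\texttt{break}, a_1, l_1, \ldots, a_p, l_p, M')$, the wrapper forwards it unchanged; this is meaningful precisely because $\Game_{I,I}$ and $\Game_{II,I}$ share the identical \textbf{Break query I}. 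The output of $\Ali_{II,I}$ is whatever $\Ali_{I,I}$ outputs.

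The correctness argument I have in mind is a distribution-equivalence check carried out query by query. After the wrapper's handling of a MAC query with label $l$, the game $\Game_{II,I}$ stores $(l, K, b)$ with $K = M \oplus b\Delta$ for the uniform $M$ the wrapper chose; since $M$ is uniform and independent of $\Delta$, the pair $(K, M)$ is distributed exactly as in $\Game_{I,I}$, where $K$ is uniform and $M = K \oplus b\Delta$. As the $M$'s across distinct queries are drawn with independent fresh randomness, the full transcript seen by $\Ali_{I,I}$ inside the wrapper is distributed identically to its transcript in a genuine execution of $\Game_{I,I}$. In particular the quantities $K = \bigoplus_{i=1}^p a_i K_i$ and $b = \bigoplus_{i=1}^p a_i b_i$ that the game derives from the \textbf{Break query I} are the same random variables in the two experiments, so the winning event $M' = K \oplus (1 \oplus b)\Delta$ occurs with the same probability. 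Finally, $\Ali_{II,I}$ makes exactly the same MAC and break queries as $\Ali_{I,I}$ and only adds the sampling of one $\kappa$-bit string per MAC query, so its running time is (asymptotically) linear in that of $\Ali_{I,I}$ and it is no stronger in the sense defined just before the lemma.

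I do not expect a genuine obstacle: the lemma is at bottom the remark that a one-time-pad mask is a uniform string, so the ``server picks $K$'' and ``adversary picks $M$'' interfaces are perfectly interchangeable. The only points to keep an eye on are that the wrapper must draw each $M$ with fresh randomness \emph{independent} of any $\Delta$-side-information $\Ali_{I,I}$ has already received --- which is automatic, since the wrapper samples it itself --- and that labels are never reused, which is inherited directly from the corresponding restriction on $\Ali_{I,I}$.
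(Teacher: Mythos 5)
Your proposal is correct and takes exactly the paper's route: wrap $\Ali_{I,I}$, answer each $(\texttt{mac},b,l)$ by sampling $M\inR\zo^\kappa$ and forwarding $(\texttt{mac},b,l,M)$, forward the break query verbatim, and observe that ``$K$ uniform, $M=K\oplus b\Delta$'' and ``$M$ uniform, $K=M\oplus b\Delta$'' induce the same joint distribution on $(K,M)$. Nothing to add.
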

  \begin{proof}
    Given an adversary $\Ali_{I,I}$ for $\Game_{I,I}$, consider the
    following adversary $\Ali_{II,I}$ for $\Game_{II,I}$. The
    adversary $\Ali_{II,I}$ passes all side information on $\Delta$ to
    $\Ali_{I,I}$. If $\Ali_{I,I}$ outputs $(\texttt{mac}, b, l)$, then
    $\Ali_{II,I}$ samples $M \inR \zo^{\kappa}$, outputs
    $(\texttt{mac}, b, l, M)$ to $\Game_{II,I}$ and returns $M$ to
    $\Ali_{I,I}$. If $\Ali_{I,I}$ outputs $(\texttt{break}, a_1, l_1,
    \ldots, a_p, l_p, M')$, then $\Ali_{II,I}$ outputs
    $(\texttt{break}, a_1, l_1, \ldots, a_p, l_p, M')$ to
    $\Game_{II,I}$. It is easy to see that $\Ali_{II,I}$ makes the
    same number of queries as $\Ali_{I,I}$ and has a running time
    which is linear in that of $\Ali_{I,I}$, and that $\Ali_{II,I}$
    wins with the same probability as $\Ali_{I,I}$. Namely, in
    $\Game_{I,I}$ the value $K$ is uniform and $M = K \oplus b
    \Delta$. In $\Game_{II,I}$ the value $M$ is uniform and $K = M
    \oplus b \Delta$. This gives the exact same distribution on
    $(K,M)$. \putbox
  \end{proof}

  \begin{lemma}
    For any adversary $\Ali_{II,I}$ for $\Game_{II,I}$ there exists
    an adversary $\Ali_{II,II}$ for $\Game_{II,II}$ which is no
    stronger than $\Ali_{II,I}$ and which wins the game with the same
    probability as $\Ali_{II,I}$.
  \end{lemma}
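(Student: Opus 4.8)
The plan is to let $\Ali_{II,II}$ run $\Ali_{II,I}$ internally as a black box, forwarding all side information on $\Delta$ unchanged and answering each \textbf{MAC query II} of $\Ali_{II,I}$ by simply relaying it to $\Game_{II,II}$ (the two games have identical MAC interfaces, so nothing needs to be done here beyond recording the triples $(l_i, b_i, M_i)$ that $\Ali_{II,I}$ submits). The only thing that requires translation is the break query, and the whole content of the lemma is that a \textbf{Break query I} can be rewritten as a \textbf{Break query II} without loss.

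First I would note that in $\Game_{II,I}$ the adversary itself supplies $M$ in each \textbf{MAC query II}, so the stored local key is $K_i = M_i \oplus b_i \Delta$ with $M_i, b_i$ known to $\Ali_{II,I}$. Hence for any break query $(\texttt{break}, a_1, l_1, \ldots, a_p, l_p, M')$, writing $M^* \Def \oplus_{i=1}^p a_i M_i$ and $b \Def \oplus_{i=1}^p a_i b_i$ --- both quantities that $\Ali_{II,II}$ can compute from what it has recorded --- linearity over $\mathbb{F}_2$ gives $K \Def \oplus_{i=1}^p a_i K_i = M^* \oplus b\Delta$. Next I would simplify the success predicate of \textbf{Break query I}: since $b\Delta \oplus (1\oplus b)\Delta = \Delta$ for every bit $b$ (one summand is $0^\kappa$, the other is $\Delta$), the condition $M' = K \oplus (1\oplus b)\Delta$ is equivalent to $M' = M^* \oplus \Delta$, i.e.\ to $\Delta = M' \oplus M^*$. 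In other words, $\Ali_{II,I}$ wins $\Game_{II,I}$ exactly when the string $M'$ it outputs in its break query satisfies $M' \oplus M^* = \Delta$.

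Therefore $\Ali_{II,II}$, upon receiving $(\texttt{break}, a_1, l_1, \ldots, a_p, l_p, M')$ from $\Ali_{II,I}$, computes $\Delta' \Def M' \oplus M^*$ and submits $(\texttt{break}, \Delta')$ to $\Game_{II,II}$. By the equivalence above, $\Ali_{II,II}$ wins precisely in those runs in which $\Ali_{II,I}$ wins, so the two winning probabilities are equal. For the resource bound: $\Ali_{II,II}$ issues exactly the same MAC queries and the single break query as $\Ali_{II,I}$, and its only overhead is the computation of $M^*$, which is a fixed number of XORs of $\kappa$-bit strings it already holds; hence its running time is linear in that of $\Ali_{II,I}$ and it is no stronger in the sense of the definition. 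I do not expect a genuine obstacle here; the one point to be careful about is simply that $\Ali_{II,II}$ must retain the $M_i$ values it relayed so as to be able to form $M^*$, which is unproblematic since it is the party choosing them.
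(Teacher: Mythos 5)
Your proposal is correct and follows the same reduction as the paper: record the $(l_i,b_i,M_i)$ triples, translate the break query into $(\texttt{break}, M' \oplus M^*)$ via the identity $b\Delta \oplus (1\oplus b)\Delta = \Delta$, and observe the translation preserves both the resource usage and the winning event. Your remark that the winning predicates coincide \emph{exactly} (not just one implication) is a small but welcome tightening of the paper's phrasing, which only states the forward direction before asserting equality of probabilities.
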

  \begin{proof}
    Given an adversary $\Ali_{II,I}$ for $\Game_{II,I}$, consider the
    following adversary $\Ali_{II,II}$ for $\Game_{II,II}$. The
    adversary $\Ali_{II,II}$ passes any side information on $\Delta$
    to $\Ali_{II,I}$. If $\Ali_{II,I}$ outputs $(\texttt{mac}, b, l,
    M)$, then $\Ali_{II,II}$ outputs $(\texttt{mac}, b, l, M)$ to
    $\Game_{II,II}$ and stores $(l,M,b)$.  If $\Ali_{II,I}$ outputs
    $(\texttt{break}, a_1, l_1, \ldots, a_p, l_p, M')$, where values
    $(l_1,M_1,b_1), \ldots, (l_p,M_p,b_p)$ are stored, then let $M =
    \oplus_{i=1}^p a_i M_i$ and $b = \oplus_{i=1}^p a_i b_i$ and
    output $(\texttt{break}, M \oplus M')$. For each $(l_i,M_i,b_i)$
    let $K_i$ be the corresponding key stored by $\Game_{II,II}$. We
    have that $M_i = K_i \oplus b_i \oplus \Delta$, so if we let $K =
    \oplus_{i=1}^p a_i K_i$, then $M = K \oplus b \Delta$.  Assume
    that $\Ali_{II,I}$ would win $\Game_{II,I}$, i.e., $M' = K\oplus
    (1\oplus b) \Delta$.  This implies that $M \oplus M' = K \oplus b
    \Delta \oplus K \oplus (1\oplus b) \Delta = \Delta$, which means
    that $\Ali_{II,II}$ wins $\Game_{II,II}$. \putbox
  \end{proof}

Consider then the following game $\Game_{II}$ played by an attacker
\Ali:
\begin{description}
\item[Global key:]

  \emph{No change}.
\item[MAC query:]

  \emph{No MAC queries are allowed.}

\item[Break query II:]

  \emph{No change}.

  %If \Ali outputs a query $(\texttt{break}, \Delta')$ where
  %$\Delta' = \Delta$, then $\Ali$ wins the game. This query can be
  %used only once.

\end{description}

\begin{lemma}
  For any adversary $\Ali_{II,II}$ for $\Game_{II,II}$ there exists
  an adversary $\Ali_{II}$ for $\Game_{II}$ which is no stronger
  than $\Ali_{II,II}$ and which wins the game with the same
  probability as $\Ali_{II,II}$.
\end{lemma}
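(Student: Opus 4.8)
The plan is to exploit the fact that, in $\Game_{II,II}$, the command \textbf{MAC query II} returns \emph{nothing} to the adversary: on input $(\texttt{mac}, b, l, M)$ the game merely sets $K = M \oplus b \Delta$ and stores $(l,K,b)$ internally, with no message sent back to \Ali. Hence such queries convey no information and the adversary can ``answer'' them itself at essentially no cost. Concretely, I would define $\Ali_{II}$ to run $\Ali_{II,II}$ as a subroutine, forwarding to it whatever side information on $\Delta$ the game $\Game_{II}$ provides. Whenever $\Ali_{II,II}$ issues a query $(\texttt{mac}, b, l, M)$, $\Ali_{II}$ simply ignores it (the empty response is exactly what $\Game_{II,II}$ would have produced); note that $\Ali_{II}$ itself never issues a MAC query, so the prohibition on MAC queries in $\Game_{II}$ is irrelevant. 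When $\Ali_{II,II}$ issues $(\texttt{break}, \Delta')$, $\Ali_{II}$ forwards $(\texttt{break}, \Delta')$ to $\Game_{II}$.

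For correctness I would argue that the view of $\Ali_{II,II}$ inside this simulation is distributed identically to its view in a real run of $\Game_{II,II}$: the distribution of $\Delta$ and of the side information is unchanged, and the only other difference between the two games---the availability of MAC queries---produces no observable output. Consequently $\Ali_{II,II}$ outputs a winning break query (one with $\Delta' = \Delta$) with exactly the same probability as in $\Game_{II,II}$, and in that event $\Ali_{II}$ wins $\Game_{II}$ as well. Finally, $\Ali_{II}$ makes one break query and no MAC query, so it performs no more queries than $\Ali_{II,II}$, and its running time is linear in that of $\Ali_{II,II}$ (the overhead is just relaying messages); hence $\Ali_{II}$ is no stronger than $\Ali_{II,II}$, as required.

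There is essentially no obstacle in this last step---which is the point: the whole reason for rewriting the game through the chain $\Game_{I,I} \to \Game_{II,I} \to \Game_{II,II}$ in the preceding lemmas was to arrive at a form in which the MAC oracle leaks nothing, so that this reduction becomes a triviality. The only things worth double-checking are (i) that \textbf{MAC query II} genuinely sends no message to the adversary, and (ii) that chaining this lemma with the previous three yields the statement needed for \thmref{twopc}: any attacker who can make an honest party in the protocol of \figref{PI2PC} accept an inconsistent bit can be converted, with no extra resources, into one that guesses the relevant global key $\Delta$ with the same probability while seeing no MAC values at all. Since in the real protocol the honest party never reveals local keys and the environment learns nothing about $\Delta$ that is relevant here, that probability is $\poly(\kappa) 2^{-\kappa}$, and \thmref{twopc} follows.
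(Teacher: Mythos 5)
Your proposal is correct and takes essentially the same approach as the paper: since \textbf{MAC query II} returns nothing to the adversary, those queries are informationless and can be dropped without affecting the winning probability. The paper's proof is even terser (it literally sets $\Ali_{II} = \Ali_{II,II}$ and observes that $\Game_{II}$ ignores MAC queries); your version, in which $\Ali_{II}$ wraps $\Ali_{II,II}$ and silently swallows the MAC queries, is slightly more careful about the phrase ``no MAC queries are allowed'' but amounts to the same thing.
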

\begin{proof}
  Let $\Ali_{II} =\Ali_{II,II}$. The game $\Game_{II}$ simply ignores
  the MAC queries, and it can easily be seen that they have no effect
  on the winning probability, so the winning probability stays the
  same. \putbox
\end{proof}

\begin{corollary}
  For any adversary $\Ali_{I,I}$ for $\Game_{I,I}$ there exists an
  adversary $\Ali_{II}$ for $\Game_{II}$ which is no stronger than
  $\Ali_{I,I}$ and which wins the game with the same probability as
  $\Ali_{I,I}$.
\end{corollary}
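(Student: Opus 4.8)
The plan is simply to compose the three lemmas just proved. Starting from an adversary $\Ali_{I,I}$ for $\Game_{I,I}$, the first lemma produces an adversary $\Ali_{II,I}$ for $\Game_{II,I}$, the second produces an $\Ali_{II,II}$ for $\Game_{II,II}$, and the third produces an $\Ali_{II}$ for $\Game_{II}$. At every step the winning probability is preserved exactly, so $\Ali_{II}$ wins $\Game_{II}$ with exactly the same probability $q$ with which $\Ali_{I,I}$ wins $\Game_{I,I}$. Nothing new needs to be computed; the content was all in the three lemmas, whose point was to peel off, one layer at a time, (i) the fact that the MAC holder sees $M$ rather than choosing it, (ii) the fact that a ``break'' is a forgery relative to a linear combination of stored keys rather than a direct guess of $\Delta$, and (iii) the irrelevance of the MAC queries once ``break'' only asks for $\Delta$.

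The one thing I would actually verify is that the relation \emph{no stronger than} composes transitively. For the query profile this is immediate: each lemma returns an adversary making the same number of (and, up to the syntactic rewriting of the \texttt{mac} and \texttt{break} queries, the same kind of) oracle calls as its predecessor, so after three steps the query count is still bounded by that of $\Ali_{I,I}$. For running time, each reduction only adds, per \texttt{mac} query, the sampling of a uniform $\kappa$-bit string, and, for the single \texttt{break} query, one XOR of at most $p$ stored strings; hence each step replaces the adversary by one whose running time is asymptotically linear in that of its predecessor, and a linear function of a linear function is linear. Thus the running time of $\Ali_{II}$ is asymptotically linear in that of $\Ali_{I,I}$, so $\Ali_{II}$ is no stronger than $\Ali_{I,I}$ and the corollary follows.

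I do not expect any real obstacle here; the only mildly delicate point is the bookkeeping for transitivity above. It is worth recording how the corollary is used back in the proof of \thmref{twopc}: an adversary who makes the protocol of \figref{PI2PC} accept some inconsistent bit is exactly an $\Ali_{I,I}$ winning $\Game_{I,I}$, so it yields an $\Ali_{II}$ that guesses $\Delta$ from side information alone; since in that context the only side information on $\Delta$ is the equality tests provided by the global key queries of \DEAL, $\Delta$ retains essentially full min-entropy and $\Game_{II}$ cannot be won with probability better than $\poly(\kappa)2^{-\kappa}$, which bounds the simulation error.
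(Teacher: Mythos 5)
Your proposal is correct and matches the paper's approach exactly: the corollary is obtained by composing the three preceding lemmas, and the paper states it without further proof precisely because this composition is immediate. Your additional check that ``no stronger than'' composes transitively is a reasonable bit of diligence the paper leaves implicit.
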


This formalizes the claim that the only way to break the scheme is to
guess $\Delta$.

\Section{Proof of \thmref{THMWABIT}}\applab{proofofTHMWABIT}

The simulator answers a global key query $\Gamma$ to $\WABIT$ by doing
the global key query $\B{A} \Gamma$ on the ideal functionality $\ABIT$ and
returning the reply. This gives a perfect simulation of these queries,
and we ignore them below.

  Correctness of the protocol is straightforward: We have that $M'_i =
  K'_i \oplus x_i \Gamma_A$, so $M_i = \B{A} M'_i = \B{A} K'_i \oplus
  x_i \B{A} \Gamma_A = K_i \oplus x_i \Delta_A$. Clearly the protocol
  leaks no information on the $x_i$'s as there is only communication
  from \Bob to \Ali. It is therefore sufficient to look at the case
  where \Ali is corrupted. We are not going to give a simulation
  argument but just show that $\Delta_A$ is uniformly random in the
  view of \Ali except with probability $2^{2-\psi}$. Turning this
  argument into a simulation argument is straight forward.

  We start by proving three technical lemmas.

  Assume that $\LL$ is a class of leakage functions on $\tau$
  bits which is $\kappa$-secure. Consider the following game.
  \begin{enumerate}
  \item

    Sample $\Gamma_A \inR \zo^\tau$.

  \item
  
    Get $L \in \LL$ from \Ali and sample $(S,c) \leftarrow L$.

  \item

    Give $\{ (j,(\Gamma_A)_j) \}_{j \in S}$ to \Ali.

  \item

    Sample $\B{A} \inR \zo^{\psi \times \tau}$ and give $\B{A}$ to \Ali.

  \item

    Let $\Delta_A = \B{A} \Gamma_A$.

  \end{enumerate}

  We want to show that $\Delta_A$ is uniform to \Ali except with
  probability $2^{2-\psi}$.  When we say that \emph{$\Delta_A$ is
  uniform to \Ali} we mean that $\Delta_A$ is uniformly random in
  $\zo^\psi$ and independent of the view of \Ali. When we say
  \emph{except with probability $2^{2-\psi}$} we mean that there
  exists a failure event $F$ for which it holds that
  \begin{enumerate}
  \item

    $F$ occurs with probability at most $2^{2-\psi}$ and 

  \item

    when $F$ does not occur, then $\Delta_A$ is uniform to \Ali.

  \end{enumerate}

  For a subset $S \subset \{ 1, \ldots, \tau\}$ of the column indices,
  let $\B{A}^S$ be the matrix where column $j$ is equal to $\B{A}^j$
  if $j \in S$ and column $j$ is the $0$ vector if $j \not\in S$. We
  say that we blind out column $j$ with $0$'s if $j \not\in
  S$. Similarly, for a column vector $\vec{v}$ we use the notation
  $\vec{v}_S$ to mean that we set all indices $v_i$ where $i \not\in S$
  to be $0$. Note that $\B{A} \vec{v}_S = \B{A}^S \vec{v}$. Let
  $\overline{S} = \{ 1, \ldots, \tau \} \setminus S$.

  \begin{lemma}\lemmlab{lemmaone}
    Let $S$ be the indices of the bits learned by \Ali and let $\B{A}$
    be the matrix in the game above. If $\B{A}^{\overline{S}}$ spans
    $\zo^{\psi}$, then $\Delta_A$ is uniform to \Ali.
  \end{lemma}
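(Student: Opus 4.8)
The plan is to split the global key into the part \Ali has already learned and the part still hidden from her, and to show that the hidden part alone randomizes $\Delta_A$ completely. Put $\vec{v} = \Gamma_A$ and, using the restriction notation fixed just above, decompose $\vec{v} = \vec{v}_S \oplus \vec{v}_{\overline{S}}$. Then $\Delta_A = \B{A}\vec{v} = \B{A}\vec{v}_S \oplus \B{A}\vec{v}_{\overline{S}} = \B{A}^S\vec{v} \oplus \B{A}^{\overline{S}}\vec{v}$. The first summand $\B{A}^S\vec{v} = \B{A}\vec{v}_S$ is a deterministic function of $\B{A}$ and of the leaked bits $\{(\Gamma_A)_j\}_{j \in S}$, all of which are part of \Ali's view, so conditioned on that view it is a fixed string. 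Hence it suffices to prove that $\B{A}^{\overline{S}}\vec{v} = \sum_{j \in \overline{S}} v_j\,\B{A}^j$ is uniformly distributed on $\zo^\psi$ and independent of the rest of \Ali's view.

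Next I would track the conditioning carefully. In the game the leakage function $L$ is fixed before anything is revealed, so $(S,c) \leftarrow L$ is independent of $\Gamma_A$; only the coordinates $\{(\Gamma_A)_j\}_{j \in S}$ are ever disclosed; and $\B{A}$ is sampled independently of $\Gamma_A$. Therefore, conditioned on \Ali's entire view — $L$, the pair $(S,c)$, the leaked bits, and $\B{A}$ — the coordinates $\{(\Gamma_A)_j\}_{j \in \overline{S}}$ are still jointly uniform and mutually independent. Consequently $\B{A}^{\overline{S}}\vec{v}$ equals $T(\vec{w})$, where $T$ is the linear map $\zo^{\overline{S}} \to \zo^\psi$, $\vec{w} \mapsto \sum_{j \in \overline{S}} w_j\,\B{A}^j$ (determined by $\B{A}$ and $S$, hence fixed by the view), and $\vec{w} = \vec{v}_{\overline{S}}$ is uniform and independent of everything else in the view.

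Finally I would invoke the elementary fact that a surjective linear map over $\zo$ carries the uniform distribution to the uniform distribution: the hypothesis that $\B{A}^{\overline{S}}$ spans $\zo^\psi$ says precisely that the columns $\{\B{A}^j\}_{j \in \overline{S}}$ span $\zo^\psi$, i.e.\ that $T$ is onto, so each fibre $T^{-1}(\vec{u})$ is a coset of $\ker T$ and has the same cardinality $2^{\lvert \overline{S}\rvert - \psi}$; thus $T(\vec{w})$ is uniform on $\zo^\psi$. It follows that $\Delta_A$ is the XOR of a string fixed by \Ali's view and a uniform string independent of it, hence uniformly random in $\zo^\psi$ and independent of \Ali's view, which is exactly the assertion. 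I do not expect a real obstacle here: the content is pure linear algebra, and the only care needed is bookkeeping the conditioning so that $\vec{v}_{\overline{S}}$ is genuinely uniform given the view, together with the observation that the spanning hypothesis is an event measurable with respect to $(\B{A},S)$ and is therefore compatible with conditioning on \Ali's view. The genuinely probabilistic part — bounding the probability that $\B{A}^{\overline{S}}$ fails to span $\zo^\psi$, using $\kappa$-security of $\LL$ to control $\lvert S\rvert$ — is deferred to the subsequent lemmas.
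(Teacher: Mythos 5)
Your proof is correct and follows the same route as the paper's: decompose $\Delta_A = \B{A}^S\Gamma_A \oplus \B{A}^{\overline{S}}\Gamma_A$, observe that the first summand is determined by \Ali's view, and use the spanning hypothesis to conclude that the second summand is uniform because $(\Gamma_A)_{\overline{S}}$ remains uniform given the view. Your only addition is making the conditioning and the surjective-linear-map-preserves-uniformity step fully explicit, which the paper leaves more terse.
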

  \begin{proof}
    We start by making two simple observations. First of all, if \Ali
    learns $(\Gamma_A)_j$ for $j \in S$, then it learns
    $(\Gamma_A)_S$\footnote{Here we are looking at the string
    $\Gamma_A$ as a column vector of bits.}, so it knows $\B{A}
    (\Gamma_A)_S = \B{A}^S \Gamma_A$.  The second observation is that
    $\B{A}\Gamma_A = \B{A}^S \Gamma_A + \B{A}^{\overline{S}}
    \Gamma_A$, as $\B{A} = \B{A}^S + \B{A}^{\overline{S}}$.  The lemma
    follows directly from these observations and the premise: We
    have that $\B{A}^{\overline{S}} \Gamma_A$ is uniformly
    random in $\zo^\psi$ when the columns of $\B{A}^{\overline{S}}$
    span $\zo^{\psi}$. Since $\B{A}^{\overline{S}} \Gamma_A = \B{A}
    (\Gamma_A)_{\overline{S}}$ and $(\Gamma_A)_{\overline{S}}$ is
    uniformly random and independent of the view of \Ali it follows
    that $\B{A}^{\overline{S}} \Gamma_A$ is uniformly random and
    independent of the view of \Ali. Since $\B{A}^S \Gamma_A$ is known
    by \Ali it follows that $\B{A}^S \Gamma_A + \B{A}^{\overline{S}}
    \Gamma_A$ is uniform to \Ali. The proof concludes by using that
    $\Delta_A = \B{A}^S \Gamma_A + \B{A}^{\overline{S}} \Gamma_A$.
    \putbox
  \end{proof}

  \begin{lemma}\lemmlab{lemmatwo}
    Let $W$ be the event that $\vert S \vert \ge \tau - n$ and
    $c=1$. Then $\prob{W} \le 2^{-\psi}$.
  \end{lemma}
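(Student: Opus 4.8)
The plan is to derive this from a single application of Markov's inequality to the non-negative random variable $c\,2^{\vert S\vert}$, using the reformulation of $\leak_\LL$ established in \secref{leak}.

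First I would translate the hypothesis. Since $\LL$ is $\left(\frac34\tau\right)$-secure, by definition $\tau - \leak_\LL \ge \frac34\tau$, hence $\leak_\LL \le \frac14\tau$. By the identity proved at the end of \secref{leak}, $\leak_\LL = \max_{L\in\LL}\log_2\!\left(\E{c\,2^{\vert S\vert}}\right)$, where the expectation is over $(S,c)\leftarrow L$. The leakage function $L$ that \Ali submits in the game is a fixed element of $\LL$ (it is chosen before any bit of $\Gamma_A$ is revealed, and independently of $\Gamma_A$), so
\begin{equation*}
  \E{c\,2^{\vert S\vert}} \;\le\; 2^{\leak_\LL} \;\le\; 2^{\tau/4}\,.
\end{equation*}

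Next I would observe that the event $W$ is contained in $\left\{\, c\,2^{\vert S\vert}\ge 2^{\tau-n}\,\right\}$: on $W$ we have $c=1$ and $\vert S\vert\ge\tau-n$, so $c\,2^{\vert S\vert}=2^{\vert S\vert}\ge 2^{\tau-n}$. Markov's inequality then yields
\begin{equation*}
  \prob{W} \;\le\; \prob{c\,2^{\vert S\vert}\ge 2^{\tau-n}} \;\le\; \frac{\E{c\,2^{\vert S\vert}}}{2^{\tau-n}} \;\le\; \frac{2^{\tau/4}}{2^{\tau-n}} \;=\; 2^{\,n-\frac34\tau}\,.
\end{equation*}
Finally, plugging in the value of $n$ fixed earlier --- which, recalling $\tau=\frac{22}{3}\psi$ so that $\frac34\tau=\frac{11}{2}\psi$, is chosen precisely so that $\frac34\tau-n\ge\psi$ --- gives $\prob{W}\le 2^{-\psi}$, as claimed. (The same slack $\frac34\tau-n\ge\psi$ is what leaves $\vert\overline{S}\vert=\tau-\vert S\vert$ large enough, off the event $W$, for the spanning argument of \lemmref{lemmaone}.)

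There is essentially no obstacle here: it is a first-moment argument. The only points needing care are (i) invoking the $\leak_\LL$ identity for the specific, adversarially- but $\Gamma_A$-obliviously chosen $L$, and as an expectation over the internal sampling of $(S,c)$ rather than over $\Gamma_A$; and (ii) checking that the concrete constant $\tau=\frac{22}{3}\psi$ leaves enough room for the chosen $n$, which is where the seemingly arbitrary factor $\frac{22}{3}$ and the buffer between this lemma and \lemmref{lemmaone} get reconciled. Everything else is the standard Markov bound.
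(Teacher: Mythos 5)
Your proposal is correct and is in substance the same argument as the paper's: the paper bounds $\prob{W}$ by writing $\E{c\,2^{\vert S\vert}} \ge \prob{W}\,\E{c\,2^{\vert S\vert}\mid W} \ge \prob{W}\,2^{\tau-n}$, which is precisely Markov's inequality at threshold $2^{\tau-n}$, and then plugs in $\E{c\,2^{\vert S\vert}} \le 2^{\tau/4}$ and the parameter choice $n=\frac92\psi$, $\tau=\frac{22}{3}\psi$ (so $\frac14\tau-(\tau-n)=-\psi$) exactly as you do. The only cosmetic difference is that you use $\E{c\,2^{\vert S\vert}}\le 2^{\leak_\LL}$ directly while the paper invokes a WLOG-optimal $L$ to get equality; both are fine.
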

  \begin{proof}
    We use that
    \begin{itemize}
    \item $\kappa = \frac34 \tau$,
    \item $\tau = \alpha n$ for $\alpha = \frac{44}{27}$,
    \item $n = \frac92 \psi$,
    \item $\LL$ is $\kappa$-secure on $\tau$ bits.
    \end{itemize}
    Without loss of generality we can assume that \Ali plays an
    optimal $L \in \LL$, i.e., $\log_2(\E{c 2^{\vert S \vert}}) =
    \leak_\LL$. Since $\LL$ is $\kappa$ secure on $\tau$ bits, it
    follows that $\leak_\LL \le \tau - \kappa = \frac14 \tau$.
    This gives that
    \begin{equation}\eqlab{tobefrred}
      \E{c 2^{\vert S \vert}} \le 2^{\frac14 \tau}\ ,
    \end{equation}
    which we use later.

    Now let $\overline{W}$ be the event that $W$ does not happen. By
    the properties of conditional expected value we have that
    $$\E{c 2^{\vert S \vert}} = \prob{W} \E{c 2^{\vert S \vert} \vert
    W} + \prob{\overline{W}} \E{c 2^{\vert S \vert} \vert
    \overline{W}}\ .$$ When $W$ happens, then $\vert S \vert \ge \tau
    - n = (\alpha - 1)n$ and $c = 1$, so $c 2^{\vert S \vert} =
    2^{\vert S \vert} \ge 2^{(\alpha - 1)n}$. This gives that
    $$\E{c 2^{\vert S \vert} \vert W} \ge 2^{(\alpha - 1)n}\ .$$ Hence 
    $$\E{c 2^{\vert S \vert}} \ge \prob{W} 2^{(\alpha - 1)n}\ .$$ 
    Combining with \eqref{tobefrred} we get that
    $$\prob{W} \le 2^{\frac14 \tau - (\alpha - 1)n}\ .$$ It is,
    therefore, sufficient to show that $\frac14 \tau - (\alpha - 1)n =
    - \psi$, which can be checked to be the case by definition of
    $\tau, \alpha, n$ and $\psi$. 
\putbox
  \end{proof}

  \begin{lemma}\lemmlab{lemmathree}
    Let $x_1, \ldots, x_n \inR \zo^\psi$. Then $x_1, \ldots, x_n$ span
    $\zo^\psi$ except with probability $2^{1-\psi}$.
  \end{lemma}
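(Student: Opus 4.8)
The plan is a union bound over the proper subspaces of $\zo^\psi$, phrased through the dual description. The key observation is that $x_1, \ldots, x_n$ fail to span $\zo^\psi$ if and only if there is some nonzero $v \in \zo^\psi$ with $\langle x_i, v \rangle = 0$ for every $i$: such a $v$ exists precisely when all the $x_i$ lie in the hyperplane $v^{\perp}$, i.e., when $\mathrm{span}(x_1, \ldots, x_n)$ is contained in a proper subspace. So the ``bad'' event is $\bigcup_{v \ne 0^\psi} E_v$ with $E_v \Def \{ \langle x_i, v \rangle = 0 \text{ for all } i \}$.

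For each fixed nonzero $v$, the $n$ inner products $\langle x_1, v\rangle, \ldots, \langle x_n, v\rangle$ are independent unbiased bits, since the $x_i$ are independent and uniform and $v \ne 0^\psi$ (so $x \mapsto \langle x, v\rangle$ is a surjective linear map onto $\zo$); hence $\prob{E_v} = 2^{-n}$. Taking a union bound over the $2^\psi - 1$ nonzero values of $v$ gives
\[
  \prob{x_1, \ldots, x_n \text{ do not span } \zo^\psi} \;\le\; (2^\psi - 1)\, 2^{-n} \;<\; 2^{\psi - n}\ .
\]
Finally I would use the value of $n$ in play — here $n = \frac92 \psi$ (from the proof of \thmref{THMWABIT}), so in particular $n \ge 2\psi - 1$ — to conclude $2^{\psi - n} \le 2^{\psi - (2\psi - 1)} = 2^{1-\psi}$, as claimed. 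Equivalently, one can just expand the exact probability $\prod_{i=0}^{\psi - 1}(1 - 2^{i-n})$ that $n$ uniform vectors span and bound $1 - \prod_{i}(1-2^{i-n}) \le \sum_{i=0}^{\psi-1} 2^{i-n} < 2^{\psi - n}$; this yields the same estimate.

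I do not expect a real obstacle: the only subtlety worth flagging is that the bound $2^{1-\psi}$ is not unconditional — for $n < \psi$ the vectors can never span — so the argument genuinely uses $n \ge 2\psi - 1$, which is comfortably satisfied by the value $n = \frac92 \psi$ fixed earlier. Turning the ``$\Delta_A$ uniform except with probability $2^{1-\psi}$'' statement into what \lemmref{lemmaone} and \lemmref{lemmatwo} need is then immediate, and everything else is the routine union-bound computation above.
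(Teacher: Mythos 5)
Your proof is correct, and it takes a genuinely different (and simpler) route than the paper. The paper's proof defines indicators $Y_i$ for whether $x_i$ fails to extend the span of $x_1,\ldots,x_{i-1}$, shows each satisfies $\prob{Y_i=1}\le\tfrac12$ regardless of history, and then applies a random-walk (Chernoff-type) tail bound $\prob{Y\ge\tfrac12(a+n)}\le 2e^{-a^2/2n}$ with $a=\tfrac52\psi$ and $n=\tfrac92\psi$ to get $\prob{Y\ge n-\psi}\le 2^{1-\psi}$, which forces at least $\psi$ dimension-increasing steps. You instead use the classical dual argument: the vectors fail to span iff some nonzero $v\in\zo^\psi$ is orthogonal to all of them, each such $v$ has probability $2^{-n}$, and a union bound over the $2^\psi-1$ nonzero $v$ gives $<2^{\psi-n}$. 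Your route is both shorter and gives a strictly stronger bound — $2^{-7\psi/2}$ rather than $2^{1-\psi}$ at $n=\tfrac92\psi$ — and it needs only $n\ge 2\psi-1$ rather than the full $n=\tfrac92\psi$. You are also right to flag that the lemma as stated implicitly assumes a lower bound on $n$; the paper handles this by declaring inside the proof that ``we only use $n=\tfrac92\psi$,'' and your observation that $n\ge 2\psi-1$ already suffices is a genuine refinement of that remark.
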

  \begin{proof}
    We only use that
    \begin{itemize}
    \item $n=\frac{9}{2}\psi$. 
    \end{itemize}
    Define random variables $Y_1, \ldots, Y_n$ where $Y_i = 0$ if
    $x_1, \ldots, x_{i-1}$ spans $\zo^\psi$ or the span of $x_1,
    \ldots, x_{i-1}$ does not include $x_i$. Let $Y_i = 1$ in all
    other cases. Note that if $x_1, \ldots, x_{i-1}$ spans $\zo^\psi$,
    then $\prob{Y_i = 1} = 0 \le \oh$ and that if $x_1, \ldots,
    x_{i-1}$ does not span $\zo^\psi$, then they span at most half of
    the vectors in $\zo^{\psi}$ and hence again $\prob{Y_i = 1} \le
    \oh$. This means that it holds for all $Y_i$ that $\prob{Y_i = 1}
    \le \oh$ independently of the values of $Y_j$ for $j \ne i$. This
    implies that if we let $Y = \sum_{i=1}^n Y_i$, then
    $$\prob{Y \ge \oh(a + n) } \le 2 e^{-a^2/2 n}\ ,$$ using
    the random walk bound. Namely, let $X_i = 2 Y_i - 1$. Then $X_i
    \in \{ -1, 1 \}$ and it holds for all $i$ that $\prob{X_i = 1} \le
    \oh$ independently of the other $X_j$. If the $X_i$ had been
    independent and $\prob{X_i = 1} = \prob{X_i = -1} = \oh$, and $X =
    \sum_{i=1}^n X_i$, then the random walk bound gives that
    $$\prob{X \ge a} \le 2 e^{-a^2/2 n}\ .$$ Since we have
    that $\prob{X_i = 1} \le \oh$ independently of the other $X_j$,
    the upper bound applies also to our setting. Then use that $X = 2
    Y - n$.  

    If we let $a = \frac52 \psi$, then $\oh(a + n) = \frac72 \psi = n
    - \psi$ and $2 e^{-a^2/2 n} = 2 e^{-\left(\frac52 \psi\right)^2/2
      \frac92 \psi} = 2 e^{- \frac{25}{36} \psi}$, and $e^{-
      \frac{25}{36}} < \oh$. It follows that $\prob{Y \ge n - \psi}
    \le 2^{1-\psi}$. When $Y \le n - \psi$, then $Y_i = 0$ for at
    least $\psi$ values of $i$. This is easily seen to imply that
    $x_1, \ldots, x_n$ contains at least $\psi$ linear independent
    vectors.  \putbox
  \end{proof}

  Recall that $W$ is the event that $\vert S \vert \ge \tau - n$ and
  $c=1$. By~\lemmref{lemmatwo} we have that $\prob{W} \le 2^{-n} \le
  2^{-\psi}$. For the rest of the analysis we assume that $W$ does not
  happen, i.e., $\vert S \vert \le \tau -
    n$ and hence $\vert \overline{S}
  \vert \ge \tau = \frac92 \psi$.  
  Since $\B{A}$ is picked uniformly at
  random and independent of $S$ it follows that $\frac92 \psi$ of the
  columns in $\B{A}^{\overline{S}}$ are uniformly random and
  independent. Hence, by~\lemmref{lemmathree}, they span $\zo^\psi$
  except with probability $2^{1-\psi}$. We let $D$ be the event that
  they do not span. If we assume that $D$ does not happen, then
  by~\lemmref{lemmaone} $\Delta_A$ is uniform to $\Ali$. I.e., if the
  event $F = W \cup D$ does not happen, then $\Delta_A$ is uniform to
  $\Ali$. And, $\prob{F} \le \prob{W} + \prob{D} \le 2^{-\psi} +
  2^{1-\psi} \le 2^{2-\psi} $.
  
\Section{Proof of \thmref{bitauth}}\applab{proofofbitauth}

Notice that since we have to prove that we implement $\LABIT$, which
has the global key queries, it would be stronger to show that we
implement a version of $\LABIT'$ which does not have these global key
queries. This is what we do below, as we let $\LABIT$ denote this
stronger box.

Given a pairing $\pi$, let $\SS(\pi) = \{ i \vert i < \pi(i) \}$, i.e.,
for each pair we add the smallest indexed to $\SS(\pi)$.

The cases where no party is corrupted and where \Bob is corrupted is
straight forward, so we will focus on the case that \Ali is corrupted.

The proof goes via a number of intermediary boxes, and for each we
show linear reducibility.

\Subsubsection{Approximating \LABIT, Version 1}
 
This box captures the fact that the only thing a malicious
\Ali can manage is to use different $\Gamma$'s in a few bit
authentications.

\begin{boxfig}{The First Intermediate Box IB1}{BOXBI1}
\begin{description}
\item[Honest-Parties:] As in \LABIT.
\item[Corrupted Parties:] \ 
\begin{enumerate}

\item
  
  If \Bob is corrupted: As in \LABIT.

\item
\begin{enumerate} 
\item
 If \Ali is corrupted, then \Ali inputs a functions $\col: \{ 1,
  \ldots, \Tau \} \rightarrow \{ 1, \ldots, \Tau \}$. We think of $\col$
  as assigning colors from $\{ 1, \ldots, \Tau \}$ to $\Tau$ balls
  named $1, \ldots, \Tau$. In addition \Ali inputs $\Lambda_1,
  \ldots, \Lambda_\Tau \in \zo^{\ell}$ and $L_1, \ldots, L_\Tau \in
  \zo^\ell$.  
\item
  Then the box samples a uniformly random pairing $\pi : \{
  1, \ldots, \Tau \} \rightarrow \{ 1, \ldots, \Tau \}$ and outputs
  $\pi$ to \Ali. We think of $\pi$ as pairing the $\Tau$ balls. Let
  $\SS = \SS(\pi)$ and let $\M = \{ i \in \SS \vert \col(i) \ne \col(\pi(i))
  \}$. We call $i \in \M$ a \emph{mismatched} ball.
\item Now \Ali inputs the guesses
  $\{ (i, g_i) \}_{i \in \M}$. 
\item The box samples $(y_1, \ldots, y_\Tau)
  \inR \zo^\Tau$. Then the box lets $c=1$ if $g_i = y_i$
  for $i \in \M$, otherwise it lets $c=0$. If $c=0$ the box outputs
  $\texttt{fail}$ to \Bob and terminates. Otherwise, for $i \in
  \SS$ it computes $N_i = L_i \oplus y_i \Lambda_{\col(i)}$ and outputs
  $\{ ((N_i, y_i) \}_{i \in \SS}$ to \Bob.
\end{enumerate}
\end{enumerate}
\end{description}
\end{boxfig}

\begin{lemma}
  \func{IB1} is linear reducible to
  $(\OT(2\tau,\ell),\EQ(\tau\ell))$.
\end{lemma}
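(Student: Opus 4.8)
The plan is to use the protocol of \figref{LABITPROT} unchanged as the implementation: it calls $\OT(2\tau,\ell)$ once (the $2\tau$ bundled transfers) and $\EQ(\tau\ell)$ once (the single bundled comparison), and every party only samples $O(\tau\ell)$ bits and XORs keys and MACs, so its running time is linear in its inputs and outputs and the reduction is linear (though not local, since \Bob also sends $\pi$ and the $d_i$). What is left is active security. Since every message outside the two boxes goes from \Bob to \Ali, the uncorrupted case and the corrupted-\Bob case are routine --- \Sim runs the $\OT$ and $\EQ$ boxes internally, reads \Bob's choice bits and his candidate comparison values out of them, defines \Bob's outputs accordingly, samples and forwards $\pi$ and the $d_i$, and triggers an abort if the $\EQ$ inputs disagree --- so I would concentrate on a corrupted \Ali, which is exactly the situation \func{IB1} abstracts.

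For corrupted \Ali I would build \Sim as follows. Running $\OT(2\tau,\ell)$ internally, \Sim reads \Ali's sender inputs $(Y_{i,0},Y_{i,1})$, sets $L_i:=Y_{i,0}$, and gives two balls the same colour exactly when their values $Y_{i,0}\oplus Y_{i,1}$ agree, with $\Lambda_{\col(i)}:=Y_{i,0}\oplus Y_{i,1}$; this is the coarsest assignment making $N_i=L_i\oplus y_i\Lambda_{\col(i)}$ equal to the transfer output $Y_{i,y_i}$ that \Bob receives. \Sim feeds $(\col,\{\Lambda_c\},\{L_i\})$ to \func{IB1}, receives the uniform pairing $\pi$, relays it to \Ali, and for $i\in\SS(\pi)$ samples $d_i\inR\zo$ and sends them as \Bob's announcements --- faithful because in the real protocol $d_i=y_i\oplus y_{\pi(i)}$ with $y_{\pi(i)}$ uniform and never used again. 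When \Ali submits $\{Z_i\}_{i\in\SS(\pi)}$ to $\EQ$, \Sim compares each $Z_i$ with the value \Bob would hold, $W_i=Y_{i,y_i}\oplus Y_{\pi(i),\,y_i\oplus d_i}$, which is affine in the single unknown bit $y_i$ with slope $\Lambda_{\col(i)}\oplus\Lambda_{\col(\pi(i))}$. For an unmismatched pair the slope is $0^{\ell}$ and $W_i$ is a value \Ali knows; if $Z_i=W_i$ the pair never causes an abort (consistent with \func{IB1}, which asks for no guess on it), and if $Z_i\ne W_i$ the protocol aborts for certain, so \Sim triggers the abort. For a mismatched pair the slope is nonzero, so $Z_i$ coincides with $W_i$ for at most one value of $y_i$: if for exactly one value $g_i$, \Sim submits the guess $g_i$ to \func{IB1}; if for neither, \Sim triggers the abort. \Sim passes $\{(i,g_i)\}_{i\in\M}$ to \func{IB1}, which samples the $y_i$ and, absent a triggered abort, either outputs \texttt{fail} to \Bob (when some $g_i\ne y_i$) or delivers $\{(N_i,y_i)\}_{i\in\SS(\pi)}$; \Sim relays the outcome.

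To finish I would verify that the joint distribution of \Ali's view, \Bob's output and the abort event coincides in both experiments. \Ali's view before $\EQ$ is a uniform $\pi$ together with uniform independent $d_i$ in either world; a surviving run delivers to \Bob exactly $\{(N_i,y_i)\}$ with uniform $y_i$, as the protocol does; and the abort event is, in both worlds, ``$\exists\,i\in\SS(\pi):Z_i\ne W_i(y_i)$'', which the slope computation splits into ``\Sim triggered an abort, or some mismatched $g_i\ne y_i$'' --- precisely \func{IB1}'s behaviour. I expect the main obstacle to be making this last equivalence watertight rather than any isolated calculation: it demands the full case analysis of every behaviour a malicious \Ali can exhibit at $\EQ$ relative to \func{IB1}'s guess interface --- cheating in one transfer, cheating in several, supplying a value consistent with one, both, or neither setting of a paired bit, mixing honest and dishonest pairs --- and a check that each induces, through the extracted $(\col,\Lambda,L)$ and guesses, exactly the abort probability and the exact \Bob-output distribution of the real protocol, together with the routine but fiddly argument that reinterpreting \Ali's $2\tau$ transfer inputs in this way, and discarding the wasted halves $y_{\pi(i)}$, loses nothing. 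This is the ``straight-forward but very tedious business'' mentioned in the discussion preceding \thmref{bitauth}, and essentially all the work of the lemma lives there.
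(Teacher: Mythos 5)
Your simulator is the same as the paper's: extract $(Y_{i,0},Y_{i,1})$ from \Ali's OT inputs, set $L_i=Y_{i,0}$ and colour by $\Gamma_i=Y_{i,0}\oplus Y_{i,1}$, feed these to \func{IB1}, relay $\pi$ and uniformly random $d_i$ to \Ali, then match \Ali's $\EQ$ inputs $Z_i$ against the affine expression $W_i(y_i)$ to extract guesses for mismatched pairs or detect a certain abort. The only difference is that you spell out the certain-abort behaviour for unmismatched pairs (where $Z_i\ne W_i$ with slope $0^\ell$), which the paper's write-up leaves implicit, so this is the same argument, slightly more explicit.
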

\begin{proof}
  By
  observing \Ali's inputs to the OTs, the simulator learns all
  $(Y_{i,0},Y_{i,1})$. Let $L_i = Y_{i,0}$ and $\Gamma_i = Y_{i,0}
  \oplus Y_{i,1}$.

  Let $f = \vert \{ \Gamma_i \}_{i=1}^\Tau \vert$ and pick distinct
  $\Lambda_1, \ldots, \Lambda_f$ and $\col:
  \{1,\ldots,\Tau\}\rightarrow\{1,\ldots,\Tau\}$ such that $\Gamma_i =
  \Lambda_{\col(i)}$. By construction
  \begin{equation*}
    \begin{split}
      Y_{i,1} &= Y_{i,0} \oplus (Y_{i,0} \oplus Y_{i,1})\\
      &= L_i \oplus \Gamma_i\\
      &= L_i \oplus \Lambda_{\col(i)}\ .
    \end{split}
  \end{equation*}

  Input $\col$ and $\Lambda_1,\ldots, \Lambda_f$ and $L_1, \ldots,
  L_\Tau$ to \func{IB1} on behalf of \Ali and receive $\pi$.  Send $\pi$ to
  \Ali as if coming from \Bob along with uniformly random $\{ d_i
  \}_{i \in \SS}$.

  Then observe the inputs $Z_i$ from \Ali to the \EQ box.

  The simulator must now pick the guesses $g_i$ for $i \in \M$. Note
  that $i \in \M$ implies that $\Lambda_{\col(i)} \ne
  \Lambda_{\col(\pi(i))}$, which implies that $\Gamma_i \ne
  \Gamma_{\pi(i)}$.  We use this to pick $g_i$, as follows: after
  seeing $d_i$, \Ali knows that either $(y_i,y_{\pi(i)})=(0,d_i)$ or
  $(y_i,y_{\pi(i)})=(1,1\oplus d_i)$.  Hence an honest \Bob would
  input to the comparison the following value depending on $y_i$
  $$
  W_i(y_i) = (L_i \oplus L_{\pi(i)}\oplus d_i \Lambda_{\col(\pi(i))})
  \oplus y_i (\Lambda_{\col(i)}\oplus \Lambda_{\col(\pi(i))})\ .
  $$

  As $i \in \M$, the mismatched set, $\Lambda_{\col(i)} \neq
  \Lambda_{\col(\pi(i))}$ and therefore $W_i(0)\neq W_i(1)$. Therefore
  if \Ali's input to the \EQ box $Z_i$ is equal to $W_i(0)$
  (resp. $W_i(1)$), the simulator inputs a guess $g_i=0$
  (resp. $g_i=1$). In any other case, the simulator outputs
  \texttt{fail} and aborts.

  Notice that in the real-life protocol, if
  $g_i = y_i$, then $N_i = W_i(y_i) = Z_i$ and \Ali passes the
  test. If $g_i \ne y_i$, then $N_i = W_i(1 \oplus c_i) \ne Z_i$ and
  \Ali fails the test. So, the protocol and the simulation fails on
  the same event.  Note then that when the box does not fail, then it
  outputs
  \begin{equation*}
    \begin{split}
      N_i &= L_i \oplus y_i \Lambda_{\col(i)} \\
      &= Y_{i,0} \oplus y_i \Gamma_i\\
      &= Y_{i,0} \oplus y_i (Y_{i,0} \oplus Y_{i,1})\\
      &= Y_{i,y_i}\ ,
    \end{split}
  \end{equation*}
  exactly as the protocol. Hence the simulation is perfect. \putbox

%the input $W_i$ to the comparison will be the following
%  value depending $y_i$:
%  \begin{equation*}
%    \begin{split}
%      Z_i(c_i) &= X_{i,c_i} \oplus X_{\pi(i), c_{\pi(i)}}\\
%      &= (X_{i,0} \oplus c_i \Delta_i) \oplus (X_{\pi(i),0} \oplus
%      c_{\pi(i)} \Delta_{\pi(i)})\\  
%      &= (X_{i,0} \oplus X_{\pi(i),0}) \oplus (c_i \Delta_i \oplus
%      c_{\pi(i)} \Delta_{\pi(i)})\\  
%      &= (X_{i,0} \oplus X_{\pi(i),0}) \oplus 
%      c_i (\Delta_i \oplus \Delta_{\pi(i)}) \oplus z_i \Delta_{\pi(i)}\ .
%    \end{split}
%  \end{equation*}
%  Since $\Delta_i \oplus \Delta_{\pi(i)} \ne 0^\ell$ it follows that
%  $Z_i(0) \ne Z_i(1)$. If $N_i \ne Z_i(0)$ and $N_i \ne Z_i(1)$, then
%  let $d_i = \bot$. Otherwise, pick $d_i$ to be the unique bit such
%  that $N_i = Z_i(d_i)$. Notice that in the real-life protocol, if
%  $d_i = c_i$, then $N_i = Z_i(c_i) = Z_i$ and \Ali passes the
%  test. If $d_i \ne c_i$, then $N_i = Z_i(1 \oplus c_i) \ne Z_i$ and
%  \Ali fails the test. So, the protocol and the simulation fails on
%  the same event.  Note then that when the box does not fail, then it
%  outputs
%  \begin{equation*}
%    \begin{split}
%      N_i &= L_i \oplus c_i \Lambda_{c(i)} \\
%      &= X_{i,0} \oplus c_i \Delta_i\\
%      &= X_{i,0} \oplus c_i (X_{i,0} \oplus X_{i,1})\\
%      &= X_{i,c_i}\ ,
%    \end{split}
%  \end{equation*}
%  exactly as the protocol. Hence the simulation is perfect. \putbox
\end{proof}

\Subsubsection{Approximating \LABIT, Version 2}

We now formalize the idea that a wrong $\Gamma$-value is no worse that
a leaked bit.

We first need a preliminary definition of the most common color called
$\col_0$. If several colors are most common, then arbitrarily pick the
numerically largest one. To be more precise, for each color $c$, let
$C(c) = \{ j \in \{ 1, \ldots, \Tau \} \vert \col(j) = c \}$, let $a_0
= \max_{c} \vert C(c) \vert$ and let $\col_0 = \max \{ c \vert C(c) =
a_0 \vert \}$.

Consider the following box \func{IB2} in~\figref{BOXBI2} for
formalizing the second idea.

\begin{boxfig}{The Second Intermediate Box IB2}{BOXBI2}
\begin{description}
\item[Honest-Parties:] As in \LABIT.
\item[Corrupted Parties:] \ 
\begin{enumerate}

\item
  If \Bob is corrupted: As in \LABIT.
\item

\begin{enumerate}
\item
  If \Ali is corrupted, then \Ali inputs a function $\col: \{ 1,
  \ldots, \Tau \} \rightarrow \{ 1, \ldots, \Tau \}$. 
\item
 Then the box
  samples a uniformly random pairing $\pi : \{ 1, \ldots, \Tau \}
  \rightarrow \{ 1, \ldots, \Tau \}$ and outputs $\pi$ to \Ali. Let
  $\SS = \SS(\pi)$ and $\M = \{ i \in \SS \vert \col(i) \ne \col(\pi(i)) \}$.
\item Now \Ali inputs the guesses
  $\{ (i, g_i) \}_{i \in \M}$. 

\item The box lets $c=1$ if $g_i = y_i$ for $i \in \M$,
  otherwise it lets $c=0$. If $c=0$ the box outputs $\texttt{fail}$ to
  \Ali and terminates. Otherwise, the box determines $\col_0$.

  Then for $i \in \SS$, if $\col(i) \ne \col_0$, the box outputs $(i,
  y_i)$ to \Ali. Then \Ali inputs $L_1, \ldots, L_\Tau \in \zo^\ell$
  and $\Gamma_B \in \zo^\ell$ and for $i \in \SS$ the box computes
  $N_i = L_i \oplus y_i \Gamma_B$. Then it outputs $\{ (N_i, y_i)
  \}_{i\in \SS }$ to \Bob.
\end{enumerate}
\end{enumerate}
\end{description}
\end{boxfig}

\begin{lemma}
  \func{IB2} is linear locally reducible to \func{IB1}.
\end{lemma}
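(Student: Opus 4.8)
The plan is to note that \func{IB1} and \func{IB2} have identical honest behaviour (both equal \LABIT), so the implementing protocol is the trivial one --- \Ali and \Bob make one call to \func{IB1} and output whatever it returns. This protocol uses no communication beyond that single call and runs in linear time, so it is local and linear, and since \func{IB1} and \func{IB2} also coincide when no party or only \Bob is corrupted, the only thing to prove is security against a corrupted \Ali. In the real execution such an \Ali talks to \func{IB1}; I need a simulator \Sim that reproduces her view exactly while talking to the ideal \func{IB2}.

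The first part of \Sim is pure relaying. When the environment, playing \Ali, submits the \func{IB1} inputs $\col$, $\Lambda_1,\dots,\Lambda_\Tau$, $L_1,\dots,L_\Tau$, \Sim records the $\Lambda$'s and $L$'s, forwards $\col$ to \func{IB2}, receives the uniformly random pairing $\pi$, and returns it unchanged. \Sim then relays the guesses $\{(i,g_i)\}_{i\in\M}$, where $\M = \{i\in\SS(\pi)\mid\col(i)\neq\col(\pi(i))\}$. Since $\pi$, the set $\M$, and the fresh bits $y_i$ are drawn from the same distributions in the two boxes, the bit $c$ and hence the abort event have the same distribution in both worlds; that \func{IB1} routes \texttt{fail} to \Bob while \func{IB2} routes it to \Ali is immaterial here, as \Ali is corrupted, and can anyway be absorbed into how the protocol reacts to \func{IB1}'s \texttt{fail}.

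The substantive step is the $c=1$ branch. Here \func{IB2} fixes the most common colour $\col_0$ and hands \Sim the pairs $\{(i,y_i)\}$ for every $i\in\SS(\pi)$ with $\col(i)\neq\col_0$, and \Sim must then feed \func{IB2} values $\Gamma_B$ and $L'_1,\dots,L'_\Tau$ (it still has the adversary's earlier $\Lambda$'s and $L$'s available) so that \func{IB2}'s outputs $L'_i\oplus y_i\Gamma_B$ coincide with the real-world values $L_i\oplus y_i\Lambda_{\col(i)}$ for all $i\in\SS(\pi)$. The choice is $\Gamma_B := \Lambda_{\col_0}$, with $L'_i := L_i$ when $\col(i)=\col_0$ and $L'_i := L_i\oplus y_i(\Lambda_{\col(i)}\oplus\Lambda_{\col_0})$ when $\col(i)\neq\col_0$ (for $i\notin\SS(\pi)$ the value is never used). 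A one-line substitution gives $L'_i\oplus y_i\Gamma_B = L_i\oplus y_i\Lambda_{\col(i)}$ in both cases. The second formula is well defined precisely because \func{IB2} reveals $y_i$ on exactly the indices with $\col(i)\neq\col_0$ --- this is the formal content of the slogan \emph{a wrong $\Gamma$-value is no worse than a leaked bit}: the per-colour discrepancies $\Lambda_{\col(i)}\oplus\Lambda_{\col_0}$ can all be folded into the $L_i$'s, and the folding is simulatable on exactly the balls whose $y_i$ the box is willing to leak. Note that \Sim does \emph{not} forward the leaked $y_i$'s to the environment (the real \func{IB1} reveals nothing to \Ali); they are consumed only inside \Sim's computation of $\Gamma_B$ and the $L'_i$. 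With this, $\pi$, the $y_i$'s appearing in \Bob's output, and the $N_i$'s are identically distributed in the two worlds, so the simulation is perfect.

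The main --- and essentially only --- obstacle is spotting the choice $\Gamma_B = \Lambda_{\col_0}$ and checking that the induced correction to the $L_i$'s falls exactly on the indices whose $y_i$ the ideal box leaks; everything else is relaying plus a two-term $\oplus$, which also shows that \Sim, and a fortiori the implementing protocol, runs in linear time, giving linear local reducibility.
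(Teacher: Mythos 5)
Your proof is correct and takes essentially the same approach as the paper: the protocol is a single call to \func{IB1}, and the simulator against a corrupt \Ali forwards $\col$, $\pi$ and the guesses, then sets $\Gamma_B=\Lambda_{\col_0}$ and folds the per-colour discrepancies into the corrected $L'_i$ on exactly the indices whose $y_i$ \func{IB2} leaks. You are in fact slightly more careful than the paper's write-up, which in a couple of places writes ``for $i\in\M$'' where it should say ``for $i\in\SS(\pi)$'' (the subsequent verification line in the paper uses $\SS$, confirming this is a typo); you also correctly flag (and harmlessly absorb) the \texttt{fail}-routing glitch in the \func{IB2} box description, which is likewise a typo since all neighbouring boxes route \texttt{fail} to \Bob.
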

\begin{proof}
  The implementation of \func{IB2} consist simply of calling
  \func{IB1}.

  The case where \Bob or no party is corrupted is trivial, so
  assume that \Ali is corrupted. Note that the simulator must
  simulate \func{IB2} to the environment and is the one simulating
  \func{IB1} to the corrupted \Ali.

  First the simulator observes the inputs $\col$, $\Lambda_1, \ldots,
  \Lambda_\Tau \in \zo^{\ell}$ and $L_1, \ldots, L_\Tau \in \zo^\ell$
  of \Aliadv to \func{IB1} and inputs $\col$ to \func{IB2}.

  Then \func{IB2} outputs $\pi$ and the simulator inputs $\pi$ to
  \Aliadv as if coming from \func{IB1}, and computes $\M$ as
  \func{IB1} and \func{IB2} would have done.

  Then the simulator observes the guesses $\{(i,g_i)\}_{i\in \M}$ from
  \Aliadv to \func{IB1} and inputs $\{(i,g_i)\}_{i\in \M}$ to
  \func{IB2}. If \func{IB2} outputs \texttt{fail} to \Bob the
  simulation is over, and it is perfect as \func{IB1} and \func{IB2}
  fail based on the same event. If \func{IB2} does not fail it
  determines $\col_0$ and for $i \in \M$, if $\col(i) \ne \col_0$, the
  box outputs $(i, y_i)$ to the simulator. The simulator can also
  determine $\col_0$.

  Now let $\Gamma_B = \Lambda_{\col_0}$ and for $i \in \M$, if
  $\col(i) = \col_0$, let $L_i' = L_i$. Then for $i\in\M$, if
  $\col(i) \ne \col_0$, let $L_i' = (L_i \oplus y_i \Lambda_{\col(i)}) \oplus
  y_i \Gamma_B$. Then input $L_i', \ldots, L_\Tau'$ and $\Gamma_B$ to
  \func{IB2}.

  As a result \func{IB2} will  for $i \in \SS$ where
  $\col(i) = \col_0$, output $L_i' \oplus y_i \Gamma_B = L_i \oplus y_i
  \Lambda_{\col_0}$, and for for $i\in \SS$ where $\col(i) \ne
  \col_0$ it will output $L_i' \oplus y_i \Gamma_B = L_i \oplus y_i
  \Lambda_{\col(i)}$. Hence \func{IB2} gives exactly the outputs that
  \func{IB1} would have given after interacting with \Aliadv,
  giving a perfect simulation. \putbox
\end{proof}

\Subsubsection{Approximate \LABIT, Version 3}

We now massage \func{IB2} a bit to make it look like \LABIT. As a step
towards this, consider the box \func{IB3} in \figref{BOXTRAN}.

\begin{boxfig}{The third Intermediate Box, IB3}{BOXTRAN}
\begin{description}
\item[Honest-Parties:] As in \LABIT.
\item[Corrupted Parties:] \ 
\begin{enumerate}
\item Corrupted \Bob: As in \LABIT.
\item
\begin{enumerate}
\item
  If \Ali is corrupted, then \Ali inputs a function $\col: \{ 1,
  \ldots, \Tau \} \rightarrow \{ 1, \ldots, \Tau \}$. 
\item
  Then the box samples a uniformly random pairing $\pi : \{ 1, \ldots,
  \Tau \} \rightarrow \{ 1, \ldots, \Tau \}$ and outputs $\pi$ to
  \Ali. Let $\M = \{ i \in \SS \vert \col(i) \ne \col(\pi(i)) \}$. The
  box flips a coin $c \in \zo$ with $c=1$ with probability $2^{- \vert
  \M \vert}$.  If $c=0$ the box outputs $\texttt{fail}$ to \Bob and
  terminates. Otherwise, the box outputs \texttt{success} and the game
  proceeds.
\item Now \Ali inputs the guesses
  $\{ (i, g_i) \}_{i \in \M}$.

\item The box updates $y_i \leftarrow g_i$ for $i \in \M$.
  Then the box determines $\col_0$. Then for $i = \SS \setminus \M$,
  if $\col(i) \ne \col_0$, the box outputs $i$ to \Ali who inputs $g_i
  \in \zo$ and the box updates $y_i \leftarrow g_i$. 
\item  Then \Ali
  inputs $L_1, \ldots, L_{\Tau} \in \zo^\ell$ and $\Gamma_B \in \zo^\ell$
  and for $i\in\SS$ the box computes $N_i = L_i \oplus y_i
  \Gamma_B$. Then it outputs $\{(N_i, y_i)\}_{i\in\SS}$ to
  \Bob.
\end{enumerate}

\end{enumerate}
\end{description}
\end{boxfig}

\begin{lemma}
  \func{IB3} is linear locally reducible to \func{IB2}.
\end{lemma}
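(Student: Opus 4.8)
The plan is to use the trivial implementation of $\func{IB3}$ in the $\func{IB2}$-hybrid model: both parties invoke $\func{IB2}$ once and output whatever it returns. This makes a single call to $\func{IB2}$ with no other communication, and each party merely copies its $\func{IB2}$-output, so the reduction is local and the work is linear in the inputs and outputs by inspection. What remains is to build a simulator. For an honest pair of parties, and for a corrupted $\Bob$, both boxes behave exactly as $\LABIT$, so I would use the identity simulator there; the whole content of the proof is the case of a corrupted $\Ali$.

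For corrupted $\Ali$ I would have the simulator run $\Ali$ against an internally simulated copy of $\func{IB2}$ while playing $\Ali$ towards $\func{IB3}$. It reads $\Ali$'s coloring $\col$ and forwards it to $\func{IB3}$; it relays the pairing $\pi$ returned by $\func{IB3}$ to $\Ali$ as the $\pi$ of $\func{IB2}$, and recomputes the sets $\SS = \SS(\pi)$, $\M$ and the color $\col_0$ just as the boxes do. Now $\func{IB3}$ tosses its coin $c$ with $\prob{c=1} = 2^{-\vert \M \vert}$. If $c=0$ the simulator lets $\Ali$ submit its guesses to the simulated $\func{IB2}$ and then reports $\texttt{fail}$, matching the abort of $\func{IB3}$. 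If $c=1$ it reads $\Ali$'s guesses $\{(i,g_i)\}_{i \in \M}$ and forwards them to $\func{IB3}$ (which sets $y_i \leftarrow g_i$ for $i \in \M$); then, for each index $i \in \SS \setminus \M$ with $\col(i) \ne \col_0$ handed to it by $\func{IB3}$, it samples a fresh $y_i \inR \zo$, relays $(i,y_i)$ to $\Ali$ exactly as $\func{IB2}$ would, and returns $g_i := y_i$ to $\func{IB3}$; for $i \in \M$ with $\col(i) \ne \col_0$ it just relays $(i,g_i)$, a value $\Ali$ already holds. Finally it reads $\Ali$'s strings $L_1,\dots,L_\Tau,\Gamma_B$, forwards them to $\func{IB3}$, and $\func{IB3}$ computes $N_i = L_i \oplus y_i \Gamma_B$ for $i \in \SS$ and delivers $\{(N_i,y_i)\}_{i \in \SS}$ to $\Bob$, the same output distribution $\func{IB2}$ produces.

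To argue that this simulation is perfect I would compare the two experiments conditioned on $\Ali$'s view up to the pairing. In $\func{IB2}$ the bits $\{y_i\}_{i \in \M}$ are sampled uniformly and independently of $\Ali$'s guesses, so the ``success'' event that $g_i = y_i$ for every $i \in \M$ is a $\operatorname{Bernoulli}(2^{-\vert \M \vert})$ trial and, conditioned on it, $y_i = g_i$ for $i \in \M$; this is exactly the law of $\func{IB3}$'s coin together with its update rule. For $i \in \SS \setminus \M$ with $\col(i) \ne \col_0$, $\func{IB2}$ reveals a uniform $y_i$ to $\Ali$ and reuses it, which the simulator mirrors by sampling $y_i$ itself and echoing it into $\func{IB3}$. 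For $i \in \SS \setminus \M$ with $\col(i) = \col_0$ both boxes keep $y_i$ uniform and hidden. Since the messages to $\Bob$ are then functions of identical data and the message order (pairing, then guesses, then revealed bits, then $L_i$ and $\Gamma_B$) is preserved, the two ensembles coincide.

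I expect the only delicate point to be the coupling of $\func{IB3}$'s explicit abort coin with $\func{IB2}$'s implicit ``the guesses hit the not-yet-sampled random bits'' event: I would need to check that in $\func{IB2}$ those bits are drawn after, and independently of, the guesses, so that success really is a fresh Bernoulli trial whose conditioning forces $y_i = g_i$ for $i \in \M$, and that the simulator feeds $\func{IB3}$ precisely the uniform bits it showed $\Ali$. Everything else --- locality, linearity, and the honest and corrupted-$\Bob$ cases --- is immediate bookkeeping.
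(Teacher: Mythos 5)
Your proposal is correct and follows essentially the same route as the paper: use the trivial implementation (just call $\func{IB2}$ once), observe that $\func{IB2}$'s "guesses hit fresh uniform bits" event coincides in distribution with $\func{IB3}$'s explicit $2^{-\vert\M\vert}$ coin and forces $y_i = g_i$ for $i \in \M$ on success, and for the remaining revealed indices have the simulator sample a uniform bit, echo it to $\Ali$ as $\func{IB2}$'s $(i,y_i)$, and feed the same bit back into $\func{IB3}$ as the guess. Your treatment is slightly more explicit than the paper's (e.g.\ separately noting that for $i \in \M$ with $\col(i) \ne \col_0$ the simulator relays $(i,g_i)$, a value $\Ali$ already knows), but the underlying argument is identical.
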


\begin{proof}

It is easy to see that \func{IB3} is linear locally reducible to
\func{IB2}---again the implementation consist simply of calling
\func{IB2}. To see this, consider first the change in how the box
fails and how the $y_i$ for $i \in \M$ are set. In \func{IB2} the box
fails exactly with probability $2^{- \vert \M \vert}$ as the
probability that $g_i = y_i$ for $i \in \M$ is exactly $2^{- \vert \M
\vert}$. Furthermore, if \func{IB2} does not fail, then $y_i = g_i$
for $i \in \M$. So, this is exactly the same behavior as
\func{IB3}, hence this change is really just another way to
implement the same box. As for the second change, the simulator will
input a uniformly random $g_i \inR \zo$ to \func{IB3} when
\func{IB3} outputs $i$ and will then show $(i, y_i)$ to the corrupted
\Aliadv expecting to interact with \func{IB2}.\putbox

\end{proof}

We then argue that we can define a class $\LL$ such that $\LABIT^\LL$
is linear locally reducible to \func{IB3}. Let $\LL$ be the following
class.
\begin{itemize}
\item

  A leakage function is specified by $L = \col$, where $\col: \{ 1,
  \ldots, \Tau \} \rightarrow \{ 1, \ldots, \Tau \}$.

\item

  To sample a leakage function $L = \col$, sample a uniformly random
  pairing $\pi : \{ 1, \ldots, \Tau \} \rightarrow \{ 1, \ldots, \Tau
  \}$, let $\SS = \SS(\pi)$, let $\Pi: \SS(\pi) \rightarrow \{ 1,
  \ldots, \tau \}$ be the order preserving permutation, let $\M = \{ j
  \in \SS \vert \col(j) \ne \col(\pi(j)) \}$, let $c=1$ with
  probability $2^{-\vert \M \vert}$ and $c=0$ otherwise, let $\col_0$
  be the most common color as defined before, let $S' = \M \cup \{ j \in
  \SS \vert \col(j) \ne \col_0 \}$, $S = \pi(S')$ and output $(c,S)$.

\end{itemize}

Playing with \func{IB3} and $\LABIT^\LL$ will give the same failure
probability and will allow to specify the same bits. The only
difference is that when playing with $\LABIT^\LL$, the corrupted
\Aliadv does not get to see $\pi$, as $\LABIT^\LL$ does not leak the
randomness used to sample the leakage function $L$. Below we argue
that given $c$ and $S$ one can efficiently sample a uniformly random
pairing $\pi$ which would lead to $S$ given $c$. Turning this into a
simulation argument is easy: the simulator will know $c$ and $S$ and
will sample $\pi$ from these and show this $\pi$ to \Aliadv, hence
perfectly simulating \func{IB3}. This gives the following lemma.

\begin{lemma}
  $\LABIT^\LL(\tau,\ell)$ is linear locally reducible to \func{IB3}.
\end{lemma}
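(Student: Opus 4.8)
The plan is to use the trivial pass-through protocol --- each party relays its inputs to \func{IB3} and outputs whatever \func{IB3} returns --- which is obviously local and runs in time linear in the inputs and outputs, and to supply the matching simulator. For honest parties and for a corrupted \Bob, \func{IB3} already behaves exactly as \LABIT\ (in this section \LABIT\ is the strengthened box without the choice-bit queries), so those cases are immediate and all the content is in the corrupted-\Ali\ case.

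For corrupted \Ali\ the simulator does the obvious translation: it forwards the colouring $\col$ that \Ali\ submits to its simulated copy of \func{IB3} as the leakage function $L_\col\in\LL$ handed to $\LABIT^\LL$, and at the end it relays \Ali's chosen $\Gamma_B, L_1,\dots,L_{2\tau}$ (and the bits a corrupted \Ali\ fixes on the leaked positions) to $\LABIT^\LL$. The one piece of \Ali's view that $\LABIT^\LL$ does not itself disclose is the random pairing $\pi$ --- and, staged behind it, the mismatched set $\M$, and then, after \Ali's first batch of guesses, the residual set $\{\,i\in\SS(\pi)\setminus\M:\col(i)\neq\col_0\,\}$. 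So the simulator must, from the pair $(S,c)$ that $\LABIT^\LL$ drew from $L_\col$, manufacture a $\pi$ to show \Ali. Once that $\pi$ is fixed, everything else --- the two batches of revealed indices, the bit values on $S'=\M\cup\{\,i\in\SS(\pi):\col(i)\neq\col_0\,\}$, and \Bob's output $N_i=L_i\oplus y_i\Gamma_B$ on $\SS(\pi)$ --- is a deterministic function of $\pi$, $\col$, $\LABIT^\LL$'s output and \Ali's own messages, and matches \func{IB3} by construction, modulo a routine identification of \Ali's guesses (conditioned on the coin coming up $c=1$) with the bits a corrupted \Ali\ fixes on the leaked set, exactly the move already used in the \func{IB2}$\to$\func{IB3} step.

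The step I expect to be the real obstacle --- the ``tedious'' core the paper warns about --- is the combinatorial sampling claim: given $(S,c)$ drawn from $L_\col$, efficiently sample a $\pi$ with exactly the conditional law $\Pr_{L_\col}[\pi_{\mathrm{int}}=\pi\mid(S,c)]$. I would split on $c$. If $c=0$ the downstream box ignores $S$, so the conditional law is the uniform pairing weighted by the factor $1-2^{-|\M(\pi)|}$; rejection sampling from the uniform pairing (accept $\pi$ with probability $1-2^{-|\M(\pi)|}$) realises it, and since the acceptance probability equals $\Pr[c=0]$ the expected cost over a whole run is constant. If $c=1$ the conditional law is proportional to $2^{-|\M(\pi)|}\,\mathbf 1[S(\pi)=S]$; here I would use the structure of $\pi\mapsto S(\pi)=\pi(S')$: the dominant colour $\col_0$ depends only on $\col$; the $\col_0$-monochromatic pairs are exactly the complement of $S'$ in $\SS(\pi)$, and their number $\tau-|S|$ is fixed by $S$; every $\col_0$-coloured index of $S'$ is forced into $\M$; and on the remaining ``free'' pairs the event $i\in\M$ is a per-pair condition, so the weight $2^{-|\M(\pi)|}$ factors into one independent fair coin per free pair and the distribution becomes samplable in linear time. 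Verifying this factorisation, and then checking that composing ``draw $(S,c)$ from $L_\col$'' with the reconstruction reproduces the joint law of $(\pi,\M,S,c)$ induced by \func{IB3} --- so that \Ali's view and \Bob's output are identically distributed in both worlds --- finishes the argument.
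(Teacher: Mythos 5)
Your proposal follows the paper's skeleton: the implementation is the trivial pass-through to \func{IB3}, the only non-trivial case is a corrupted \Ali, and the simulator's one real job is to manufacture a pairing $\pi$ to show \Ali that is consistent with the $(S,c)$ it obtains from $\LABIT^\LL$. Where you diverge is in \emph{how} $\pi$ is produced. The paper computes a set $T$ from $S$ and $\col$, argues that the relevant pairings are exactly those satisfying certain constraints on the colour-$\col_0$ indices, and then samples a pairing \emph{uniformly} from that constrained set by a direct construction (the injections $\pi_0,\pi_1$ and the permutation $\pi_3$). You instead keep the weight $2^{-|\M(\pi)|}$ (and $1-2^{-|\M(\pi)|}$ when $c=0$) and split on $c$. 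This re-weighting is not cosmetic: $|\M(\pi)|$ is in general \emph{not} constant across pairings that yield the same $(S,c)$ --- two pairs lying entirely outside $C_0$ can be rearranged without changing $S$ while changing the number of mismatches --- so the true conditional law of $\pi$ given $(S,c=1)$ really is proportional to $2^{-|\M(\pi)|}$, and a uniform sample from the $T$-constrained set is a different distribution. You are therefore aiming at a genuinely different, and arguably more careful, target than what the paper's construction produces.

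The gap is that your plan for realising that target is not yet an argument. The factorisation claim for $c=1$ --- that $2^{-|\M(\pi)|}$ splits into ``one independent fair coin per free pair'' --- does not hold as stated: whether a free pair is a mismatch is a deterministic function of the colours of its two endpoints, not a freshly flipped coin, so what you need to sample is a non-uniformly weighted perfect-matching distribution over the free indices, and showing that this is samplable in (near) linear time is exactly the missing piece. The $c=0$ branch via rejection sampling is cleaner, but only gives constant \emph{expected} overhead per run rather than a worst-case polynomial bound, so it too needs a more careful statement before it can stand in for the simulator. In short: same route as the paper up to the sampling step, a different (weighted) conditional law at that step, and an unproven claim about how to sample from it.
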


The simulator knows $\col$ and $S$ and it can determine $\col_0$. From
$\col_0$ the simulator can also compute
\begin{equation*}
\begin{split}
T &= 
S \setminus \{ j \in \{1,\ldots,\Tau \} \vert \col(j) \ne \col_0 \}\\ &= 
\M \cap \{ j \in \{1,\ldots,\Tau \} \vert \col(j) = \col_0 \}\\ &= 
\{ j \in \{1,\ldots,\Tau \} \vert \col(j) = \col_0 \wedge \col(j) \ne
\col(\pi(j)) \}\\ &=
\{ j \in \{1,\ldots,\Tau \} \vert \col(j) = \col_0 \wedge \col(\pi(j))
\ne \col_0 \}\ .
\end{split}
\end{equation*}
This restriction is meet iff $\pi$ has the property that $\col(\pi(j))
\ne \col_0$ for $j \in T$ and $\col(\pi(j)) = \col_0$ for $j \in C_0
\setminus T$, where $C_0 = \{ j \vert \col(j) = \col_0
\}$. Furthermore, any $\pi$ meeting this restrictions would lead to
the observed value of $\pi$. It is hence sufficient to show that we
can sample a uniformly random $\pi$ meeting these restrictions.

Let $\overline{C_0} = \{ 1, \ldots, \Tau \} \setminus C_0$.  Pick
$\pi_0: T \rightarrow \overline{C_0}$ to be a uniformly random
injection on the specified domains.  Pick $\pi_1: C_0 \setminus T
\rightarrow C_0$ similarly.  Let $\pi_2: T \cup C_0 \rightarrow \{ 1,
\ldots, \tau \}$ be defined by $\pi_2(j) = \pi_0(j)$ for $j \in T$ and
$\pi_2(j) = \pi_1(j)$ for $j \in C_0 \cup T$. Since $\pi_0$ and
$\pi_1$ map into disjoint sets, this is again an injection.  Now let
$\pi_3: \{ 1, \ldots, \tau \} \setminus (C_0 \cup T) \rightarrow \{ 1,
\ldots, \tau \} \setminus \img(\pi_2)$ be a random permutation on the
specified domains. Define $\pi$ from $\pi_2$ and $\pi_3$ as we defined
$\pi_2$ from $\pi_0$ and $\pi_1$. Then it is easy to see that $\pi$ is
a uniformly random permutation meeting the restrictions. The
definition of $\pi$ shows how to sample it efficiently.

\Subsubsection{Concluding the Proof}

Using the above theorem and lemmata and the fact that linear
reducibility is transitive, we now have the following theorem.

\begin{corollary}
  $\LABIT^\LL(\tau,\ell)$ is linear reducible to
  $(\OT(2\tau,\ell),\EQ(\tau\ell))$.
\end{corollary}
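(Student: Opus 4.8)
The plan is to assemble the Corollary directly from the four reducibility results proved in the preceding subsections, using the transitivity of (linear, local) reducibility observed in \secref{uc}. Concretely, I would chain: (i) \func{IB1} is linear reducible to $(\OT(2\tau,\ell),\EQ(\tau\ell))$; (ii) \func{IB2} is linear locally reducible to \func{IB1}; (iii) \func{IB3} is linear locally reducible to \func{IB2}; and (iv) $\LABIT^\LL(\tau,\ell)$ is linear locally reducible to \func{IB3}. Composing (iv), (iii) and (ii) --- all of which are computation-only and linear --- gives that $\LABIT^\LL(\tau,\ell)$ is linear locally reducible to \func{IB1}, and composing this with (i) yields that $\LABIT^\LL(\tau,\ell)$ is linear reducible to $(\OT(2\tau,\ell),\EQ(\tau\ell))$, as claimed. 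The end result is only \emph{reducible}, not \emph{locally} reducible, precisely because (i) is the single place where the parties communicate outside the two oracle calls (\Bob sends the pairing $\pi$ and the differences $d_i$); everything else is a local renaming/combining step.

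The point that deserves a line of care is that ``linear'' is preserved under the composition and that ``one call to $\func{B}$'' still makes sense when $\func{B}$ is the compound oracle $(\OT(2\tau,\ell),\EQ(\tau\ell))$. For the latter I would follow the convention already used in (i), treating the pair as a single box that on one invocation delivers the $2\tau$ OTs and the one $\tau\ell$-bit equality check; the protocol of \figref{LABITPROT} invokes each of its two sub-functionalities exactly once. For the former, each of (ii)--(iv) involves no interaction and, by inspection, each party's running time is linear in the sizes of its inputs and outputs, and (i) is linear as well, so the end-to-end protocol runs in time linear in the parties' inputs and outputs --- consistent with the $O(\tau\ell)$ work and $O(\tau^2)$ communication quoted for the protocol in \figref{LABITPROT}. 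I would also make explicit that the leakage class $\LL$ threaded through (iv) is exactly the explicit class defined just before that lemma, and that the simulator for (iv) relies on the stated procedure for sampling a uniformly random pairing $\pi$ consistent with a given pair $(c,S)$.

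I do not expect a genuine obstacle: all the substance already lives in the four lemmata, and the corollary is pure bookkeeping. The only things to spell out are (a) that the four simulators compose --- each reduction is straight-line, so the simulator for the outer box runs the inner simulator as a subroutine, feeding it the interface of the next box down the chain --- and (b) that the ``trivial'' cases (no corruption, or corrupted \Bob), dismissed at the start of the proof of \thmref{bitauth}, are handled uniformly for all of \func{IB1}, \func{IB2}, \func{IB3} and $\LABIT^\LL$, so that the chained simulator covers them too.
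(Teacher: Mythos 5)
Your proposal is correct and matches the paper's own argument: the corollary is obtained exactly by composing the four preceding reducibility lemmata (from $(\OT,\EQ)$ up through \func{IB1}, \func{IB2}, \func{IB3} to $\LABIT^\LL$) using the transitivity of (linear, local) reducibility noted in~\secref{uc}. The extra care you devote to why the composition loses the ``local'' qualifier at step (i), and to how the compound oracle and the simulators compose, is sound and consistent with the paper's bookkeeping.
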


We now show that if we set $\kappa = \frac34 \tau$, then $\LL$ is
$\kappa$-secure. For this purpose we assign a price to each ball $j
\in \SS(\pi)$.
\begin{enumerate}
\item

  If $\col(j) \ne \col(\pi(j))$, then let $\price_{\col,\pi}(j) = 1$.

\item

  If $\col(j) = \col(\pi(j)) = \col_0$, then let $\price_{\col,\pi}(j) = 1$.

\item

  If $\col(j) = \col(\pi(j)) \ne \col_0$, then let $\price_{\col,\pi}(j) = 0$.

\end{enumerate}
Let $\price_{\col,\pi} = \sum_{j \in \SS} \price_{\col,\pi}(j)$.

\begin{lemma}
  Consider an adversary \Ali playing the game against $\LL$ and
  assume that it submits $L = \col$. Assume that the game uses $\pi$.
  Then the success probability of \Ali is at most
  $2^{-\price_{\col,\pi}}$.
\end{lemma}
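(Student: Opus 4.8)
The plan is to condition on the leakage function $L=\col$ that $\Ali$ submits and on the pairing $\pi$ that the game samples internally when drawing $(S,c)\leftarrow L$, and to bound $\Ali$'s winning probability by $2^{-\price_{\col,\pi}}$ for each fixed $(\col,\pi)$; since $\price_{\col,\pi}$ depends on $(\col,\pi)$ only, this is precisely the statement. I would decompose ``$\Ali$ wins'' into the event $c=1$ (she loses outright when $c=0$) and, conditioned on $c=1$, the event that $g_i=\Delta_i$ for every $i\in\overline{S}=\{1,\ldots,\tau\}\setminus S$, and multiply the probabilities of these two pieces.

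For the first piece, the definition of $\LL$ gives $c=1$ with probability exactly $2^{-\vert\M\vert}$, where $\M=\{\,j\in\SS(\pi)\mid\col(j)\neq\col(\pi(j))\,\}$ is determined by $(\col,\pi)$, and the coin for $c$ is flipped independently of the secret $\Delta$. For the second piece, condition on $c=1$: $\Ali$ is handed $\{(i,\Delta_i)\}_{i\in S}$ and must guess $\Delta_i$ for all $i\in\overline{S}$. The key point is that the whole mechanism producing $(S,c)$ from $\col$ consults only $\col$, the pairing $\pi$, and the $c$-coin --- never $\Delta$ --- so $S$, and hence $\overline{S}$, is independent of $\Delta$. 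Since $\Delta\inR\zo^\tau$ was sampled before $\Ali$ acted, the bits $(\Delta_i)_{i\in\overline{S}}$ are therefore uniform, mutually independent, and independent of $\Ali$'s view at the moment she commits her guesses (that view being her own randomness, the set $S$, and the revealed pairs $\{\Delta_i\}_{i\in S}$, the last of which says nothing about the bits outside $S$). Hence $\Ali$ guesses all of $\overline{S}$ correctly with probability at most $2^{-\vert\overline{S}\vert}$, and multiplying, $\Ali$ wins with probability at most $2^{-\vert\M\vert}\cdot 2^{-\vert\overline{S}\vert}$ given $(\col,\pi)$.

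It then remains to check the bookkeeping identity $\vert\M\vert+\vert\overline{S}\vert=\price_{\col,\pi}$. Recall $S$ is the image, under the order-preserving bijection $\SS(\pi)\to\{1,\ldots,\tau\}$, of $S'=\M\cup\{\,j\in\SS(\pi)\mid\col(j)\neq\col_0\,\}$, so $\vert\overline{S}\vert=\vert\SS(\pi)\setminus S'\vert=\vert\{\,j\in\SS(\pi)\mid\col(j)=\col(\pi(j))=\col_0\,\}\vert$. The three clauses in the definition of $\price_{\col,\pi}(j)$ partition $\SS(\pi)$: clause~(1), $\col(j)\neq\col(\pi(j))$, is exactly $\M$; clause~(2), $\col(j)=\col(\pi(j))=\col_0$, is exactly $\SS(\pi)\setminus S'$; clause~(3) is everything else. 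Clauses~(1) and~(2) each contribute $1$ to $\price_{\col,\pi}$ and clause~(3) contributes $0$, so $\price_{\col,\pi}=\vert\M\vert+\vert\overline{S}\vert$, and therefore $\Ali$ wins with probability at most $2^{-\price_{\col,\pi}}$. (The bound is in fact an equality for every $\Ali$ submitting this $\col$, since the revealed pairs carry no information about $(\Delta_i)_{i\in\overline{S}}$; this is exactly why $\price$ is the right accounting device.)

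I do not anticipate a real obstacle here: the proof is an independence observation followed by a counting identity. The only points requiring care are that the abstract leakage game affords $\Ali$ no adaptive leverage on $\Delta$ (the $c$-coin and the residual bits $(\Delta_i)_{i\in\overline{S}}$ are independent of $\Delta$), and that clause~(2) of $\price_{\col,\pi}$ must be matched with the complement $\SS(\pi)\setminus S'$. The genuinely substantive step comes afterwards --- averaging $2^{-\price_{\col,\pi}}$ over the uniformly random pairing $\pi$ to derive $\success_\LL\le 2^{-\kappa}$ with $\kappa=\frac34\tau$, i.e.\ that $\LL$ is $\kappa$-secure --- and that is where the combinatorics of random pairings actually enters.
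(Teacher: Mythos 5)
Your proof is correct and takes essentially the same route as the paper: decompose the win event into $c=1$ (probability $2^{-\vert\M\vert}$) times the correct-guessing event (probability $2^{-\vert\overline{S}\vert}$ by independence of $\Delta$ from the mechanism producing $S$ and $c$), then verify the bookkeeping identity $\price_{\col,\pi}=\vert\M\vert+\vert\overline{S}\vert$ by matching the three price clauses to the partition of $\SS(\pi)$. The paper phrases the last step by splitting $\price_{\col,\pi}$ into two auxiliary functions $\price^1$ and $\price^2$ rather than directly identifying the clauses with $\M$ and $\SS(\pi)\setminus S'$, but this is a notational rather than conceptual difference.
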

\begin{proof}
  Define $\price_{\col,\pi}^1(j)$ as $\price_{\col,\pi}(j)$ except
  that if $\col(j) = \col(\pi(j)) = \col_0$, then
  $\price^1_{\col,\pi}(j) = 0$.  Define $\price_{\col,\pi}^2(j)$ as
  $\price_{\col,\pi}(j)$ except that if $\col(j) \ne \col(\pi(j))$,
  then $\price_{\col,\pi}(j) = 0$. Then $\price_{\col,\pi}(j) =
  \price_{\col,\pi}^1(j) + \price_{\col,\pi}^2(j)$.  Define
  $\price_{\col,\pi}^1$ and $\price_{\col,\pi}^2$ by summing over $j
  \in \SS$. Then $\price_{\col,\pi} = \price_{\col,\pi}^1 +
  \price_{\col,\pi}^2$.  Note that $\vert \M \vert =
  \price_{\col,\pi}(j)$ and note that $\vert S' \vert = \tau -
  \price_{\col,\pi}^2(j)$,\footnote{Recall that $S'$ is defined during
  the definition of $\LL$ above.} as the only balls $j \in \SS$ which
  do not enter $S'$ are those for which $\col(j) = \col(\pi(j)) =
  \col_0$. We have that \Ali wins if $c=1$ and he guesses $y_{\Pi(j)}$
  for $j \in \SS \setminus S'$. The probability that $c=1$ is
  $2^{-\vert \M \vert} = 2^{-\price_{\col,\pi}(j)}$. We have that
  $\vert \SS \setminus S' \vert = \vert \SS \vert - \vert S' \vert =
  \tau - (\tau - \price_{\col,\pi}^2(j)) =
  \price_{\col,\pi}^2(j)$. So, the probability that \Ali guesses
  correctly is $2^{-\price_{\col,\pi}^2(j)}$. So, the overall success
  probability is
  $2^{-\price_{\col,\pi}^1(j)}2^{-\price_{\col,\pi}^2(j)} =
  2^{-\price_{\col,\pi}(j)}$. \putbox
\end{proof}

Now let $\pi$ be chosen uniformly at random and let $\price_{\col(j)}$ be
the random variable describing $\price_{\col,\pi}(j)$. Let $\price_{\col} =
\sum_{j \in \SS} \price_{\col(j)}$. It is then easy to see that the
probability of winning the game on $L = \col$ is at most
$$\success_{\col} = \sum_{p=0}^\tau \prob{\price_{\col} = p} 2^{-p}\ .$$

For each price $p$, let $P_p$ be an index variable which is $1$ if
$\price_{\col} = p$ and which is $0$ otherwise. Note that $\E{P_p} =
\prob{\price_{\col} = p}$, and note that $\sum_{p=0}^\tau P_p 2^{-p} =
2^{-\price_{\col}}$ as $P_p = 0$ for $p \ne \price_{\col}$ and $P_p = 1$
for $p = \price_{\col}$. Then
\begin{equation*}
  \begin{split}
    \success_{\col} &= \sum_{p=0}^\tau \prob{\price_{\col} = p} 2^{-p}\\ 
    &= \sum_{p=0}^\tau \E{P_p} 2^{-p}\\ 
    &= \E{\sum_{p=0}^\tau P_p 2^{-p}}\\ 
    &= \E{2^{-\price_c}}\\ 
    &= \E{2^{-\sum_{j \in \SS} \price_c(j)}}\ .
  \end{split}
\end{equation*}
Now let $\phi(x) = 2^{-x}$, and we have that
$$\success_{\col} =  
\E{\phi(\sum_{j \in \SS} \price_{\col}(j))}\ .$$
Since $\phi(x)$ is concave it follows from Jensen's inequality that     
$$\E{\phi(\sum_{j \in \SS} \price_{\col}(j))} \le
\phi\left(\E{\sum_{j \in \SS} \price_{\col}(j)}\right)\ .$$ 
Hence
$$\success_{\col} \le 2^{-\E{\sum_{j=1}^\tau \price_{\col}(\Pi^{-1}(j))}} =
2^{-\sum_{j=1}^\tau \E{\price_{\col}(\Pi^{-1}(j))}} = 2^{-\sum_{j \in \SS}
\E{\price_{\col}(j)}}\ .$$ It follows that if we can compute $m_0 = \min_{\col}
\sum_{j \in \SS} \E{\price_{\col}(j)}$, then $2^{-m_0}$ is an upper bound on
the best success rate.

We say that $L = \col$ is optimal if $\sum_{j \in \SS} \E{\price_L(j)}
= m_0$, and now find an optimal $L$.

We first show that there is no reason to use balls of color $\col_0$ in
the optimal strategy.
\begin{lemma}\lemmlab{polycrome}
  Let $L = \col$ be an optimal leakage function and let $\col_0 =
  \col_0(\col)$. Then there exist $j$ such that $\col(j) \ne \col_0$.
\end{lemma}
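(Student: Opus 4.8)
The plan is to prove the contrapositive: I will show that a \emph{monochromatic} coloring is never optimal, and since $\col(j) = \col_0$ holds for every $j$ exactly when $\col$ is constant, this yields the claim. (We may assume $\tau \ge 2$, which is always the case in our setting.)

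First I would evaluate the objective $\sum_{j \in \SS} \E{\price_\col(j)}$ on a constant coloring $\col \equiv a$. Then $\col_0 = a$, and for every pairing $\pi$ and every $j \in \SS(\pi)$ we are in the second case of the definition of $\price$, namely $\col(j) = \col(\pi(j)) = \col_0$, so $\price_{\col,\pi}(j) = 1$. Since $\vert \SS(\pi) \vert = \tau$, this gives $\sum_{j \in \SS} \E{\price_\col(j)} = \tau$, and in particular $m_0 \le \tau$.

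Next I would exhibit a single coloring that beats this value. Let $\col^*$ assign color $1$ to $\tau$ of the $\Tau = 2\tau$ balls and color $2$ to the remaining $\tau$ balls; by the tie-breaking rule in the definition of $\col_0$ we get $\col_0 = 2$. For a uniformly random pairing $\pi$, I would look at the $\tau$ pairs it induces: a pair consisting of two color-$1$ balls contributes price $0$ at its representative (third case, as the common color $1$ differs from $\col_0 = 2$), while every pair containing at least one color-$2$ ball contributes price $1$ (first case when the pair is mixed, second case when both balls have color $2 = \col_0$). Hence $\price_{\col^*,\pi} = \tau - P$ where $P$ is the number of pairs both of whose balls have color $1$. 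In a uniformly random perfect matching of $2\tau$ points a fixed pair of points is matched with probability $1/(2\tau-1)$, so by linearity $\E{P} = \binom{\tau}{2}/(2\tau-1) = \tau(\tau-1)/(2(2\tau-1)) > 0$. Therefore $\sum_{j \in \SS} \E{\price_{\col^*}(j)} = \tau - \tau(\tau-1)/(2(2\tau-1)) < \tau$, so $m_0 < \tau$.

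Combining these, an optimal $L = \col$ satisfies $\sum_{j \in \SS} \E{\price_\col(j)} = m_0 < \tau$, so it cannot be a constant coloring, i.e.\ there exists $j$ with $\col(j) \ne \col_0$, as desired. I do not anticipate a real obstacle here: once the three price cases are read off, the argument is just the observation that splitting the balls into two groups manufactures same-color pairs of the ``cheap'' color that a constant coloring can never produce. The only point needing a little care is the tie-breaking convention for $\col_0$, which is why I take an exactly even split, forcing $\col_0 = 2$ and hence the color-$1$ pairs to be genuinely priced at $0$.
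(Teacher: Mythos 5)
Your proof is correct and follows the paper's approach: show that a constant coloring yields expected price exactly $\tau$ and then exhibit a coloring beating it, so a constant coloring cannot be optimal. The paper merely asserts that "it is easy to see that there are strategies which do better," whereas you explicitly construct the even two-color split (which, incidentally, is the extremal coloring the paper later identifies) and verify the strict inequality via the matching probability $\E{P} = \binom{\tau}{2}/(2\tau-1) > 0$.
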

\begin{proof}
  Assume for the sake of contradiction that $\col(j) = \col_0$ for $j= 1,
  \ldots, \Tau$. Then clearly $\sum_{j \in \SS} \E{\price_{\col}(j)} =
  \tau$, and it is easy to see that there are strategies which do
  better than $2^{-\tau}$, so $L$ cannot be optimal. \putbox
\end{proof}

Let $\col_1, \ldots, \col_\Tau$ be an enumeration of the colors
different from $\col_0$, i.e., $\{ \col_0, \col_1, \ldots, \col_\Tau
\} = \{ 1, \ldots, \Tau \}$. Let $C_i$ be the balls with color
$\col_i$, i.e., $C_i = \{ j \in \{1,\ldots,\Tau\} \vert \col(j) =
\col_i \}$.  Note that $\{ 1, \ldots, \Tau \}$ is a disjoint union of
$C_1, \ldots, C_\Tau$.  Let $a_i$ be the number of balls of color
$\col_i$, i.e., $a_i = \vert C_i \vert$. Note that $\Tau =
\sum_{i=1}^\Tau a_i$.

With these definitions we have that
$$\sum_{j=1}^\Tau \E{\price_{\col}(j)} = \sum_{i=1}^\Tau \sum_{j \in C_i}
  \E{\price_{\col}(j)}\ .$$

For a ball $j \in C_0$ of color $\col_0$ we always have $\price_{\col}(j) =
\oh$, by definition of the price, so 
$$\sum_{j \in C_0} \E{\price_{\col}(j)} = \sum_{j \in C_0} \oh = \oh a_0\
.$$

For a ball $j \in C_i$ for $i > 0$ we have $\price_{\col}(j) = 0$ if
$\col(\pi(j)) = \col_i$ and $\price_{\col}(j) = \oh$ if $\col(\pi(j))
\ne \col_i$. We have that $\pi(j)$ is uniform on $\{ 1, \ldots, \Tau
\} \setminus \{ j \}$. Since $\col(j) = \col_i$ there are $a_i - 1$
balls $k \in \{ 1, \ldots, \Tau \} \setminus \{ j \}$ for which
$\col(k) = \col_i$. So,
\begin{equation*}
\begin{split}
\E{\price_{\col}(j)} &= 
\oh \frac{(\Tau-1) - (a_i -1)}{\Tau-1}\\ 
&= \oh \frac{\Tau - a_i}{\Tau-1}\ ,
\end{split}
\end{equation*}
which implies that
\begin{equation*}
\begin{split}
\sum_{j \in C_i} \E{\price_{\col}(j)} &= 
a_i \oh \frac{\Tau - a_i}{\Tau-1}\\ 
&= \oh \frac{1}{\Tau-1}(\Tau a_i - a_i^2)\ .
\end{split}
\end{equation*}
It follows that
\begin{equation*}
\begin{split}
\sum_{i=1}^{\Tau-1} \sum_{j \in C_i} \E{\price_{\col}(j)} 
&= 
\oh \frac{1}{\Tau-1}
\sum_{i=1}^{\Tau-1} (\Tau a_i - a_i^2)\\
&= 
\oh \frac{1}{\Tau-1}(
\Tau \sum_{i=1}^{\Tau-1}  a_i - \sum_{i=1}^{\Tau-1} a_i^2)\\ 
&= 
\oh \frac{1}{\Tau-1}(
\Tau (\Tau - a_0) - \sum_{i=1}^{\Tau-1} a_i^2)\ .
\end{split}
\end{equation*}
All in all we now have that
\begin{equation*}
\begin{split}
\sum_{i=0}^{\Tau-1} \sum_{j \in C_i} \E{\price_{\col}(j)} 
&= 
\oh a_0 + 
\oh \frac{1}{\Tau-1}(
\Tau (\Tau - a_0) - \sum_{i=1}^{\Tau-1} a_i^2)\\
&= 
\oh a_0 
- \oh \frac{\Tau}{\Tau-1} a_0 +
\oh \frac{1}{\Tau-1}(
 \Tau \Tau - \sum_{i=1}^{\Tau-1} a_i^2)\\
&= 
\oh (\frac{-a_0}{\Tau-1})  +
\oh \frac{1}{\Tau-1}(
 \Tau^2  - \sum_{i=1}^{\Tau-1} a_i^2)\\
&= 
\oh \frac{\Tau^2}{\Tau-1} 
- \oh \frac{1}{\Tau-1} (a_0 + \sum_{i=1}^{\Tau-1} a_i^2)\ .
\end{split}
\end{equation*}
To minimize this expression we have to maximize $a_0 +
\sum_{i=1}^{\Tau-1} a_i^2$. Recall that $\col_0$ is defined to be the
most common color, so we must adhere to $a_0 \ge a_i$ for $i >
0$. Under this restriction it is easy to see that 
$a_0 + \sum_{i=1}^{\Tau-1} a_i^2$ is maximal when
$a_0 = a_1 = \Tau/2$ and $a_2 = \cdots a_\Tau = 0$, in which case it
has the value $\Tau/2 + (\Tau/2)^2$. So,
\begin{equation*}
\begin{split}
\E{\price_{\col}} &= 
\oh \frac{\Tau^2}{\Tau-1} 
- \oh \frac{1}{\Tau-1} (\Tau/2 + (\Tau/2)^2)\\
&= 
\oh \frac{\Tau^2 - \Tau/2 + (\Tau/2)^2}{\Tau-1}\\
&= 
\oh \frac{4 \tau^2 - \tau - \tau^2}{2\tau-1}
= 
\oh \frac{3 \tau^2 - \tau}{2\tau-1}
= 
\oh \tau \frac{3 \tau - 1}{2\tau-1}
>
\oh \tau \frac{3 \tau}{2\tau}
=
\frac34 \tau = \kappa\ .
\end{split}
\end{equation*}

\Section{Efficient OT Extension}\applab{otext}

In this section we show how we can produce a virtually unbounded
number of OTs from a small number of seed OTs. The amortized work per
produced OT is linear in $\kappa$, the security parameter.

A similar result was proved in~\cite{DBLP:conf/tcc/HarnikIKN08}.
In~\cite{DBLP:conf/tcc/HarnikIKN08} the amortized work is linear in
$\kappa$ too, but our constants are much better than those
of~\cite{DBLP:conf/tcc/HarnikIKN08}. In fact, our constants are small
enough to make the protocol very practical.\footnote{As an example,
  our test run (see \secref{exp}) with $\ell = 54$ involved generating
  $44\mathord{,}826\mathord{,}624$ \ABIT{}s, each of which can be
  turned into one \OT using two applications of a hash function. The
  generation took $85$ seconds. Using these numbers, gives an estimate
  of $527\mathord{,}372$ actively secure OTs per second. Note,
  however, that the generation involved many other things than
  generating the \ABIT{}s, like combining them to \AOT{}s and
  \AAND{}s.}  Since~\cite{DBLP:conf/tcc/HarnikIKN08} does not
attempt to analyze the exact complexity of the result, it is hard to
give a concrete comparison, but since the result
in~\cite{DBLP:conf/tcc/HarnikIKN08} goes over generic secure
multiparty computation of non-trivial functionalities, the constants
are expected to be huge compared to ours.

Let $\kappa$ be the security parameter. We show that
$\OT({\ell},\kappa)$ is linear reducible to ($\OT(\frac{8}{3}
\kappa,\kappa)$, $\EQ(\frac43 \kappa^2)$) for any $\ell =
\poly(\kappa)$, i.e., given $\frac83 \kappa$ active-secure OTs of
$\kappa$-bit strings we can produce an essentially unbounded number of
active-secure OTs of $\kappa$-bit strings. The amortized work involved
in each of these $\ell$ OTs is linear in $\kappa$, which is optimal.

The approach is as follows.
\begin{enumerate}
\item

  Use $\OT(\frac83 \kappa,\kappa)$ and a pseudo-random generator to
  implement $\OT(\frac83 \kappa,\ell)$.

\item

  Use $\OT(\frac83 \kappa,\ell)$ and $\EQ(\frac43 \kappa \ell)$ to
  implement $\WABIT^\LL$ for $\ell$ authentications with $\frac43
  \kappa$-bit keys and MACs and with $\LL$ being $\kappa$-secure.

\item

  Use a random oracle $H: \zo^{\frac43 \kappa} \rightarrow
  \zo^{\kappa} $ and $\WABIT^\LL$ for $\ell$ authentications with
  $\frac43 \kappa$-bit keys to implement $\OT(\ell,\kappa)$, as
  described below.
  
\end{enumerate}

Here, as in~\cite{DBLP:conf/tcc/HarnikIKN08}, we consider a hashing of
$O(\kappa)$ bits to be linear work. The pseudo-random generator can be
implemented with linear work using $H$.

\longOnly{\paragraph{From \WABIT to \OT.}}

As a first step, we notice that the \ABIT box described
in~\secref{labit} resembles an intermediate step of the passive-secure
OT extension protocol of \cite{DBLP:conf/crypto/IshaiKNP03}: an \ABIT
can be seen as a random OT, where all the sender's messages are
correlated, in the sense that the XOR of the messages in any OT is a
constant (the global key of the \ABIT). This correlation can be easily
broken using the random oracle. In fact, even if few bits
of the global difference $\Delta$ leak to the adversary, the same
reduction is still going to work (for an appropriate choice of the
parameters). Therefore, we are able to start directly from the box for
authenticated bits with weak key, or \WABIT described
in~\secref{wabit}.

\begin{boxfig}{The Random OT box $\func{ROT}(\ell,\kappa)$}{BOXROT}
\begin{enumerate}

\item 

  For the sender \ant{S} the box samples $X_{i,0}, X_{i,1} \inR
  \zo^\kappa$ for $i = 1, \ldots, \ell$.  If \ant{S} is corrupted,
  then it gets to specify these inputs.

\item

  For the receiver \ant{R} the box samples $\vec{b} = (b_1, \ldots,
  b_\ell) \inR \zo^\ell$. If \ant{R} is corrupted, then it gets to
  specify these inputs.

\item

  The box outputs $((X_{1,b_1}, b_1), \ldots, (X_{\ell,b_l},b_\ell))$
  to \ant{R} and $((X_{1,0},X_{1,1}), \ldots,
  (X_{\ell,0},X_{\ell,1}))$ to \ant{S}.

\end{enumerate}
\end{boxfig}

\begin{boxfig}{The protocol for reducing $\func{ROT}(\ell,\kappa)$ to
  $\WABIT^\LL(\ell,\frac43 \kappa)$}{PROTROT}
\begin{enumerate}
\item

  Call $\WABIT^\LL(\ell,\frac43 \kappa)$.  The output to \ant{R} is
  $((M_1, b_1), \ldots, (M_\ell,b_\ell))$. The output to \ant{S} is
  $(\Delta, K_1, \ldots, K_\ell)$.

\item

  \ant{R} computes $Y_i = H(M_i)$ and outputs $((Y_1, b_1), \ldots,
  (Y_\ell,b_\ell))$.

\item

  \ant{S} computes $X_{i,0} = H(K_i)$ and $X_{i,1} = H(K_i \oplus
  \Delta)$ and outputs $((X_{1,0},X_{1,1}), \ldots,
  (X_{\ell,0},X_{\ell,1}))$.

\end{enumerate}
\end{boxfig}

Here $\kappa$ is the security level, i.e., we want to implement $\OT$
with insecurity $\poly(\kappa) 2^{- \kappa}$.  We are to use an
instance of $\WABIT^\LL$ with slightly larger keys. Specifically, let
$\tau = \frac43 \kappa$, as we know how to implement a box
$\WABIT^\LL$ with $\tau$-bit keys and where $\LL$ is $\kappa$-secure
for $\kappa = \frac34 \tau$. We implemented such a box
in~\secref{wabit}. The protocol is given in \figref{PROTROT}. It
implements the box for random OT given in \figref{BOXROT}.

We have that $M_i = K_i \oplus b_i \Delta$, so $Y_i = H(M_i) =
H(K_i \oplus b_i \Delta) = X_{i,b_i}$. Clearly the protocol leaks no
information on the $b_i$ as there is no communication from \ant{R} to
\ant{S}. It is therefore sufficient to look at the case where \ant{R}
is corrupted. We are not going to give a simulation argument but just
show that $X_{i,1 \oplus b_i}$ is uniformly random to \ant{R} except
with probability $\poly(\kappa) 2^{-\kappa}$.

Since $X_{i,1 \oplus b_i} = H(K_i \oplus (1 \oplus b_i) \Delta)$ and
$H$ is a random oracle, it is clear that $X_{i,1 \oplus b_i}$ is
uniformly random to \ant{R} until \ant{R} queries $H$ on $Q = K_i
\oplus (1 \oplus b_i) \Delta$. Since $M_i = K_i \oplus b_i \Delta$ we
have that $Q = K_i \oplus (1 \oplus b_i) \Delta$ would imply that $M_i
\oplus Q = \Delta$. So, if we let \ant{R} query $H$, say, on $Q
\oplus M_i$ each time it queries $H$ on some $Q$, which would not
change its asymptotic running time, then we have that all $X_{i,1
\oplus b_i}$ are uniformly random to \ant{R} until it queries $H$ on
$\Delta$. It is not hard to show that the probability with which an
adversary running in time $t = \poly(\kappa)$ can ensure that
$\WABIT^\LL$ does not fail and then query $H$ on $\Delta$ is
$\poly(\kappa) 2^{-\kappa}$. This follows from the $\kappa$-security
of $\LL$.

\Section{Proof of \thmref{laot}}\applab{proofoflaot}

The simulator answers global key queries to the dealer by doing
the identical global key queries on the ideal functionality
$\LAOT(\ell)$ and returning the reply from $\LAOT(\ell)$. This gives
a perfect simulation of these queries, and we ignore them below.

For honest sender and receiver correctness of the protocol follows
immediately from correctness of the \ABIT box and the \EQ box.

\begin{lemma} \label{lem:S}
  The protocol in \figref{PROTLAOT} securely implements $\LAOT(\ell)$
  against corrupted \Ali.
\end{lemma}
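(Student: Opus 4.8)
The plan is to give a simulator \Sim that plays the boxes $\ABIT(4\ell,\kappa)$ and $\EQ$, the honest receiver \Bob, and the random oracle $H$ towards the corrupted sender \Aliadv, while talking to the ideal box $\LAOT(\ell)$ in her place. Global-key queries to the dealer are relayed verbatim to $\LAOT$ and answered with its reply, as already noted; in particular, if such a relayed query ever returns ``yes'' on $\Delta_A$ then \Sim has learnt $\Delta_A$ and can thereafter mirror the honest protocol exactly, so I ignore that case below. Since the only messages \Aliadv sends are, per instance, the transfers $(X_0,X_1)$ and $(I_0,I_1)$ together with her input to the (single, merged) $\EQ$ call, and since \Sim knows every key she is entitled to --- $K_c,K_r,\Delta_B$, which she fed to the simulated $\ABIT$, and the bit/MAC pairs $x_b,M_{x_b}$ she also fed there --- \Sim can decode those messages, read off her effective behaviour, and turn every selective-failure-style deviation into guess queries $(i,g_i)$ on $\LAOT$.

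In more detail: when the simulated $\ABIT$ is asked for $\repa{x_0^i},\repa{x_1^i}$, \Sim forwards the pairs $x_b^i,M_{x_b}^i$ chosen by \Aliadv to $\LAOT$ as her random values; when asked for $\repb{c^i},\repb{r^i}$ it merely records her keys, without fixing $c^i$ or $r^i$ (her view of those objects is only her own keys). On receiving $(X_0,X_1)$ for instance $i$ it decodes both branches, $(\check x_b,\check M_b,\check T_b):=X_b\oplus H(K_c\oplus b\Delta_B)$ for $b\in\{0,1\}$, evaluating its lazily sampled $H$, and calls branch $b$ \emph{good} iff $(\check x_b,\check M_b)=(x_b,M_{x_b})$. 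It then makes no guess if both branches are good; submits $(i,1\oplus b^\star)$ if exactly branch $b^\star$ is bad; and submits $(i,0)$ and, if this is answered ``correct'', also $(i,1)$, if both are bad. This makes $\LAOT$ fail on instance $i$ exactly when the MAC check of \Bob in step~5 would fail (i.e.\ when $c^i$ hits a bad branch), and whenever $\LAOT$ fails the single wrong guess determines the bit $c^i$, so \Sim knows $c^i$ and has \Bob abort in the right round. If the run continues, \Sim sends a uniformly random $d$ in step~6 --- sound because $d=z\oplus r$ with $r$ a protocol-internal bit that $\LAOT$ does not output, so in the real run $d$ is uniformly random given everything else the environment sees --- and inputs $K_{z^i}=K_r\oplus d\Delta_B$ to $\LAOT$. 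From $(I_0,I_1)$ it then computes, for each candidate $c\in\{0,1\}$, the pair \Bob would submit to $\EQ$, namely $B_{x_c}=\check T_c$ and $B_{1\oplus x_c}=I_{x_c}\oplus H((K_r\oplus d\Delta_B)\oplus x_c\Delta_B)$, and records the set $C_i\subseteq\{0,1\}$ of candidates for which this equals the pair supplied by \Aliadv. Since the $\ell$ comparisons are merged into one $\EQ$ call, \Sim makes $\LAOT$ fail iff some instance is caught: for $C_i=\{0,1\}$ it does nothing, for $C_i=\{b\}$ it submits $(i,b)$, for $C_i=\emptyset$ it forces a failure as before. On every failure path the involved $c^i$ is pinned down, so when an $\EQ$-abort has to be simulated \Sim knows all the relevant $c^i$ and can hand \Aliadv the correct leaked strings.

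Indistinguishability will follow from a single bad-event argument. Conditioned on a bad event $F$ not occurring, the simulation is perfect by construction: every message \Sim sends is the one the honest parties would send given the observed behaviour of \Aliadv, the $\LAOT$ failures coincide exactly with the real aborts, and the authenticated objects held by \Aliadv in the ideal world are exactly the ones she chose. The event $F$ collects the accidents that desynchronise the ``good/bad branch'' test from the genuine MAC check --- principally, \Aliadv producing some $X_b$ (or the analogous re-check value) that decodes to a valid MAC on $\overline{x_b}$, i.e.\ with $\check M_b=M_{x_b}\oplus\Delta_A$, which forces $\Delta_A=M_{x_b}\oplus\check M_b$ for a $\Delta_A$ that is uniform and independent of her view (unless she already guessed it, handled above), hence happens with probability at most $2^{-\kappa}$ per branch --- together with collisions of the lazily sampled oracle; a union bound over the $O(\ell)$ branches and the polynomially many oracle calls gives $\prob{F}\le\poly(\kappa)2^{-\kappa}$.

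I expect the main obstacle to be precisely the decoupling of \Sim from the secret choice bit $c^i$ of \Bob: \Sim cannot learn $c^i$ for free, yet the abort decisions of step~5 and of the $\EQ$ check, as well as the $\EQ$-leakage, all depend on it. The resolution sketched above rests on two points: the only otherwise-revealing message, $d$, can be replaced by fresh randomness because its one-time pad $r$ never leaves the protocol; and every round in which an abort or a leakage occurs is a round in which \Sim may spend a $\LAOT$ guess, while a failed single-bit guess determines $c^i$, so \Sim always knows $c^i$ by the time it needs it. Verifying that these guesses reproduce the real abort probability exactly, over the full case split on which of the two transfers (and which branch of each) \Aliadv corrupts, is the tedious but routine remainder; the corrupted-\Bob and all-honest cases are straightforward.
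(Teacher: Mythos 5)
Your simulator, bad-event, and case analysis are essentially identical to the paper's: you extract \Aliadv's keys and MAC/bit pairs from the simulated dealer, decode each half of $(X_0,X_1)$ and $(I_0,I_1)$ with the keys you hold, translate bad branches into choice-bit guesses on $\LAOT$, replace $d$ by fresh randomness (justified exactly as you say, by the one-time pad $r$), and bound the residual failure by the probability that \Aliadv forges a MAC, which would reveal $\Delta_A$. This is the paper's proof; your only cosmetic deviations (merging the per-instance $\EQ$ checks explicitly, sequencing the two guesses when both branches are bad) do not change the argument.
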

\begin{proof}
  We consider the case of a corrupt sender \Aliadv running the above
  protocol against a simulator \Sim. We show how to simulate one instance.
  \begin{enumerate}

  \item \label{lem:S_OA1}

    First \Sim receives \Aliadv's input $(M_{x_0}, x_{0})$, $(M_{x_1},
    x_{1})$, $K_c, K_r$ and $\Delta_B$ to the dealer. Then \Sim
    samples a bit $y \inR \zo$, sets $K_z = K_r \oplus y \Delta_B$ and
    inputs $(M_{x_0}, x_{0})$, $(M_{x_1}, x_{1})$, $K_c, K_z$ and
    $\Delta_B$ to a \LAOT box. The box outputs $\Delta_A$, $(M_c, c)$,
    $(M_z, z)$, $K_{x_0}$ and $K_{x_1}$ to the honest \Bob as
    described in the protocol.

  \item \label{lem:S_transfer}

    \Aliadv outputs the message $(X_{0}, X_{1})$. The simulator knows
    $\Delta_B$ and $K_c$ and can therefore compute
    $$
    X_0 \oplus H(K_c) = (\overline{x}_0||\overline{M}_{x_{0}}||T'_{x_0})
    $$
    and
    $$
    X_1 \oplus  H(K_c\oplus \Delta_B) =
    (\overline{x}_{1}||\overline{M}_{x_{1}}||T'_{x_1})\ .
    $$

    For all $j \in \zo$ \Sim tests if $(\overline{M}_{x_{j}},
    \overline{x}_j) = (M_{x_j}, x_{j})$. If, for some $j$, this is not
    the case \Sim inputs a guess to the \LAOT box guessing that $c =
    (1 - j)$ to the \LAOT box. If the box outputs \texttt{fail} \Sim
    does the same and aborts the protocol. Otherwise \Sim proceeds by
    sending $y$ to \Aliadv.  Notice that if \Sim does not abort but
    does guess the choice bit $c$ it can perfectly simulate the
    remaining protocol. In the following we therefore assume this is
    not the case.

  \item \label{lem:S_re-auth1}

    Similarly \Sim gets $(I_{0}, I_{1})$ from
    \Aliadv and computes 
    $$
    I_0 \oplus H(K_z) = T''_1
    $$
    and 
    $$
    I_1 \oplus H(K_z\oplus \Delta_B) = T''_0\ .
    $$

  \item \label{lem:S_re-auth2}

    When \Sim receives \Aliadv's input 
    $(T_{ 0}, T_{1})$ for the \EQ box it first tests if  
    $(T'_{j}, T''_{1 \oplus \overline{x}_j}) = 
    (T_{ \overline{x}_j}, T_{1 \oplus \overline{x}_j})$ for all $j \in
    \zo$. If, for some $j$, this is not the case \Sim inputs a
    guess to the \LAOT box guessing that $c = (1 - j)$. If the box
    outputs \texttt{fail}, \Sim outputs \texttt{fail} and aborts.
    If not, the simulation is over.
    
  \end{enumerate}
  For analysis of the simulation we denote by $F$ the event that
  for some $j \in \zo$ \Aliadv computes values $M^{*}_{x_j} \in
  \zo^{\kappa}$ and $x^{*}_{j} \in \zo$ so that $(M^{*}_{x_j},
  x^{*}_{j}) \not= (M_{x_j}, x_{j})$ and $M^{*}_{x_j} = K_{x_j} \oplus
  x^{*}_{j} \Delta_A$. In other words, $F$ is the event that \Aliadv
  computes a MAC on a message bit it was not supposed to know. We will
  now show that, assuming $F$ does not occur, the simulation is
  perfectly indistinguishable from the real protocol. We then show
  that $F$ only occurs with negligible probability and therefore that
  simulation and the real protocol are indistinguishable. 

  From the definition of the \LAOT box we have
  that $(\overline{M}_{x_ j}, \overline{x}_{j}) = (M_{x_j}, x_{j})$
  implies $\overline{M}_{x_j} = K_{x_j} \oplus x_j \Delta_A$. Given
  the assumption that $F$ does not occur clearly we have that
  $(\overline{M}_{x_j}, \overline{x}_{j}) \neq (M_{x_j}, x_{j})$ also
  implies $\overline{M}_{x_j} \neq K_{x_j} \oplus \overline{x}_j
  \Delta_A$.  This means that \Sim aborts in step \ref{lem:S_transfer}
  with exactly the same probability as the honest receiver would in
  the real protocol. Also, in the real protocol we have $y = z\oplus r$
  for $r \inR \zo$ thus both in the real protocol and the simulation
  $y$ is distributed uniformly at random in the view of \Aliadv.

  Next in step \ref{lem:S_re-auth2} of the simulation notice that in
  the real protocol, if $c = j \in \zo$, an honest \Bob would input
  $T'_{j}$ and $T''_{1 \oplus \overline{x}_j}$ to \EQ (sorted in the
  correct order). The protocol would then continue if and only
  if $(T'_{j}, T''_{1 \oplus \overline{x}_j}) = (T_{\overline{x}_j},
  T_{1 \oplus \overline{x}_j})$ and abort otherwise. I.e., the real
  protocol would continue if and only if $(T'_{j}, T''_{1
    \oplus \overline{x}_j}) = (T_{\overline{x}_j}, T_{1 \oplus
    \overline{x}_j})$ and $c = j$, which is exactly what happens in
  the simulation. Thus we have that given $F$ does not occur, all
  input to \Aliadv during the simulation is distributed exactly as in
  real protocol. In other words the two are perfectly
  indistinguishable.

  Now assume $F$ does occur, that is for some $j \in \zo$ \Aliadv
  computes values $M^{*}_{x_j}$ and $x^{*}_{j}$ as described above. In
  that case \Aliadv could compute the global key of the honest
  receiver as $M^{*}_{j}\oplus M_{x_j} = \Delta_A$. However, since all
  inputs to \Aliadv are independent from $\Delta_A$ (during the
  protocol), \Aliadv can only guess $\Delta_A$ with negligible
  probability (during the protocol) and thus $F$ can only occur with
  negligible probability (during the protocol). After the protocol
  \Aliadv, or rather the environment, will receive outputs and learn
  $\Delta_A$, but this does not change the fact that guessing $\Delta_A$
  during the protocol can be done only with negligible probability. \putbox
\end{proof}

\begin{lemma} \label{lem:R}
  The protocol in \figref{PROTLAOT} securely implements $\LAOT(\ell)$
  against corrupted \Bob.
\end{lemma}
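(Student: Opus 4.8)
The plan is to build a simulator $\Sim$ that, against a corrupted $\Bob$, internally simulates the $\ABIT(4\ell,\kappa)$ and $\EQ(2\ell\kappa)$ hybrid boxes and otherwise runs the code of the honest $\Ali$. Simulating $\ABIT$ first lets $\Sim$ read off all of $\Bob$'s adversarial choices: his global key $\Delta_A$, the local keys $K_{x_0},K_{x_1}$ he holds on $\Ali$'s message bits, and his bit/MAC pairs $(c,M_c)$ and $(r,M_r)$; $\Sim$ itself samples $\Delta_B$ and the honest sender bits $x_0,x_1$, which then determine $M_{x_0},M_{x_1}$ and the keys $K_c,K_r$. With all of this in hand $\Sim$ runs steps 3--9 of \figref{PROTLAOT} exactly as an honest $\Ali$ would: pick $T_0,T_1$, send $(X_0,X_1)$, receive $\Bob$'s announcement $d$, send $(I_0,I_1)$, and hand its own $(T_0,T_1)$ to the simulated $\EQ$ box (if $\Bob$'s $\EQ$ input differs, the simulated box leaks $\Ali$'s pair to $\Bob$ and the run aborts, as in the hybrid world). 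If the protocol does not abort, $\Sim$ forwards $\Bob$'s extracted values, together with the $x_0,x_1$ and $\Delta_B$ it sampled, to $\LAOT(\ell)$ so that $\Ali$'s output is the one consistent with the simulated transcript; global-key queries are relayed to $\LAOT(\ell)$ as already described.

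To prove indistinguishability I would isolate two facts. First, choice-bit privacy: the ciphertext $X_{1\oplus c}$ and one of $I_0,I_1$ --- the ones $\Bob$ is not meant to open --- are one-time padded by random-oracle values whose arguments ($K_c$ or $K_c\oplus\Delta_B$, and $K_z$ or $K_z\oplus\Delta_B$, using $M_c=K_c\oplus c\Delta_B$ and $M_z=K_z\oplus z\Delta_B$) can only be evaluated by someone who knows $\Delta_B$, and $\Delta_B$ never enters $\Bob$'s view during the protocol; hence a polynomial-time $\Bob$ queries $H$ at those arguments only with probability $\poly(\kappa)2^{-\kappa}$, so in his view those strings are uniform and nothing about $x_{1\oplus c}$ leaks --- matching the fact that $\LAOT$ grants a corrupted $\Bob$ no choice-bit guesses. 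Second, detection of a dishonest announcement: writing $z=r\oplus d$, if $\Bob$ sends the honest value $d=x_c\oplus r$ then $z=x_c$, $\Bob$ recovers $T_{x_c}$ in step 5 and $T_{1\oplus x_c}$ in step 8, and the whole run is distributed exactly as in the real protocol. If instead $\Bob$ sends a $d$ with $z=1\oplus x_c$, then steps 5 and 8 both hand him the \emph{same} string $T_{x_c}$, whereas the other string $T_{1\oplus x_c}$ occurs only inside the $\Delta_B$-masked parts of $(X_0,X_1)$ and $(I_0,I_1)$ and is therefore uniform in his view; so passing the $\EQ$ check requires guessing a fresh $\kappa$-bit string, which he does with probability $2^{-\kappa}$. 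Thus the real protocol and the simulation abort on the same event up to a $\poly(\kappa)2^{-\kappa}$ term, and conditioned on not aborting the joint distribution of $\Bob$'s view and $\Ali$'s output is identical in the two worlds; turning this into a simulation statement (and handling $\Bob$ aborting, which is simulated verbatim) is routine.

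I expect the technical heart to be making the second fact rigorous. One has to track which of $T_0,T_1$ is carried by which ciphertext component --- the subscripts there are the bit \emph{values} $x_0,x_1$, not positions, so the cases $x_0=x_1$ and $x_0\neq x_1$ must be argued together --- and then verify that after a dishonest $d$ the string $T_{1\oplus x_c}$ is genuinely information-theoretically hidden, i.e.\ that every component carrying it is masked by an $H$-value with a $\Delta_B$-dependent argument, hence unqueryable by $\Bob$ during the protocol. A secondary subtlety is that from $K_c,K_z$ in $\Ali$'s output and $M_c,M_z,c,z$ from $\Bob$ the environment can reconstruct $\Delta_B$ after the protocol and re-derive the masks; the full case analysis must check that $\Sim$'s choices for $\Ali$'s sampled message bits stay consistent with the outputs $\LAOT(\ell)$ delivers.
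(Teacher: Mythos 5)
Your intuition about the two key facts (choice-bit privacy via RO-unqueryability without $\Delta_B$, and detection of a dishonest $d$ because a cheating \Bob ends up with the same $T$ string twice) matches the paper's analysis, and you correctly flag that the $T$ subscripts are bit \emph{values}. But the simulator you describe is not a valid UC simulator for the given $\LAOT$ box, and this is not a ``secondary subtlety'' --- it is the main technical obstacle.

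The problem is the step where $\Sim$ ``samples $\Delta_B$ and the honest sender bits $x_0,x_1$'' and then ``forwards [them] to $\LAOT(\ell)$ so that $\Ali$'s output is the one consistent with the simulated transcript.'' The $\LAOT$ box does not accept the honest party's randomness from the adversary: when \Bob is corrupted, the box lets \Bob choose only $\Delta_A,c,M_c,M_z,K_{x_0},K_{x_1}$, and samples $x_0,x_1,\Delta_B$ (and hence $M_{x_0},M_{x_1},K_c,K_z$) internally for the honest $\Ali$, delivering them directly to her. $\Sim$ cannot overwrite these. So $\Sim$'s locally-sampled $x_{1\oplus c}$ and $\Delta_B$ will, with overwhelming probability, disagree with the values the box hands to $\Ali$; in particular $\Sim$'s $x_c$ need not even equal the $z$ the box fixes as $\Ali$'s output. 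And the disagreement is observable: once outputs are delivered, the environment holds $\Ali$'s $K_c,\Delta_B$ and \Bob's $M_c$, can form $K_c\oplus(1\oplus c)\Delta_B=M_c\oplus\Delta_B$, query $H$ there, and strip the mask off $X_{1\oplus c}$, at which point your simulated ciphertext decrypts to garbage rather than to $(x_{1\oplus c}\|M_{x_{1\oplus c}}\|T_{x_{1\oplus c}})$ for $\Ali$'s actual output.

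The paper's simulator is structured to avoid exactly this. It first feeds \Bob's extracted values into $\LAOT(\ell)$, learns from the box the $z$ that \Bob is entitled to (this is part of \Bob's output), and then produces $X_c$ honestly from $z,M_z,T_z$ while sampling $X_{1\oplus c}$ and $I_{1\oplus\overline z}$ as fresh uniform strings --- it never pretends to know $x_{1\oplus c}$ or $\Delta_B$. This is indistinguishable during the protocol by the RO argument you give, and the residual discrepancy after outputs are delivered is repaired by \emph{programming} the random oracle: when the environment later queries $H$ at the $\Delta_B$-dependent point, $\Sim$ (which can detect this via the global-key query mechanism) sets that oracle value to make $X_{1\oplus c}$ decrypt correctly, exactly as in the proof of \thmref{laand}. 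Your write-up contains neither the ``call the box first and learn $z$'' ordering nor the RO-programming step, and without both the simulation fails against an environment that inspects $\Ali$'s output.
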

\begin{proof}
  We consider the case of a corrupt receiver \Bobadv running the
  above protocol against a simulator \Sim. The simulation runs as
  follows.
  \begin{enumerate}
  
  \item \label{lem:R_setup}
    
    The simulation starts by \Sim getting \Bobadv's input to dealer
    $\Delta_A$, $(M_c, c)$, $(M_r, r)$, $K_{x_0}$ and $K_{x_1}$. Then
    \Sim simply inputs $\Delta_A$, $(M_c, c)$, $M_z=M_r$, $K_{x_0}$
    and $K_{x_1}$ to the \LAOT box. The box outputs $z$ to \Sim and
    $\Delta_B$, $(M_{x_0}, x_{0})$, $(M_{x_1}, x_{1})$, $K_c$ and
    $K_z$ to the sender as described above.

  \item \label{lem:R_trans}

    Like the honest sender \Sim samples random keys $T_{0},
    T_{1} \inR \zo^{\kappa}$. Since \Sim knows $M_c,
    K_{x_0}, K_{x_1}, \Delta_A, c$ and $z = x_{c}$ it
    can compute
    $X_{c} = H(M_c) \oplus (z||M_{z}||T_{z})$
    exactly as the honest sender would. It then samples $X_{1 \oplus
      c} \inR \zo^{2\kappa + 1}$ and inputs $(X_{0}, X_{1})$ to
   \Bobadv. 

  \item \label{lem:R_re-auth1}

    The corrupt receiver \Bobadv replies by sending some
    $\overline{y} \in \zo$. 

  \item \label{lem:R_re-auth2}

    \Sim sets $\overline{z} = r \oplus \overline{y}$,
    computes $I_{\overline{z}} = H(M_z)\oplus T_{1\oplus\overline{z}}$ and
    samples $I_{1\oplus\overline{z}} \inR \zo^{\kappa}$. It then inputs
    $(I_{0}, I_{1})$ to \Bobadv. 

  \item  \label{lem:R_re-auth3}

   \Bobadv outputs some 
    $(\overline{T}_{0}, \overline{T}_{1})$ for the \EQ
    box and \Sim continues or aborts as the honest \Ali would
    in the real protocol, depending on whether
    or not 
    $(T_{0}, T_{1}) = (\overline{T}_{0}, \overline{T}_{1})$.

  \end{enumerate}
  For the analysis we denote by $F$ the event that \Bobadv queries the
  RO on $K_c \oplus (1 \oplus c)\Delta_B$ or $K_z \oplus
  (1 \oplus z)\Delta_B$. We first show that assuming $F$ does not occur, the
  simulation is perfect. We then show that $F$ only occurs with
  negligible probability (during the protocol) and thus the
  simulation is indistinguishable from the real protocol (during the
  protocol). We then discuss how to simulate the RO after outputs have
  been delivered.

  First in the view of \Bobadv step \ref{lem:R_setup} of the
  simulation is clearly identical to the real protocol. Thus the first
  deviation from the real protocol appears in step \ref{lem:R_trans}
  of the simulation where the $X_{1 \oplus c}$ is chosen uniformly at
  random.  However, assuming $F$ does not occur, \Bobadv has no
  information on $H(K_c \oplus (1 \oplus c)\Delta_B)$ thus in the view of
  \Bobadv, $X_{1 \oplus c}$ in the real protocol is a one-time pad
  encryption of $(x_{1 \oplus c}||M_{x_{1 \oplus c}}||T_{x_{1 \oplus c}})$. In other words,
  assuming $F$ does not occur, to \Bobadv, $X_{1 \oplus c}$ is uniformly random
  in both the simulation and the real protocol, and thus all input to
  \Bobadv up to step \ref{lem:R_trans} is distributed identically in
  the two cases.

  For steps \ref{lem:R_re-auth1} to \ref{lem:R_re-auth3} notice that in
  the real protocol an honest sender would set $K_z = K_r \oplus
  \overline{y} \Delta_B$ and we would have
%  \[
%  \mac^{\Delta}_{Z_{i}}(\overline{z}_{i}) 
%  = (\tilde{Z}_{i} \oplus \overline{y}_{i}\Delta) \oplus \overline{z}_{i}\Delta 
%  = \tilde{Z}_{i} \oplus \tilde{z}_{i}\Delta 
%  = O_{i}.
%  \]
  $$
  (K_r \oplus \overline{y}\Delta_B) \oplus \overline{z}\Delta_B 
  = K_r \oplus r \Delta_B 
  = M_r\ .
  $$ 
  Thus we have that the simulation generates $I_{\overline{z}}$
  exactly as in the real protocol. An argument similar to the one
  above for step \ref{lem:R_trans} then gives us that the simulation
  is perfect given the assumption that $F$ does not occur.

  We now show that \Bobadv can be modified so that if $F$ does occur,
  then 
  \Bobadv can find $\Delta_B$. However, since all input to \Bobadv are
  independent of $\Delta_B$ (during the protocol), \Bobadv only has
  negligible probability of guessing $\Delta_B$ and thus we can
  conclude that $F$ only occurs with negligible probability.

  The modified \Bobadv keeps a list $Q =
  (Q_{1}, \ldots, Q_{q})$ of all \Bobadv's queries to $H$. Since
  \Bobadv is efficient we have that $q$ is a polynomial in $\kappa$.
  To find $\Delta_B$ the modified \Bobadv then goes over all
  $Q_{k} \inR Q$ and computes $Q_{k} \oplus M_z = \Delta'$ and $Q_k
  \oplus M_c = \Delta''$. Assuming that $F$ does occur there will be
  some $Q_{k'} \in Q$ s.t.~$\Delta' = \Delta_B$ or
  $\Delta''=\Delta_B$. The simulator can therefore use global key
  queries to find $\Delta_B$ if $F$ occurs.

  We then have the issue that after outputs are delivered to the
  environment, the environment learns $\Delta_B$, and we have to keep
  simulating $H$ to the environment after outputs are delivered. This
  is handled exactly as in the proof of \thmref{laand} in
  \appref{proofofthmAND} using the programability of the RO.  \putbox
\end{proof}

\Section{Proof of \thmref{thmAOT}}\applab{proofthmaot}

We want to show that the protocol in~\figref{PROTAOT} produces secure
\AOT{}s, having access to a box that produces leaky \AOT{}s. Remember
that a leaky \AOT or \LAOT, is insecure in the sense that a corrupted
sender can make guesses at any of the choice bits: if the guess is
correct, the box does nothing and therefore the adversary knows that
the guess was correct. If the guess is wrong, the box alerts the
honest receiver about the cheating attempt and aborts.

In the protocol the receiver randomly partitions $\ell B$ leaky OTs in
$\ell$ buckets of size $B$. First we want to argue that the
probability that every bucket contains at least one OT where the
choice bit is unknown to the adversary is overwhelming. Repeating the
same calculations as in the proof of~\thmref{thmAND} it turns out that
this happens with probability bigger than $1-(2\ell)^{(1-B)}$.

Once we know that (with overwhelming probability) at least one OT in
every bucket is secure for the receiver (i.e., at least one choice bit
is uniformly random in the view of the adversary), the security of the
protocol follows from the fact that we use a standard OT
combiner~\cite{DBLP:conf/eurocrypt/HarnikKNRR05}. Turning this into a
simulation proof can be easily done in a way similar to the proof
of~\thmref{thmAND} in~\appref{proofoflaand}.

\Section{Proof of \thmref{laand}}\applab{proofoflaand}

\begin{proof}
The simulator answers global key queries to the dealer  by doing
the identical global key queries on the ideal functionality
$\LAAND(\ell)$ and returning the reply from $\LAAND(\ell)$. This gives
a perfect simulation of these queries, and we ignore them below.

Notice that for honest sender and receiver correctness of the
protocol follows immediately from correctness of the \ABIT box.
\begin{lemma} \lemmlab{AANDAliAdv}
  The protocol in~\figref{LAAND} securely implements the \LAAND
  box against corrupted \Ali.
\end{lemma}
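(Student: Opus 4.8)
The plan is to give a simulator \Sim that internally runs the corrupted \Aliadv together with a simulated copy of the \ABIT box, while on the outside interacting with \LAAND on \Aliadv's behalf, and then to argue statistical closeness of the two executions. From \Aliadv's inputs to the simulated \ABIT box, \Sim reads off $(x,M_x)$, $(y,M_y)$, $(r,M_r)$; the honest party's key $\Delta_A$ is sampled by the box, but \Sim will not need it in the clear during the protocol phase. Global-key queries from the environment are answered by forwarding them verbatim to \LAAND and relaying the reply, exactly as in the proof of \thmref{laot}; we suppress them below. When \Aliadv announces $d$, \Sim — knowing $x,y,r$ — knows whether $d$ is the honest value $d^\star=xy\oplus r$ (so that $\repa{z}=\repa{r}\oplus d$ represents $xy$) or the cheating value $1\oplus d^\star$ (so that $\repa{z}$ represents $1\oplus xy$).

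To simulate \Bob's message, \Sim needs two $\kappa$-bit values $\gamma$ and $\delta$, intended to be $H(K_x\|K_z)$ and $H(K_x\oplus\Delta_A\|K_y\oplus K_z)$, and sends $U=\gamma\oplus\delta$. For each of these two hash inputs \Sim first checks whether, given the bits and MACs it already holds, it is in fact $\Delta_A$-free: $K_x\|K_z=M_x\|M_r$ exactly when $x=0$ and the represented $z$ is $0$, and $K_x\oplus\Delta_A\|K_y\oplus K_z=M_x\|(M_y\oplus M_r)$ exactly when $x=1$ and the represented $z$ equals $y$. In the $\Delta_A$-free case \Sim fixes the value by querying its own random oracle at that known point; otherwise it samples the value freshly and records a lazy promise that $H$ takes that value at the still-unknown $\Delta_A$-dependent point. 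A short check of the cases $x=0$ and $x=1$ then shows that an \Aliadv that behaves honestly and sent $d=d^\star$ recomputes exactly $V=\gamma$, so \EQ succeeds; in that case \Sim inputs $(x,M_x),(y,M_y),M_z=M_r$ to \LAAND (the constraint $z=xy$ holds because $d=d^\star$) and the simulation of this instance is perfect. If instead \Aliadv's \EQ input $V$ differs from $\gamma$, \Sim lets \EQ leak $(V,\gamma)$ to \Aliadv and aborts as the honest \Bob would; since $\gamma$ is independent of \Aliadv's view, this matches the real protocol.

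The substantial point is the cheating branch $d=1\oplus d^\star$: here \Sim must abort, because \LAAND rejects a triple with $z\ne xy$, and this is faithful provided \Aliadv cannot make \EQ succeed. I would argue this by inspecting the four possibilities for $(x,y)$: whenever the represented $z$ equals $1\oplus xy$, each of the points $K_x\|K_z$ and $K_x\oplus\Delta_A\|K_y\oplus K_z$ has at least one $\kappa$-bit half of the form $w\oplus\Delta_A$ with $w$ known to \Aliadv and with a nonzero coefficient on $\Delta_A$. Since $\Delta_A$ is uniform and, throughout the protocol, information-theoretically hidden from \Aliadv (her only inputs are $U$, which is a one-time pad applied to the pair of $H$-values at precisely those two points, and the \EQ reply), she queries neither point, so $\gamma$ and $\delta$ — hence $\gamma$ even conditioned on $U$ — stay uniform in her view, and $\prob{V=\gamma}\le\poly(\kappa)2^{-\kappa}$. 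The real and ideal executions therefore differ only on this negligible event, which is turned into the usual reduction against the random oracle / a predictor for $\Delta_A$. Applying the same uniformity observation to the honest branch shows that $U$ and the \EQ reply are independent of $\Delta_A$ in \Aliadv's view, so the \ABIT-box key underlying the real run and the \LAAND-box key underlying the ideal run are interchangeable.

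One technicality remains: once \Bob's outputs $K_x,K_y,K_z$ are delivered, the environment can recover $\Delta_A$ (when $x=1$) and may then query $H$ at one of the points where \Sim made a lazy promise. \Sim handles a query $p=p_0\|p_1$ by extracting from $p_0$ (or $p_1$) the candidate key $\widetilde\Delta$ it would have to be, testing $\widetilde\Delta\eqq\Delta_A$ with a single global-key query to \LAAND, and returning the promised $\gamma$ (resp.\ $\delta$) on a match and a fresh value otherwise — this is the random-oracle programmability argument invoked in \appref{proofofthmAND}. I expect the $(x,y)$ case analysis in the cheating branch to be the main obstacle, since that is exactly where one pins down that the \EQ test forces \Aliadv to either behave honestly or trigger an abort, i.e. the reason the \LAAND box leaks only on \Bob's side and never on \Ali's.
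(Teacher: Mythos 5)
Your proposal is correct and follows essentially the same route as the paper's proof: extract $\Ali$'s $(x,M_x),(y,M_y),(r,M_r)$ from the simulated dealer, send a $U$ that is uniform in $\Ali$'s view (your ``lazy-promise'' bookkeeping is just a rephrasing of the paper's ``sample $U$ at random and program $H$ later''), argue that passing the $\EQ$ check while announcing $d\ne d^\star$ would require $\Ali$ to have queried a $\Delta_A$-dependent RO point, and handle post-output RO queries via global-key queries and programmability. Your explicit four-way $(x,y)$ case analysis is a slightly more detailed presentation of the paper's observation that a cheating $\Ali$ would have to query $(M_x,M_z\oplus\Delta_A)$ or $(M_x,M_y\oplus M_z\oplus\Delta_A)$, which would hand the reduction $\Delta_A$; the substance is identical.
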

\begin{proof}
  We first focus on the simulation of the protocol before outputs are
  given to the environment. Notice that before outputs are given to
  the environment, the global key $\Delta_A$ is uniformly random to the
  environment, as long as \Bob is honest.

  We consider the case of a corrupt sender \Aliadv running the above
  protocol against a simulator \Sim for honest \Bob.
  \begin{enumerate}
  \item \label{lem:A_AND1}

    First \Sim receives \Aliadv's input $(M_{x}, x),
    (M_{y}, y), (M_r,r)$ for the
    dealer.
    
    Then \Sim receives the bit
    $d \inR \zo$.
    
  \item \label{lem:A_AND2}
    
    \Sim samples a random $U\in_R \zo^{2\kappa}$ and sends it to
    \Aliadv. Then \Sim reads $\overline{V}$, \Aliadv's input to the
    \EQ box. If $\overline{V} \neq (1-x)H(M_x,M_z) \oplus
    x (U\oplus H(M_x,M_y \oplus M_z))$ or $d\oplus y \neq xy$, \Sim
    outputs abort, otherwise, it inputs $(x,y,z,M_x,M_y,M_z=M_r)$ to
    the \LAAND box.

\end{enumerate}

The first difference between the real protocol and the simulation is
that $U = H(K_x,K_z) \oplus H(K_x \oplus \Delta_A,K_y \oplus K_z)$ in
the real protocol and $U$ is uniformly random in the
simulation. Since $H$ is a random oracle, this is perfectly
indistinguishable to the adversary until it queries on both $(K_x,K_z)$
and $(K_x \oplus \Delta_A,K_y \oplus K_z)$. Since $\Delta_A$ is
uniformly random to the environment and the adversary during the
protocol, this will happen with negligible probability during the
protocol. We later return to how we simulate after outputs are given
to the environment.

The other difference between the protocol and the simulation is that
the simulation always aborts if $z \neq xy$. Assume now that \Aliadv
manages, in the real protocol, to make the protocol continue with
$z=xy\oplus 1$. If $x=0$, this means that \Aliadv queried the oracle
on $(K_x,K_z)=(M_x,M_z\oplus \Delta_A)$, and since \Sim knows the
outputs of corrupted \Ali, which include $M_z$, and see the input
$M_z\oplus \Delta_A$ to the RO $H$, if \Aliadv queries the oracle on
$(K_x,K_z)=(M_x,M_z\oplus \Delta_A)$, \Sim can compute $\Delta_A$.  If
$x=1$ then \Aliadv must have queried the oracle on $(K_x\oplus
\Delta_A, K_y\oplus K_z)=(M_x, M_y \oplus M_z \oplus \Delta_A)$, which
again would allow \Sim to compute $\Delta_A$. Therefore, in both cases
we can use such an \Aliadv to compute the global key $\Delta_A$ and,
given that all of \Aliadv's inputs are independent of $\Delta_A$
during the protocol, this happens only with negligible probability.

Consider now the case after the environment is given outputs. These
outputs include $\Delta_A$. It might seem that there is nothing more
to simulate after outputs are given, but recall that $H$ is a random
oracle simulated by \Sim and that the environment might keep querying
$H$. Our concern is that $U$ is uniformly random in the simulation and
$U = H(K_x,K_z) \oplus H(K_x \oplus \Delta_A,K_y \oplus K_z)$ in the
real protocol. We handle this as follows. Each time the environment
queries $H$ on an input of the form $(Q_1,Q_2) \in \zo^{2\kappa}$, go
over all previous queries $(Q_3, Q_4)$ of this form and let $\Delta =
Q_1 \oplus Q_3$. Then do a global key query to $\ABIT(3\ell,\kappa)$ to
determine if $\Delta = \Delta_A$. If \Sim learns $\Delta_A$ this way,
she proceeds as described now. Note that since \Ali is corrupted, \Sim
knows all outputs to \Ali, i.e., \Sim knows all MACs $M$ and all bits
$b$.  If $b = 0$, then \Sim also knows the key, as $K = M$ when
$b=0$. If $b=1$, \Sim computes the key as $K = M \oplus \Delta_A$. So,
when \Sim learns $\Delta_A$, she at the same time learns all
keys. Then for each $U$ she simply programs the RO such that $U =
H(K_x,K_z) \oplus H(K_x \oplus \Delta_A,K_y \oplus K_z)$. This is
possible as \Sim learns $\Delta_A$ no later than when the environment
queries on two pairs of inputs of the form $(Q_1,Q_2) = (K_x,K_z)$ and
$(Q_3,Q_4) = (K_x \oplus \Delta_A, K_y \oplus K_z)$. So, when \Sim
learns $\Delta_A$, either $H(K_x,K_z)$ or $H(K_x \oplus \Delta_A,K_y
\oplus K_z)$ is still undefined. If it is $H(K_x,K_z)$, say, which is
undefined, \Sim simply set $H(K_x,K_z) \leftarrow U \oplus H(K_x \oplus
\Delta_A, K_y \oplus K_z)$. \putbox
\end{proof}

\begin{lemma} \lemmlab{AANDBobAdv}
  The protocol described  in~\figref{LAAND} securely implements the \LAAND box
  against corrupted \Bob.
\end{lemma}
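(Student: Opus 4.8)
The plan is to give, as in \lemmref{AANDAliAdv}, a simulator \Sim for a corrupt receiver \Bobadv and to show that the induced environment view is statistically close to the real one. Since in \LAAND a corrupt \Bob supplies all of his own randomness (the keys and the global key $\Delta_A$), \Sim first reads \Bobadv's inputs $\Delta_A, K_x, K_y, K_r$ to the simulated \ABIT box. It then plays honest \Ali's only outgoing message of steps 1--3, namely $d$, by sending a uniformly random bit; this is faithful because in the real protocol $d = z\oplus r$ with $r$ uniform and never revealed to \Bob. Having fixed $d$, \Sim sets $K_z = K_r \oplus d\Delta_A$ and hands $(\Delta_A,K_x,K_y,K_z)$ to the ideal \LAAND box as \Bob's choices; global-key queries are answered directly from the extracted $\Delta_A$.

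The heart of the argument is step 6. When \Bobadv sends his message $U'$, \Sim uses its control of the random oracle to form the two strings $V_0 = H(K_x||K_z)$ and $V_1 = U' \oplus H(K_x\oplus\Delta_A||K_y\oplus K_z)$. Unwinding the definitions of $M_x,M_y,M_z$ in the cases $x=0$ (where $z=0$, $M_x=K_x$, $M_z=K_z$) and $x=1$ (where $M_x=K_x\oplus\Delta_A$ and $M_y\oplus M_z = K_y\oplus K_z$) shows that $V_0,V_1$ are exactly the strings an honest \Ali would feed to \EQ when $x=0$ and $x=1$ respectively, so the real protocol continues iff \Bobadv's \EQ-input $W$ equals $V_x$. \Sim then reads $W$ and branches: if $W=V_0=V_1$ it makes no guess and has \EQ report equality; if $W$ equals exactly one of $V_0,V_1$, say $V_b$, it inputs the guess $(i,b)$ to \LAAND; and if $W$ matches neither, it inputs the guess $(i,0)$ and, should that guess be reported correct, follows it with $(i,1)$ --- one of $0,1$ is necessarily wrong, so the box then fails, matching the fact that the real execution always aborts in this branch. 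Whenever the box reports a wrong guess \Sim thereby learns $x$ and recovers $V_x$, so it simulates the leakage of \EQ (which hands the adversary both compared strings on inequality) as the pair $(V_x,W)$ and then delivers the abort; in the branches where the box does not fail, \EQ is simulated as reporting equality and \Sim delivers \Bob's self-chosen outputs.

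For indistinguishability one checks that \Sim's branch always coincides with the real execution on the same oracle transcript: the real \EQ-check succeeds precisely when $W=V_x$, the exact condition \Sim tests, and the leaked pair on failure is exactly $(V_x,W)$. The only way a discrepancy could arise is if \Bobadv forced \Ali's check to pass while \Sim concluded it should fail --- equivalently, hit $H(K_x||K_z)$ or $H(K_x\oplus\Delta_A||K_y\oplus K_z)$ without querying the oracle --- which by the usual random-oracle bound happens with probability $\poly(\kappa)2^{-\kappa}$. Unlike the corrupt-\Ali case of \lemmref{AANDAliAdv}, no after-the-fact reprogramming of $H$ is needed here: \Ali's only oracle-dependent output is her \EQ-input (and, on inequality, its leakage), and \Sim computes these from values it has already pinned down, so lazy sampling suffices even once outputs --- hence $x,y,z$ --- are delivered to the environment. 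Together with the trivial honest case and \lemmref{AANDAliAdv}, this completes the proof of \thmref{laand}.

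The step I expect to be the main obstacle is the case analysis of step 6: matching \Bobadv's two independent degrees of freedom (the additive error he injects into $U'$ and the error in his \EQ-input) against the single guessing interface of \LAAND, and in particular handling the ``matches neither'' branch, where \Sim must spend two guesses while still simulating the \EQ leakage with the correct value $V_x$ in every branch.
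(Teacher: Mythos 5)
Your simulator is the same as the paper's, and your case analysis reproduces the paper's reasoning: extract \Bob's dealer inputs $(\Delta_A,K_x,K_y,K_r)$, answer global-key queries directly, send a uniformly random $d$, set $K_z=K_r\oplus d\Delta_A$ and pass \Bob's keys to the ideal box, then compare \Bob's \EQ-input against the two honest candidates $V_0=H(K_x\|K_z)$ and $V_1=\overline{U}\oplus H(K_x\oplus\Delta_A\|K_y\oplus K_z)$, translating a match with exactly one into a single guess $(i,b)$ to \LAAND. This is exactly the paper's simulator; you have merely phrased the branching in terms of ``$W$ matches $V_0$, $V_1$, both, or neither'' rather than in terms of ``$\overline{U}$ correct or not,'' and the two decompositions coincide because $V_0=V_1$ precisely when $\overline{U}$ is well-formed.

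Two places where you are more careful than the paper, and both are genuine improvements in completeness: the paper's step 4 silently omits the ``matches neither'' branch (where the real execution always aborts), which you handle by spending two guesses so that \LAAND is guaranteed to fail; and the paper's analysis paragraph asserts ``the view of \Bobadv consists only of the bit $d$ \ldots and in the aborting condition,'' skipping over the fact that the \EQ box leaks both compared strings on inequality, which you simulate correctly as $(V_x,W)$ once the failed guess reveals $x$. Your observation that no random-oracle reprogramming is needed here (in contrast to the corrupt-\Ali case) is also exactly the reason the paper's proof for this lemma is short. One small over-caution: the $\poly(\kappa)2^{-\kappa}$ slack you allow for is not needed --- since \Sim itself queries $H$ on $K_x\|K_z$ and $(K_x\oplus\Delta_A)\|(K_y\oplus K_z)$ in order to form $V_0,V_1$, those entries are always defined on the transcript, the abort condition tested by \Sim coincides bit-for-bit with the real \EQ-check $W\stackrel{?}{=}V_x$, and the simulation is in fact perfect, as the paper states.
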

\begin{proof}
  We consider the case of a corrupt \Bobadv running the above protocol
  against a simulator \Sim. The simulation runs as follows.
  \begin{enumerate}
  
  \item \label{lem:B_AND1}
    
    The simulation starts by \Sim getting \Bobadv's input to
    the dealer $K_x,K_y,K_r$ and $\Delta_A$. 

  \item \label{lem:B_AND2}

    The simulator samples a random $d\in_R \zo$,  sends it to
    \Bobadv and computes $K_z=K_r\oplus d\Delta_A$.

 \item \label{lem:B_AND3}

   \Sim receives $\overline{U}$ from \Bobadv, and reads $\overline{V}$,
   \Bobadv's input to the equality box.

 \item \label{lem:B_AND4}

   If $\overline{U}=H(K_x,K_z)\oplus H(K_x\oplus \Delta_A, K_y\oplus
   K_z)$ and $\overline{V}= H(K_x,K_z)$, input $(K_x,K_y,K_z)$ to the
   box for \LAAND and complete the protocol (this is the case where
   \Bobadv is behaving as an honest player). Otherwise, if $\overline{U}
   \neq H(K_x,K_z)\oplus H(K_x\oplus \Delta_A, K_y\oplus K_z)$ and
   $\overline{V} = H(K_x,K_z)$ or $\overline{V} = \overline{U}\oplus
   H(K_x\oplus\Delta_A,K_z\oplus K_z)$, input $g=0$ or $g=1$ resp.~
   into the \LAAND box as a guess for the bit $x$. If the box output
   \texttt{fail}, output \texttt{fail} and abort, and otherwise
   complete the protocol.
   
  \end{enumerate}

The simulation is perfect: the view of \Bobadv consists only of the
bit $d$, that is uniformly distributed both in the real game and in
the simulation, and in the aborting condition, that is the same 
in the real and in the simulated game. \putbox

\end{proof}

\end{proof}

\Section{Proof of \thmref{thmAND}} \applab{proofofthmAND}

\begin{proof}
The simulator answers global key queries to $\LAAND(B \ell)$ by doing
the identical global key queries on the ideal functionality
$\AAND(\ell)$ and returning the reply. This gives a perfect simulation
of these queries, and we ignore them below.

It is easy to check that the protocol is correct and secure if both
parties are honest or if \Ali is corrupted.

What remains is to show that, even if \Bob is corrupted and tries to
guess some $x$'s from the \LAAND box, the overall protocol is secure.

We argue this in two steps. We first argue that the probability that
\Bob learns the $x$-bit for all triples in the same bucket is
negligible. We then argue that when all buckets contain at least one
triple for which $x$ is unknown to \Bob, then the protocol can be
simulated given $\LAAND(B \ell)$.

Call each of the triples a \defterm{ball} and call a ball
\defterm{leaky} if \Bob learned the $x$ bit of the ball in the call to
$\LAAND(\ell')$. Let $\gamma$ denote the number of leaky balls.

For $B$ of the leaky balls to end up in the same bucket, there must be
a subset $S$ of balls with $\vert S \vert = B$ consisting of only
leaky balls and a bucket $i$ such that all the balls in $S$ end up in
$i$.

We first fix $S$ and $i$ and compute the probability that all balls in
$S$ end up in $i$. The probability that the first ball ends up in $i$
is $\frac{B}{B \ell}$. The probability that the second balls ends up
in $i$ given that the first ball is in $i$ is $\frac{B-1}{B \ell-1}$,
and so on. We get a probability of
\begin{align*}
  \frac{B}{B \ell}\cdot \frac{B-1}{B \ell-1} \cdots \frac{1}{B \ell-B+1} =
  \binom{B \ell}{B}^{-1}
\end{align*}
that $S$ ends up in $i$.

There are $\binom{\gamma}{B}$ subsets $S$ of size $B$ consisting of
only leaky balls and there are $\ell$ buckets, so by a union bound the
probability that any bucket is filled by leaky balls is upper bounded
by
$$\binom{\gamma}{B} \ell \binom{B \ell}{B}^{-1}\ .$$ This is
assuming that there are exactly $\gamma$ leaky balls. Note then that
the probability of the protocol not aborting when there are $\gamma$
leaky balls is $2^{- \gamma}$. Namely, for each bit $x$ that \Bob
tries to guess, he is caught with probability $\oh$. So, the
probability that \Bob undetected can introduce $\gamma$ leaky balls
and have them end up in the same bucket is upper bounded by
$$ 
\alpha(\gamma,\ell,B) = 2^{-\gamma} \binom{\gamma}{B} \ell \binom{B
  \ell}{B}^{-1}\ .
$$

It is easy to see that 
\begin{align*}
  \frac{\alpha(\gamma+1,\ell,B)}{\alpha(\gamma,\ell,B)} =
  \frac{\gamma+1}{2(\gamma+1-B)}.
\end{align*}

So, $\alpha(\gamma+1,\ell,B)/\alpha(\gamma,\ell,B) > 1$ iff $\gamma <
2B-1$, hence $\alpha(\gamma,\ell,B)$ is maximized in $\gamma$ at
$\gamma = 2B-1$. If we let $\alpha'(B,\ell) = \alpha(2B-1,\ell,B)$ it
follows that the success probability of the adversary is at most
$$\alpha'(B,\ell) = 2^{-2B+1} \ell \frac{(2B-1)!(B \ell -
  B)!}{(B-1)!(B\ell)!}\ .
$$
  
Writing out the product $\frac{(2B-1)!(B \ell - B)!}{(B-1)!(B\ell)!}$
it is fairly easy to see that for $2 \leq B < \ell$ we have that
\begin{align*}
  \frac{(2B-1)!(B \ell - B)!}{(B-1)!(B\ell)!} <
\frac{(2B)^B}{(B\ell)^B},
\end{align*}

so
\begin{align*}
  \alpha'(B,\ell) \leq 2^{-2B+1} \ell \frac{(2B)^B}{(B\ell)^B} = (2
  \ell)^{1-B}.
\end{align*}
   
We now prove that assuming each bucket has one non-leaky triple the
protocol is secure even for a corrupted \Bobadv.

We look only at the case of two triples,
$\repa{x^1},\repa{y^1},\repa{z^1}$ and
$\repa{x^2},\repa{y^2},\repa{z^2}$, being combined into
$\repa{x},\repa{y},\repa{z}$. It is easy to see why this is
sufficient: Consider the iterative way we combine the $B$ triples of a
bucket. At each step we combine two triples where one may be the
result of previous combinations. Thus if a combination of two triples,
involving a non-leaky triple, results in a non-leaky triple, the
subsequent combinations involving that result will all result in a
non-leaky triple.

In the real world a corrupted \Bobadv will input keys $K_{x^1},
K_{y^1}, K_{z^1}$ and $K_{x^2}, K_{y^2}, K_{z^2}$ and $\Delta_{A}$,
and possibly some guesses at the $x$-bits to the \LAAND box.  Then
\Bobadv will see $d = y^1 \oplus y^2$ and $M_{d} = (K_{y^1} \oplus
K_{y^2}) \oplus d\Delta_{A}$ and \Ali will output $x = x^1 \oplus
x^2$, $y = y^1$ , $z = z^1\oplus z^2 \oplus d x^2$ and $M_{x} =
(K_{x^{1}} \oplus K_{x^{2}}) \oplus x\Delta_{A}$, $M_{y} = K_{y^1}
\oplus y\Delta_{A}$, $M_{z} = (K_{z^1} \oplus K_{z^2} \oplus
dK_{x^{2}})\oplus z\Delta_{A}$ to the environment.

Consider then a simulator \ant{Sim} running against \Bobadv and using
an \AAND box. In the first step \ant{Sim} gets all \Bobadv's keys like
in the real world. If \Bobadv submits a guess $(i,g_i)$ \ant{Sim}
simply outputs \texttt{fail} and terminates with probability
$\frac{1}{2}$. To simulate revealing $d$, \ant{Sim} samples $d \inR
\zo$, sets $M_{d} = K_{y^1} \oplus K_{y^2} \oplus d \Delta_{A}$ and
sends $d$ and $M_{d}$ to \Bobadv.  \ant{Sim} then forms the keys
$K_{x} = K_{x^1} \oplus K_{x^2}$, $K_{y} = K_{y^1}$ and $K_{z} =
K_{z^1} \oplus K_{z^2} \oplus dK_{x^{2}}$ and inputs them to the \AAND
box on behalf of \Bobadv.  Finally the \AAND box will output random
$x$, $y$ and $z = xy$ and $M_{x} = K_{x} \oplus x\Delta_{A}$, $M_{y} =
K_{y} \oplus y\Delta_{A}$, $M_{z} = K_{z} \oplus z\Delta_{A}$.

We have already argued that the probability of \Bobadv guessing one of
the $x$-bits is exactly $\frac{1}{2}$, so \ant{Sim} terminates the
protocol with the exact same probability as the \LAAND box in the real
world. Notice then that, given the assumption that \Bobadv at most
guesses one of the $x$-bits, all bits $d$, $x$ and $y$ are uniformly
random to the environment both in the real world and in the
simulation. Thus because \ant{Sim} can form the keys $K_{x}$, $K_{y}$
and $K_{z}$ to the \AAND box exactly as they would be in the real
world the simulation will be perfect.

\putbox
\end{proof}

\Section{Full Overview Diagram}\applab{fod}

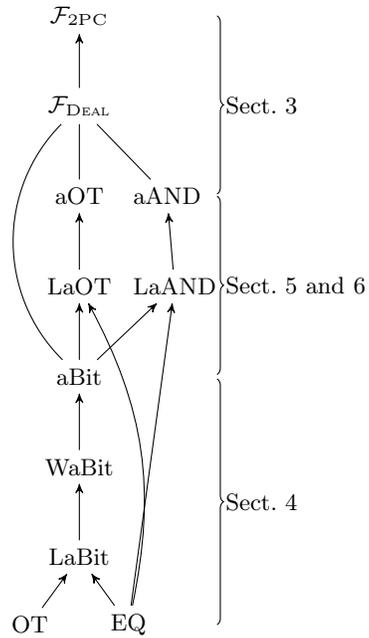
\begin{figure}
\begin{center}
\begin{tikzpicture}[scale=0.6]
% Draw the nodes
\node (twopc) at (0,0) {\FMPC};
\node (deal) at (0, -2) {\DEAL};

\node (aot) at (0, -4) {\AOT};
\node (laot) at (0, -6) {\LAOT};

\node[anchor=west] (aand) at (1, -4) {\AAND};
\node[anchor=west] (laand) at (1, -6) {\LAAND};

\node (abit) at (0, -8) {\ABIT};
\node (wabit) at (0, -10) {\WABIT};
\node (labit) at (0, -12) {\LABIT};

\node[anchor=west] (eq) at (0.5, -13.5) {\EQ};
\node[anchor=east] (ot) at (-0.5, -13.5) {\OT};

% Draw the arrows
\draw[->,>=stealth'] (deal) -- (twopc);
\draw[-] (aot) -- (deal);
\draw[-] (aand) -- (deal);
\draw[->,>=stealth'] (laot) -- (aot);
\draw[->,>=stealth'] (laand) -- (aand);
\draw[->,>=stealth'] (abit) -- (laot);
\draw[->,>=stealth'] (abit) -- (laand);
\draw[->,>=stealth'] (wabit) -- (abit);
\draw[->,>=stealth'] (labit) -- (wabit);
\draw[->,>=stealth'] (ot) -- (labit);
\draw[->,>=stealth'] (eq) -- (labit);
\draw[-] (abit) to [bend left=45] (deal);
\draw[->,>=stealth'] (eq) -- (laand);
\draw[->,>=stealth'] (eq) to [bend right=20] (laot);

% Draw the braces
\draw[decorate, decoration={brace}] (3.05,0) --
(3.05,-3.95) node[right, midway] {\small \secref{twopcfromaot}};
\draw[decorate, decoration={brace}] (3.05, -4) --
(3.05, -7.95) node[right, midway] {\small \secref{aot} and \ref{sec:aand}};
\draw[decorate, decoration={brace}] (3.05, -8.05) -- (3.05, -13.5) 
node[right, midway] {\small \secref{abit}};

% Very ugly hack to move picture to the right
\node (balance) at (-6,0) {};

\end{tikzpicture}
\caption{Full paper outline.}
\figlab{FullOverview}
\end{center}
\end{figure}

\end{document}